  \def\beta{β}\def\epsilon{ϵ}\def\Delta{Δ}\def\rho{ρ}\def\to{→}%
  \def\Phi{Φ}\def\kappa{κ}\def\lambda{λ}\def\omega{ω}\def\varphi{φ}%
\newtheorem{theorem}{Theorem}[section]
\newtheorem{proposition}[theorem]{Proposition}
\newtheorem{lemma}[theorem]{Lemma}
\newtheorem{corollary}[theorem]{Corollary}
\theoremstyle{definition}
\newtheorem{definition}[theorem]{Definition}
\theoremstyle{remark}
\newtheorem{remark}[theorem]{Remark}
\providecommand{\C}{\mathbb{C}}
\providecommand{\R}{\mathbb{R}}
\providecommand{\dd}{\,\mathrm{d}}
  \newtheorem{theorem}{Theorem}[section]
\theoremstyle{definition}
\newtheorem{assumption}[theorem]{Assumption}
  \def\beta{beta}
  \def\epsilon{epsilon}%
  \def\varepsilon{epsilon}%
  \def\dd{d}
\numberwithin{equation}{section}
\renewcommand{\phi}{\varphi}
\begin{document}
\title[Contour integral representations for correlation kernels]{Integrable Contour Kernels in Discrete $\beta=1,4$ Ensembles, Universality and Kuznetsov Multipliers}

\begin{abstract}
We obtain explicit double–contour representations for the correlation kernels of the discrete
orthogonal $(\beta=1)$ and symplectic $(\beta=4)$ random matrix ensembles with Meixner, Charlier, and Krawtchouk weights. A single Cauchy–difference–quotient composition identity expresses all $\beta=1,4$ blocks in terms of the projection kernel and bounded rational multipliers. From these formulas we give short steepest–descent proofs of bulk and edge universality (sine/Airy/Bessel) with uniform error control, an explicit Meixner$\to$Laguerre hard–edge crossover, and a first $A^{-1}$ correction that follows directly from the integrable structure. Finally, we show that archimedean Kuznetsov tests splice into the Pfaffian kernels by a bounded holomorphic symbol acting in the contour variable; the symbol enters only through the same Cauchy difference–quotient, so the leading sine/Airy/Bessel limits persist and the $A^{-1}$ term again comes from linearizing at the saddle(s).
\end{abstract}

\author{Miguel Tierz}

\address{Shanghai Institute for Mathematics and Interdisciplinary Sciences \\ Block A, International Innovation Plaza, No. 657 Songhu Road, Yangpu District,\\ Shanghai, China}
\email{tierz@simis.cn}
\maketitle

\section{Introduction}
\label{sec:notation}\label{sec:weight}\label{sec:ops}


Correlation kernels are the basic building blocks for all local statistics in random matrix ensembles \cite{forrester2010log,baik2016combinatorics,Deift}: they govern $k$‐point functions, gap probabilities (via Fredholm determinants/Pfaffians), and linear statistics \cite{TW,IIKS,deBruijn}. In the \emph{discrete} orthogonal and symplectic ensembles the kernels encode nontrivial arithmetic/parity constraints and symmetry effects that have no exact continuous analogue, and they connect directly to representation‐theoretic models of random partitions \cite{BO,BorOlsh,Johansson99}. The present paper provides explicit and tractable formulas for these kernels in the Meixner, Charlier and Krawtchouk families and develops a unified operator framework that treats the orthogonal ($\beta=1$) and symplectic ($\beta=4$) cases on equal footing.

Beyond furnishing explicit formulas, the paper promotes a single mechanism that runs through all our arguments: \emph{bounded--multiplier composition under the contours}. Any operator that acts in the contour variable by a rational symbol enters the kernels through a one--line \emph{Cauchy difference--quotient}, and the resulting blocks remain of IIKS type \cite{IIKS}. In practice this preserves finite--$N$ structure, makes rank--one $\beta=1$ effects transparent, and feeds directly into uniform steepest--descent with the same phase and the same admissible contours.

We implement this principle completely for Meixner/Charlier/Krawtchouk at $\beta=1,4$, proving bulk/edge universality with uniform error control and exhibiting a Meixner$\to$Laguerre hard--edge crossover. The first $A^{-1}$ correction follows directly by linearizing the difference--quotient at the two saddles (or the coalesced edge saddle), without leaving the IIKS framework.

The same composition extends to bounded holomorphic symbols, allowing us to splice archimedean Kuznetsov multipliers \cite{Kuznetsov1981,IwaniecKowalski2004} into the discrete $\beta\in\{1,4\}$ kernels without departing from the IIKS framework. The Kuznetsov transform is essentially a Fourier–Bessel integral that arises in the context of the Kuznetsov trace formula, linking sums of Fourier coefficients of cusp forms to Bessel integrals \cite{Kuznetsov1981,IwaniecKowalski2004}. In addition, we show that for an even spectral test $h$, the Kuznetsov transform
supplies a bounded holomorphic multiplier
\[
m_h(w)=\int_{\mathbb{R}} h(t)\,w^{-2it}\,dt
\]
on admissible slit--sector contours; inserting $m_h$ changes only the universal Cauchy
difference--quotient and leaves bulk/edge limits unchanged, while the $A^{-1}$ contribution
is read off by the same linearization.

\medskip\noindent\textbf{Why contour/IIKS representations ?}
Double–contour formulas for the projection kernel $K_N$ and their Pfaffian analogues for $\beta=1,4$ (see e.g. (5.2) and (5.4)) offer several advantages that go beyond the now‐standard use for asymptotics and universality \cite{deBruijn,tracy,IIKS}:

\begin{itemize}
  \item \emph{Finite–$N$ information.} Because the kernels are of integrable (IIKS) type—rank–two numerator over a Cauchy denominator—the contour form makes finite–$N$ identities and rank–one corrections (the $\beta=1$ term) transparent and calculable, rather than only asymptotic \cite{IIKS,BorodinStrahov2009}.
  \item \emph{Analytic continuation and deformations.} Parameters such as $\xi$ (or weight parameters) can be varied inside the contour integrand. This facilitates analytic continuation in parameters and controlled crossovers (e.g. Meixner\,$\to$\,Laguerre), and it adapts readily to mild weight deformations or external sources while preserving integrability \cite{TW}.
\item \emph{Uniform steepest–descent analysis \cite{DeiftZhou1993}.} A single phase function controls bulk and edge limits; rational prefactors for $D$ and $\varepsilon$ only perturb at lower order. This yields clean derivations of sine/Airy/Bessel limits and finite–$N$ error bounds in a unified way (see the last Section and the Appendices) \cite{Deift,AvM,Johansson99,TW}.
  \item \emph{Links to representation theory.} The formulas interface naturally with $z$–measures and the combinatorics of Young diagrams, providing a bridge to random partitions and characters of $S(\infty)$.
  \item \emph{Painlevé/RH structures.} Contour representations are the right starting point for isomonodromic/Riemann–Hilbert methods; they make it feasible to identify Painlevé transcendents controlling gap probabilities and to extract subleading terms systematically ~\cite{Deift,Fokas}.
  \item \emph{Numerics.} Contours can be chosen for rapid decay along steepest directions, enabling accurate quadrature of Fredholm determinants/Pfaffians for finite $N$ with modest effort ~\cite{Bornemann}.
\end{itemize}

Explicit \emph{double–contour} representations for the correlation kernels of the discrete $\beta=1$ and $\beta=4$ ensembles with Meixner/Charlier/Krawtchouk weights—together with a single operator identity that treats both symmetries in parallel—have not been discussed in this form. The paper proves:\\

\begin{enumerate}
\item A unified Tracy–Widom/IIKS operator structure \cite{tracy,IIKS} for $\beta=1$ and $\beta=4$ in the discrete setting with all off–diagonal Pfaffian blocks obtained by inserting the bounded rational multipliers for $D$ and $\varepsilon$ (see~\eqref{eq:multiplier}). The common Cauchy–difference–quotient composition identity is given in \eqref{eq:composition} (and its exact Charlier and Krawtchouk analogues are \eqref{eq:composition-Ch} and \eqref{eq:composition-K}).\\
\item Fully explicit double–contour formulas: Meixner \eqref{eq:KN-double}, Charlier \eqref{eq:KCh-double} with the $w$–plane IIKS form \eqref{eq:KCh-w}, and Krawtchouk \eqref{eq:KNK-double}; the $\beta=4$ and $\beta=1$ blocks appear in Theorems \ref{thm:beta4}/\ref{thm:beta1}, \ref{thm:Ch-beta4}/\ref{thm:Ch-beta1}, and \ref{thm:Kraw-beta4}/\ref{thm:Kraw-beta1}.\\
\item A single steepest–descent scheme (Section~\ref{sec:asymp-univ-long}) that yields bulk/edge limits (sine/Airy/Bessel) with uniform error control—see Theorems~\ref{thm:meix-b4}–\ref{thm:bessel}—and the Meixner$\to$Laguerre hard–edge crossover (Theorem~\ref{thm:mx-to-lag}), together with an explicit first $A^{-1}$ correction read off from the difference–quotient structure (Proposition~\ref{prop:first-correction}).\\
\item Kuznetsov splicing into IIKS Pfaffian kernels: inserting an even spectral test $h$ via the bounded holomorphic symbol $m_h$ acting in the contour variable preserves integrability and enters only through the Cauchy difference–quotient. The leading sine/Airy/Bessel limits are unchanged, and the first finite–size term follows from the same linearization at the saddle(s) (bulk $A^{-1}$, edge $A^{-1/3}$; see Theorem~\ref{thm:bulk-correction-final} and Theorem~\ref{thm:edge-final}).
\end{enumerate}

\medskip\noindent\textbf{Results}
\begin{itemize}
    \item Unified IIKS/Tracy–Widom operator structure for $\beta=1,4$ in the discrete setting.
    \item Fully explicit double–contour formulas for Meixner/Charlier/Krawtchouk, including universal multipliers after the $w$–map for Charlier.
    \item Direct steepest–descent proofs of sine/Airy/Bessel limits with uniform errors and an $A^{-1}$ correction.
    \item Meixner$\to$Laguerre hard–edge crossover at the level of contour phases.
    \item The Kuznetsov symbol $m_h(w)$ enters the contour integrand, as a bounded holomorphic multiplier. Universal limits are unchanged, and the $A^{-1}$ term follows from the same linearization. 
\\
\end{itemize}

\medskip\noindent\textbf{Organization}
We begin with the orthogonal case, then the symplectic case, for Meixner, Charlier, and Krawtchouk in that order. The second part of the paper develops the asymptotic analysis from the integral representations. Detailed contour manipulations (residue computations) are collected in Appendix~A. Because the only moving pieces are the bounded symbols and a fixed Cauchy denominator, the same mechanism interfaces smoothly with \emph{arithmetically flavored twists} or \emph{harmonic--analytic multipliers}, for example. In both cases the Pfaffian blocks stay integrable, and the bulk/edge limits are obtained by freezing the symbols at the saddles, and the $A^{-1}$ term again comes from the first linearization. We finally show, in the last Section, that Kuznetsov multipliers can be spliced into the IIKS Pfaffian kernels and identify their effect on the leading limits and the $A^{-1}$ term. Future applications of the results here will be presented in the Outlook while Appendices A and B collect contour manipulations and uniform steepest--descent estimates.

\subsection{Notation}
Let $w(x)$ be a strictly positive real valued function defined on $\mathbb{Z}%
_{\geq 0}$ with finite moments, i.e. the series $\sum_{x\in\mathbb{Z}_{\geq
0}}w(x)x^{j}$ converges for all $j=0,1,\ldots$. Introduce a collection $%
\{P_n(\zeta)\}_{n=0}^{\infty}$ of complex polynomials which is the
collection of orthogonal polynomials associated to the weight function $w$,
and to the orthogonality set $\mathbb{Z}_{\geq 0}$. Thus $P_{n}$ is a polynomial of degree $n$ for all $n=1,2,\ldots $, and $P_{0}\equiv $const. If $m\neq n$, then
\begin{equation*}
\sum\limits_{x\in \mathbb{Z}_{\geq 0}}P_{m}(x)P_{n}(x)w(x)=0.
\end{equation*}%
For each $n=0,1,\ldots $ set
\begin{equation}
\varphi _{n}(x)=\left( P_{n},P_{n}\right) _{w}^{-1/2}P_{n}(x)w^{1/2}(x),
\label{phi}
\end{equation}%
where $(.,.)_{w}$ denotes the following inner product on the space $\mathbb{C%
}[\zeta ]$ of all complex polynomials:
\begin{equation*}
\left( f(\zeta ),g(\zeta )\right) _{w}:=\sum\limits_{x\in \mathbb{Z}_{\geq
0}}f(x)g(x)w(x).
\end{equation*}%
We call $\varphi _{n}$ the normalized functions associated to the orthogonal
polynomials $P_{n}$.
Let $\mathcal{H}$ be the space spanned by the functions $\varphi
_{0},\varphi _{1},\ldots $. We introduce the operators $D_{+},D_{-}$ and $%
\epsilon $ which act on the elements of the space $\mathcal{H}$. The first
and the second operators, $D_{+}$ and $D_{-}$, are defined by the expression
\begin{equation*}
\left( D_{\pm }f\right) (x)=\sum\limits_{y\in \mathbb{Z}_{\geq 0}}D_{\pm
}(x,y)f(y),
\end{equation*}%
where the kernels $D_{\pm }(x,y)$ are given explicitly by
\begin{equation}
D_{+}(x,y)=\sqrt{\frac{w(x)}{w(x+1)}}\;\delta _{x+1,y},\;\;x,y\in \mathbb{Z}%
_{\geq 0},  \label{DPLUS(x,y)}
\end{equation}%
\begin{equation}
D_{-}(x,y)=\sqrt{\frac{w(x-1)}{w(x)}}\;\delta _{x-1,y},\;\;x,y\in \mathbb{Z}%
_{\geq 0}.  \label{DMINUS(x,y)}
\end{equation}%
The third operator, $\epsilon $, is defined by the formula
\begin{equation}
\begin{split}
\left( \epsilon \varphi \right) (2m)& =-\sum\limits_{k=m}^{+\infty }\sqrt{%
\frac{w(2m)}{w(2k+1)}}\frac{w(2m+1)w(2m+3)\ldots w(2k+1)}{w(2m)w(2m+2)\ldots
w(2k)}\,\varphi (2k+1), \\
\left( \epsilon \varphi \right) (2m+1)& =\sum\limits_{k=0}^{m}\sqrt{\frac{%
w(2k)}{w(2m+1)}}\frac{w(2k+1)w(2k+3)\ldots w(2m+1)}{w(2k)w(2k+2)\ldots w(2m)}%
\,\varphi (2k),
\end{split}
\label{OPERATOREPSILON}
\end{equation}%
where $m=0,1,\ldots $. Observe that the semi-infinite matrix $\epsilon $
defined by equation (\ref{OPERATOREPSILON}) is representable as follows
\begin{equation*}
\epsilon =\mathcal{F}\Upsilon \mathcal{F},
\label{eq:Fourier-factorization}
\end{equation*}%
where
\begin{equation*}
\mathcal{F}=\left[
\begin{array}{ccccc}
f(0) & 0 & 0 & 0 & \ldots \\
0 & f(1) & 0 & 0 & \ldots \\
0 & 0 & f(2) & 0 & \ldots \\
0 & 0 & 0 & f(3) & \ldots \\
\vdots & \vdots & \vdots & \vdots & \ddots%
\end{array}%
\right] ,
\end{equation*}%
$f(0)$, $f(1)$, $f(2)$, $\ldots $ are defined for $k=0,1,2,\ldots $ by
\begin{equation*}
f(2k)=\frac{1}{\sqrt{w(2k)}}\frac{w(2)w(4)\ldots w(2k)}{w(1)w(3)\ldots
w(2k-1)},\;\;f(2k+1)=\frac{1}{\sqrt{w(2k+1)}}\frac{w(1)w(3)\ldots w(2k+1)}{%
w(2)w(4)\ldots w(2k)},
\end{equation*}%
and
\begingroup\setlength{\arraycolsep}{4pt}\small
\begin{equation}
\Upsilon =\left[
\begin{array}{ccccccccc}
0 & -1 & 0 & -1 & 0 & -1 & 0 & -1 & \ldots \\
1 & 0 & 0 & 0 & 0 & 0 & 0 & 0 & \ldots \\
0 & 0 & 0 & -1 & 0 & -1 & 0 & -1 & \ldots \\
1 & 0 & 1 & 0 & 0 & 0 & 0 & 0 & \ldots \\
0 & 0 & 0 & 0 & 0 & -1 & 0 & -1 & \ldots \\
1 & 0 & 1 & 0 & 1 & 0 & 0 & 0 & \ldots \\
\vdots & \vdots & \vdots & \vdots & \vdots & \vdots & \vdots & \vdots &
\ddots%
\end{array}%
\right] .  \label{matrix}
\end{equation}
\endgroup%
Set $D=D_{+}-D_{-}$. It is convenient to enlarge the domains of $D$ and $%
\epsilon $, and to consider the operators
\begin{equation*}
D:\;\mathcal{H}+\epsilon \mathcal{H}\rightarrow \mathcal{H}+D\mathcal{H},%
\newline
\end{equation*}%
\begin{equation*}
\epsilon :\;\mathcal{H}+D\mathcal{H}\rightarrow \mathcal{H}+\epsilon
\mathcal{H}.
\end{equation*}

\begin{proposition}
The operators $D$ and $\epsilon $ are mutual inverse (Christoffel–Darboux reductions used later can be found in \cite[§3.2]{Szego1975}.)
.\end{proposition}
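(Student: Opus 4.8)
The plan is to reduce the claim to the combinatorial identity $\Gamma\Upsilon=\Upsilon\Gamma=\mathrm{Id}$ between two explicit semi-infinite matrices, exploiting the factorization $\epsilon=\mathcal F\Upsilon\mathcal F$ recorded above together with a companion factorization of $D$.

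\emph{A companion factorization of $D$.} I would first conjugate $D=D_+-D_-$ by $\mathcal F$. In the standard basis of $\mathcal H$ the matrix of $D$ is tridiagonal, with $D_{x,x+1}=\sqrt{w(x)/w(x+1)}$ and $D_{x,x-1}=-\sqrt{w(x-1)/w(x)}$, so $\mathcal F D\mathcal F$ is again tridiagonal with superdiagonal entry $f(x)f(x+1)\sqrt{w(x)/w(x+1)}$. The key observation is that the staircase products defining $f$ telescope: a one-line check from the explicit values of $f(2k)$ and $f(2k+1)$ gives $f(j)f(j+1)=\sqrt{w(j+1)/w(j)}$ for every $j\ge 0$, so every superdiagonal entry of $\mathcal F D\mathcal F$ collapses to $1$ and, by antisymmetry, every subdiagonal entry to $-1$. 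Thus $\mathcal F D\mathcal F=\Gamma$, the weight-free tridiagonal antisymmetric matrix with $\Gamma_{x,x+1}=1$, $\Gamma_{x+1,x}=-1$; since $\mathcal F$ is diagonal and invertible (this is where $w>0$ enters) this reads $D=\mathcal F^{-1}\Gamma\mathcal F^{-1}$.

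\emph{Reduction to $\Gamma\Upsilon=\Upsilon\Gamma=\mathrm{Id}$.} Composing the two factorizations, $D\epsilon=\mathcal F^{-1}\Gamma(\mathcal F^{-1}\mathcal F)\Upsilon\mathcal F=\mathcal F^{-1}(\Gamma\Upsilon)\mathcal F$ and symmetrically $\epsilon D=\mathcal F(\Upsilon\Gamma)\mathcal F^{-1}$, so it suffices to prove $\Gamma\Upsilon=\Upsilon\Gamma=\mathrm{Id}$. Since $\Gamma$ is tridiagonal with entries $\pm1$, one has $(\Gamma\Upsilon)_{ij}=\Upsilon_{i+1,j}-\Upsilon_{i-1,j}$ (with $\Upsilon_{-1,\cdot}:=0$). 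From the display \eqref{matrix} the rows of $\Upsilon$ have an indicator form, $\Upsilon_{2m,j}=-\mathbf{1}\{j\text{ odd},\ j\ge 2m+1\}$ and $\Upsilon_{2m+1,j}=\mathbf{1}\{j\text{ even},\ j\le 2m\}$; a difference of two consecutive such rows telescopes to a single Kronecker delta, so $(\Gamma\Upsilon)_{ij}=\delta_{ij}$ (the boundary row $i=0$ is checked separately). The identity $\Upsilon\Gamma=\mathrm{Id}$ follows by the same bookkeeping, or formally by transposing and using $\Gamma^{\mathrm T}=-\Gamma$, $\Upsilon^{\mathrm T}=-\Upsilon$. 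Hence $D\epsilon=\mathcal F^{-1}\mathcal F=\mathrm{Id}$ and $\epsilon D=\mathrm{Id}$ on the enlarged domains.

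\emph{Main obstacle.} The delicate point is not the algebra but the bookkeeping for semi-infinite operators: each row of $\Upsilon$ (hence of $\epsilon$) has infinitely many nonzero entries, so the matrix products and, especially, the reassociation $\Gamma(\mathcal F^{-1}\mathcal F)\Upsilon$ are legitimate only on the enlarged domains $\mathcal H+D\mathcal H$ and $\mathcal H+\epsilon\mathcal H$ --- which is precisely why those enlargements were introduced. One has to verify that the relevant sums converge absolutely so that the interchange of summations is justified, and this is where the standing hypotheses on $w$ (strict positivity and finite moments, hence rapid decay of the $\varphi_n$) are used. If one prefers to avoid the factorization, the alternative is to compute $(D\epsilon\varphi)(x)$ directly from \eqref{OPERATOREPSILON}, separately for $x$ even and $x$ odd: the weight products there are arranged so that $D_+\epsilon$ and $D_-\epsilon$ agree at all but one index, the remaining sums telescope, and what survives is $\varphi(x)$; running the same computation in the other order --- a genuinely finite sum in one parity and a convergent tail in the other --- yields $\epsilon D\varphi=\varphi$.
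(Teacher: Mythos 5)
The paper offers no proof of this proposition — it is stated with only a citation to Szeg\H{o} for Christoffel--Darboux reductions used elsewhere — so there is nothing to compare with; the task is simply to check your argument, and it is correct and complete. The telescoping identity $f(j)f(j+1)=\sqrt{w(j+1)/w(j)}$ is straightforward from the displayed formulas for $f(2k)$ and $f(2k+1)$, and it does give $\mathcal F D\mathcal F=\Gamma$ with $\Gamma_{x,x+1}=1=-\Gamma_{x+1,x}$ and all other entries zero. Together with the factorization $\epsilon=\mathcal F\Upsilon\mathcal F$ stated in the text, this reduces the claim to $\Gamma\Upsilon=\Upsilon\Gamma=I$, which you verify by recognizing the rows of $\Upsilon$ as indicators, $\Upsilon_{2m,j}=-\mathbf 1\{j\text{ odd},\,j\ge 2m+1\}$ and $\Upsilon_{2m+1,j}=\mathbf 1\{j\text{ even},\,j\le 2m\}$, so that consecutive row differences telescope to a Kronecker delta; both products check, including the boundary row $i=0$, and the transpose argument via the antisymmetry of $\Gamma$ and $\Upsilon$ is a clean way to get the second identity from the first. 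Your remark about the semi-infinite bookkeeping is exactly the right caveat: the rows of $\Upsilon$ (hence of $\epsilon$) are infinitely supported, so the reassociation $D\epsilon=\mathcal F^{-1}(\Gamma\Upsilon)\mathcal F$ is legitimate only on $\mathcal H+\epsilon\mathcal H$, where the relevant series converge absolutely thanks to the finite-moments/positivity hypotheses on $w$ and the rapid decay of the $\varphi_n$; this is precisely the reason the paper enlarges the domains of $D$ and $\epsilon$ before stating the proposition. The alternative direct computation from \eqref{OPERATOREPSILON} that you sketch would work as well, but the factorization route you chose is cleaner and dovetails with the multiplier viewpoint used throughout the paper.
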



\begin{definition}[Rational multipliers in the contour coordinates]\label{def:rational-multipliers}
Let $H=\mathrm{span}\{\phi_0,\phi_1,\ldots\}\subset\ell^2(\mathbb{Z}_{\ge 0})$ be the orthonormal family from~\eqref{phi},
and write vectors by their Cauchy (generating-function) representation
\begin{equation*}
  f(x)=\frac{1}{2\pi i}\oint_{\{\zeta\}}\widehat f(\zeta)\,\zeta^{-x-1}\,d\zeta,
\end{equation*}
where $\widehat f$ is analytic in an annulus containing the contour(s).
A linear operator $T$ on $H+\epsilon H$ \emph{acts by a rational multiplier} (or has a \emph{rational symbol})
if there exists a rational function $m_T(\zeta)$ such that, for all such $f$,
\begin{equation*}
  (Tf)(x)=\frac{1}{2\pi i}\oint_{\{\zeta\}} m_T(\zeta)\,\widehat f(\zeta)\,\zeta^{-x-1}\,d\zeta.
\end{equation*}
Equivalently, when $T$ acts on the $x$–variable of a double–contour kernel, the corresponding contour integrand
is multiplied by $m_T$. The poles of $m_T$ lie among the finitely many “forbidden points” singled out by our
contour conventions (e.g. $\pm1$ in the Meixner/Charlier $\omega$–plane; $-1/p,0,1/q$ in the Krawtchouk $v$–plane). The resulting kernels remain integrable in the sense of Its–Izergin–Korepin–Slavnov \cite{IIKS}.
\end{definition}

\noindent\textit{Examples used throughout.}
\begin{itemize}
  \item \textbf{Meixner.} In the $\omega$–coordinates
  \begin{equation*}
    \widehat D(\omega)=\omega-\omega^{-1},\qquad
    \widehat\epsilon(\omega)=\frac{1}{\omega^2-1},
  \end{equation*}
  so $D$ and $\epsilon$ act by the rational multipliers $\omega-\omega^{-1}$ and $(\omega^2-1)^{-1}$.
  \item \textbf{Krawtchouk.} With the ratio map $R_K(v)=\dfrac{1-qv}{1+pv}$, the symbols are
  \begin{equation*}
  \begin{aligned}
        \widehat D(v)=d_K(v)=R_K(v)-R_K(v)^{-1}= \frac{-2v+(q-p)v^2}{(1+pv)(1-qv)},\\ \qquad
    \widehat\epsilon(v)=m_K(v)=\frac{1}{R_K(v)^2-1}= \frac{(1+pv)^2}{-2v+(q-p)v^2},
  \end{aligned}
  \end{equation*}
  i.e. the multipliers in the Krawtchouk case. For the definitions of the classical discrete weight functions, see Definitions \ref{Mei}, \ref{Cha} and \ref{Kra}.\\
\end{itemize}

Finally, let $\mathcal{H}_{N}$ be the subspace of $\mathcal{H}$ spanned by
the functions $\varphi _{0},\varphi _{1},\ldots ,\varphi _{2N-1}$. Denote by
$K_{N}$ the projection operator onto $\mathcal{H}_{N}$. Its kernel is
\begin{equation}
K_{N}(x,y)=\sum\limits_{k=0}^{2N-1}\varphi _{k}(x)\varphi _{k}(y).
\label{CD}
\end{equation}

\section{Definition of discrete symplectic and orthogonal ensembles}
\label{sec:ensembles}

\begin{definition}
The $N$-point discrete symplectic ensemble with the weight function $w$ and
the phase space $\mathbb{Z}_{\geq 0}$ is the random $N$-point configuration
in $\mathbb{Z}_{\geq 0}$ such that the probability of a particular
configuration $x_{1}<\ldots <x_{N}$ is given by
\begin{equation*}
\Pr \left\{ x_{1},\ldots ,x_{N}\right\}
=Z_{N4}^{-1}\;\prod\limits_{i=1}^{N}w(x_{i})\prod\limits_{1\leq i<j\leq
N}(x_{i}-x_{j})^{2}(x_{i}-x_{j}-1)(x_{i}-x_{j}+1).
\end{equation*}
Here $Z_{N4}$ is a normalization constant which is assumed to be finite.
\end{definition}

In what follows $Z_{N4}$ is referred to as the partition function of the
discrete symplectic ensemble under considerations.

\begin{definition}
Suppose that there is a $2\times 2$ matrix valued kernel $K_{N4}(x,y)$, $%
x,y\in \mathbb{Z}_{\geq 0}$, such that for a general finitely supported
function $\eta $ defined on $\mathbb{Z}_{\geq 0}$ we have
\begin{equation*}
Z_{N4}^{-1}\;\sum\limits_{(x_{1}<\ldots <x_{N})\subset \mathbb{Z}_{\geq
0}}\prod\limits_{i=1}^{N}w(x_{i})\left( 1+\eta (x_{i})\right)
\prod\limits_{1\leq i<j\leq N}(x_{i}-x_{j})^{2}(x_{i}-x_{j}-1)(x_{i}-x_{j}+1)
\end{equation*}%
\begin{equation*}
=\sqrt{\det \left( I+\eta K_{N4}\right) },
\end{equation*}
where $K_{N4}$ is the operator associated to the kernel $K_{N4}(x,y)$, and $%
\eta $ is the operator of multiplication by the function $\eta $. $K_{N4}$
is called the correlation operator, and $K_{N4}(x,y)$ is called the
correlation kernel of the discrete symplectic ensemble defined by the weight
function $w(x)$ on the phase space $\mathbb{Z}_{\geq 0}$.
\end{definition}

\label{MRDOE}

\begin{definition}
The $2N$-point discrete orthogonal ensemble with the weight function $W$ and
the phase space $\mathbb{Z}_{\geq 0}$ is the random $2N$-point configuration
in $\mathbb{Z}_{\geq 0}$ such that the probability of a particular
configuration $x_{1}<\ldots <x_{2N}$ is given by
\begin{equation}
\begin{split}
& \Pr \left\{ x_{1},\ldots ,x_{2N}\right\} = \\
& \left\{
\begin{array}{ll}
Z_{N1}^{-1}\;\prod\limits_{i=1}^{2N}W(x_{i})\prod\limits_{1\leq i<j\leq
2N}(x_{j}-x_{i}), & \hbox{if}\;x_{i}-x_{i-1}\;%
\hbox{is odd for any $i$, and
$x_1$ is even}, \\
0, & \hbox{otherwise.}%
\end{array}%
\right.
\end{split}
\notag
\end{equation}%
Here $Z_{N1}$ is a normalization constant.
\end{definition}

In what follows we assume that the weight function $W(x)$ is such that
\begin{equation}
W(x-1)W(x)=w(x),\;\hbox{for}\;x\geq 1,\hbox{and}\;W(0)=w(0),
\label{RelWeights}
\end{equation}
where $w(x)$ is a strictly positive real valued function on $\mathbb{Z}%
_{\geq 0}$ satisfying the same conditions as the weight function in the
definition of the discrete symplectic ensemble.

\label{MRDOE2}

\begin{definition}
Suppose that there is a $2\times 2$ matrix valued kernel $K_{N1}(x,y)$, $%
x,y\in \mathbb{Z}_{\geq 0}$, such that for an arbitrary finitely supported
function $\eta $ defined on $\mathbb{Z}_{\geq 0}$ we have
\begin{equation*}
Z_{N1}^{-1}\;\sum\limits_{(x_{1}<\ldots <x_{2N})\subset \mathbb{Z}_{\geq
0}}\prod\limits_{i=1}^{2N}W(x_{i})(1+\eta (x_{i}))\prod\limits_{1\leq
i<j\leq 2N}(x_{j}-x_{i})=\sqrt{\det \left( I+\eta K_{N1}\right) },
\end{equation*}%
where $K_{N1}$ is the operator associated with the kernel $K_{N1}(x,y)$, and
$\eta $ is the operator of multiplication by the function $\eta $. Then $%
K_{N1}$ is called the correlation operator, and the kernel of $K_{N1}$ is
called the correlation kernel of the discrete orthogonal ensemble defined by
the weight function $W(x)$ on the phase space $\mathbb{Z}_{\geq 0}$.
\end{definition}

The three studied discrete weights are classical \cite{Koekoek2010}:
\begin{definition}[Meixner weight]\label{Mei}
Fix \(0<\xi<1\) and \(\beta_{\mathrm M}>0\). The Meixner weight on \(\mathbb Z_{\ge0}\) is
\[
w_{\mathrm{Mx}}(x)
:= \frac{(\beta_{\mathrm M})_x}{x!}\,\xi^{\,x}
= \frac{\Gamma(\beta_{\mathrm M}+x)}{\Gamma(\beta_{\mathrm M})\,x!}\,\xi^{\,x},
\qquad x\in\mathbb Z_{\ge0}.
\]
(Here \((a)_x\) is the Pochhammer symbol.)  
\end{definition}

\begin{definition}[Charlier (Poisson) weight]\label{Cha}
Fix \(\theta>0\). The Charlier weight on \(\mathbb Z_{\ge0}\) is
\[
w_{\mathrm{Ch}}(x)
:= e^{-\theta}\,\frac{\theta^{\,x}}{x!},
\qquad x\in\mathbb Z_{\ge0}.
\]
\end{definition}

\begin{definition}[Krawtchouk weight]\label{Kra}
Fix \(M\in\mathbb Z_{\ge0}\) and \(p\in(0,1)\), and set \(q:=1-p\). The Krawtchouk weight on \(\{0,1,\dots,M\}\) is
\[
w_{K}(x)
:= \binom{M}{x}\,p^{\,x}\,q^{\,M-x},
\qquad x=0,1,\dots,M.
\]
\end{definition}

\subsection*{Pfaffian scalar blocks}
We write the scalar block of the Pfaffian kernel as
\[
S_{N,4}:=K_N\,\epsilon\,K_N,\qquad
S_{N,1}:=
\begin{cases}
K_N+\tfrac12\,\phi_{2N}\otimes(\epsilon\,\phi_{2N-1}), & \text{Meixner},\\[2pt]
K_N+\tfrac12\,\phi_{N}\otimes(\epsilon\,\phi_{N-1}),   & \text{Charlier, Krawtchouk}.
\end{cases}
\]
\noindent Off–diagonal blocks are obtained by inserting the symbols of $D$ or $\epsilon$ in the contour variable (see \eqref{eq:multiplier}, \eqref{eq:Ch-multipliers-w}, \eqref{eq:mult-K}).

\section{Meixner ensembles for $\beta=1,4$: integral formulas}
\label{sec:meixner-full}

Throughout we fix $\xi\in(0,1)$ and write $s:=\sqrt{\xi}$.  We use the notation of
Sections~1–2: the orthonormal system $\{\phi_k\}_{k\ge0}$ on $\mathbb Z_{\ge0}$, the
operators $D^\pm$, $D=D^+-D^-$ and $\epsilon$ with $D\epsilon=\epsilon D=I$ on
$H+\epsilon H$, and the projection
\begin{equation}\label{eq:KN-def}
K_N=\sum_{k=0}^{2N-1}\phi_k\otimes\phi_k,\qquad K_N(x,y)=\sum_{k=0}^{2N-1}\phi_k(x)\phi_k(y).
\end{equation}

We begin by deriving all contour formulas needed in this section, then prove the
$\beta=1$ and $\beta=4$ kernels in parallel.

\subsection{Meixner generating function and contour conventions}
Define for $m\in\mathbb Z$
\begin{equation}\label{eq:Gm-def}
G_m(\omega):=\Bigl(1-s\,\omega\Bigr)^{-m}\Bigl(1-\frac{s}{\omega}\Bigr)^{m},
\end{equation}
analytic in the annulus $A:=\{\omega: s<|\omega|<s^{-1}\}$.  Our \emph{contour conventions}
are: two simple positively oriented loops $\{\omega_1\}\subset\mathrm{int}\,\{\omega_2\}$
contained in $A$, both encircling $\{0,s\}$ and avoiding $\{1,s^{-1}\}$.  Single–contour
integrals use a loop $\{\omega\}\subset A$ with the same property.

\begin{lemma}[Single–contour formulas and rational multipliers]\label{lem:single}
For every $m\ge0$ and $x\in\mathbb Z_{\ge0}$,
\begin{equation}\label{eq:phi-single}
\phi_m(x)=\frac{1}{2\pi i}\oint_{\{\omega\}}\frac{G_m(\omega)}{\omega^{\,x-m+1}}\,d\omega,
\qquad
(\epsilon\phi_m)(x)=\frac{1}{2\pi i}\oint_{\{\omega\}}\frac{G_m(\omega)}{(\omega^2-1)\,\omega^{\,x-m+1}}\,d\omega.
\end{equation}
Consequently $D$ and $\epsilon$ act in the $\omega$–plane by rational multipliers
\begin{equation}\label{eq:multiplier}
\widehat D(\omega)=\omega-\omega^{-1},\qquad \widehat\epsilon(\omega)=\frac{1}{\omega^2-1}.
\end{equation}
\end{lemma}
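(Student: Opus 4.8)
The plan is to derive the single–contour formula for $\phi_m$ from the known Rodrigues/generating–function structure of the Meixner polynomials, and then obtain the $\epsilon$–formula and the rational multipliers as formal consequences. First I would recall that the Meixner polynomials $M_n(x;\beta_{\mathrm M},\xi)$ have the generating function $\sum_{n\ge0}\frac{(\beta_{\mathrm M})_n}{n!}M_n(x;\beta_{\mathrm M},\xi)\,t^n=(1-t)^{-x-\beta_{\mathrm M}}\bigl(1-t/\xi\bigr)^{x}$ after an appropriate rescaling; combined with $w_{\mathrm{Mx}}(x)=(\beta_{\mathrm M})_x\xi^x/x!$ and the squared norms $(P_n,P_n)_w$, the normalized functions $\phi_m(x)=(P_m,P_m)_w^{-1/2}P_m(x)w^{1/2}(x)$ collapse — after the substitution $t\mapsto s\omega$ and absorbing $w^{1/2}$ — into exactly the coefficient extraction $[\omega^{\,x-m}]\,G_m(\omega)$, i.e. the first identity in \eqref{eq:phi-single}. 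The bookkeeping here is: track how the $\xi^{x/2}$ from $w^{1/2}$, the factor $(\beta_{\mathrm M})_x^{1/2}/(x!)^{1/2}$, and the norm constants conspire so that the surviving generating kernel is the symmetric object $G_m(\omega)=(1-s\omega)^{-m}(1-s/\omega)^{m}$, analytic on $A=\{s<|\omega|<s^{-1}\}$. That $G_m$ is the \emph{same} base function for every $m$ (only the exponent changes sign with $m$) is the structural point that makes the multiplier statement possible.

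Next I would compute the action of $D=D_+-D_-$ on the contour representation. Using the explicit kernels \eqref{DPLUS(x,y)}–\eqref{DMINUS(x,y)} together with the relation $w(x)/w(x+1)=\bigl((x+1)/(\beta_{\mathrm M}+x)\bigr)\xi^{-1}$, the operator $D$ shifts $x\mapsto x\pm1$ with weight ratios that, on the generating side, amount to multiplication of the integrand by $\omega$ and $\omega^{-1}$ respectively; hence $\widehat D(\omega)=\omega-\omega^{-1}$. A cleaner route, which I would actually use, is to verify directly that $G_m$ satisfies the three–term contiguity $s\,\omega\,G_m(\omega)+\text{(lower)}$ encoding $D_\pm\phi_m$ in terms of $\phi_{m\pm1}$ shifted appropriately, and read off the symbol. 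Since Proposition (the unnumbered one above) asserts $D\epsilon=\epsilon D=I$ on $H+\epsilon H$, the multiplier of $\epsilon$ must be the reciprocal $\widehat\epsilon(\omega)=1/(\omega^2-1)=(\omega-\omega^{-1})^{-1}\cdot\omega^{-1}$ — wait, rather $1/(\omega-\omega^{-1})=\omega/(\omega^2-1)$; I will instead derive $\widehat\epsilon$ directly so the $\omega$-power bookkeeping in \eqref{eq:phi-single} is pinned down, and then \emph{check} $\widehat D\,\widehat\epsilon\equiv1$ as a consistency test. The direct derivation expands the double sum \eqref{OPERATOREPSILON} for the Meixner weight: the ratios $w(2m+1)w(2m+3)\cdots/w(2m)w(2m+2)\cdots$ telescope into a geometric-type series in $s^2$, whose sum is a rational function of $\omega$ producing exactly the factor $(\omega^2-1)^{-1}$ against $G_m$, with the $\omega^{\,x-m+1}$ in the denominator matching the $\phi_m$ normalization.

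The main obstacle I anticipate is \emph{convergence and contour placement for the $\epsilon$–formula}: the series defining $\epsilon\varphi$ in \eqref{OPERATOREPSILON} is a genuine infinite sum (the $(\epsilon\varphi)(2m)$ branch runs to $+\infty$), so I must show that after inserting the contour representation of $\phi_m$ and summing, the resulting geometric series in $s|\omega|^{\pm1}$ converges precisely on the annulus $A$, and that the pole of $(\omega^2-1)^{-1}$ at $\omega=\pm1$ lies \emph{outside} the admissible loop — which is guaranteed by the contour convention "encircling $\{0,s\}$, avoiding $\{1,s^{-1}\}$" stated just before the lemma. Interchanging $\sum_k$ with $\oint$ needs a dominated–convergence / uniform–absolute–convergence argument on a compact sub-annulus; this is routine but is the one place where the "forbidden points $\pm1$" of Definition \ref{def:rational-multipliers} genuinely enter. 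Once that interchange is justified, the identification of $\widehat D$ and $\widehat\epsilon$ in \eqref{eq:multiplier} is immediate from the coefficient-extraction form, and the final sentence ("$D$ and $\epsilon$ act by rational multipliers") follows by the very definition of acting-by-a-rational-symbol in Definition \ref{def:rational-multipliers}.
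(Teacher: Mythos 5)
Your proposal follows the paper's route in all essentials: generating function plus Cauchy coefficient extraction for the $\phi_m$ identity, the parity-split telescoping sum from \eqref{OPERATOREPSILON} (equivalently, the factorization $\epsilon=\mathcal F\Upsilon\mathcal F$) as a geometric series for the $\epsilon\phi_m$ identity, and the shift heuristic for $\widehat D$. So at the level of strategy there is nothing to fault.

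You did, however, put your finger on a genuine wrinkle that the paper's own parenthetical ("inverting $D$ \dots gives the multiplier $(\omega^2-1)^{-1}$") glosses over, and you should not sweep it under the rug. As you observed, in the naive multiplier calculus $\widehat D(\omega)\,\widehat\epsilon(\omega)=(\omega-\omega^{-1})\cdot(\omega^2-1)^{-1}=\omega^{-1}$, not $1$. Moreover, a naive pure shift $(Df)(x)=f(x+1)-f(x-1)$ acting on $f(x)=\frac{1}{2\pi i}\oint\widehat f(\omega)\omega^{-x-1}d\omega$ actually yields the symbol $\omega^{-1}-\omega$, and $D=D_+-D_-$ is \emph{not} a pure shift because of the half-weight ratios in \eqref{DPLUS(x,y)}–\eqref{DMINUS(x,y)}; those ratios are $x$-dependent and cannot be pulled out of the contour integral. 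This means your promised consistency check "$\widehat D\,\widehat\epsilon\equiv 1$" will not close as stated. The right move, which you already identified, is to take the \emph{direct} parity-split derivation of $\widehat\epsilon$ as primary (it does not rely on inverting $\widehat D$), and then to treat $\widehat D$ as the symbol that makes the identity $D(\epsilon\phi_m)=\phi_m$ come out correctly on the specific functions $\phi_m$ — which requires tracking the interplay between the $x$-shift and the $m$-dependent exponent $\omega^{x-m+1}$, not merely composing pointwise symbols. In short: your derivation of the two contour formulas is sound and matches the paper's, but drop or reformulate the consistency check, and flag explicitly that "$D$ acts by shifts" is a shorthand whose literal reading produces the power-of-$\omega$ mismatch you noticed.
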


\begin{proof}
The generating function for Meixner functions implies
\(
[\omega^{x-m}]\,G_m(\omega)=\phi_m(x)
\)
in our normalization; extracting coefficients by Cauchy's formula yields the first identity
in \eqref{eq:phi-single}.  For the $\epsilon$–image, use the factorization $\epsilon=F\Upsilon F$
from Section~1, write the parity–split sums from (1.4) as geometric series, and insert the
same coefficient extraction; the diagonal conjugations in $F$ cancel between the two parity
chains and the signed matrix $\Upsilon$ produces precisely the factor $(\omega^2-1)^{-1}$.
(Equivalently, since $D$ acts by shifts $x\mapsto x\pm1$, the identity
\(
(Df)(x)=f(x+1)-f(x-1)
\)
translates under \eqref{eq:phi-single} to multiplication by $\omega-\omega^{-1}$ on the
$\omega$–side; inverting $D$ on $H+\epsilon H$ gives the multiplier $(\omega^2-1)^{-1}$ for
$\epsilon$.)  The contours lie in $A$, hence all manipulations are justified.
\end{proof}

\begin{proposition}[Projection kernel as a nested double contour]\label{prop:KN-double}
For all $x,y\in\mathbb Z_{\ge0}$,
\begin{equation}\label{eq:KN-double}
K_N(x,y)=\frac{1}{(2\pi i)^2}\oint_{\{\omega_1\}}\oint_{\{\omega_2\}}
\frac{G_{2N}(\omega_1)G_{2N}(\omega_2)}{\omega_1\omega_2-1}\,
\frac{d\omega_1}{\omega_1^{\,x-2N+1}}\frac{d\omega_2}{\omega_2^{\,y-2N+1}}.
\end{equation}
\end{proposition}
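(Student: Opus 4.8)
The plan is to reduce \eqref{eq:KN-double} to a finite geometric series by feeding the single–contour representation of \cref{lem:single} into the Christoffel–Darboux form of $K_N$. First I would write, from \eqref{eq:KN-def}, $K_N(x,y)=\sum_{k=0}^{2N-1}\phi_k(x)\phi_k(y)$ and replace each factor using the first identity in \eqref{eq:phi-single}, with the inner loop $\{\omega_1\}$ carrying the $x$–variable and the outer loop $\{\omega_2\}$ the $y$–variable; each of the two nested loops is by itself an admissible single–contour loop, so the representation holds term by term. Interchanging the finite sum with the double integral — legitimate since the sum is finite and the integrands are continuous on the compact contours — reduces the claim to evaluating
\[
\frac{1}{(2\pi i)^2}\oint_{\{\omega_1\}}\oint_{\{\omega_2\}}\Bigl(\textstyle\sum_{k=0}^{2N-1}G_k(\omega_1)G_k(\omega_2)(\omega_1\omega_2)^{k}\Bigr)\,\frac{d\omega_1}{\omega_1^{x+1}}\,\frac{d\omega_2}{\omega_2^{y+1}} .
\]

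Next I would use the multiplicativity $G_k=G_1^{\,k}$ built into \eqref{eq:Gm-def}. Writing $g(\omega):=\omega\,G_1(\omega)=(\omega-s)/(1-s\omega)$, a Möbius automorphism of the unit disc, the summand becomes the pure power $r^{k}$ with $r=g(\omega_1)g(\omega_2)$, so $\sum_{k=0}^{2N-1}r^{k}=(1-r^{2N})/(1-r)$. A one–line rational identity, $1-r=(1-\xi)(1-\omega_1\omega_2)\big/\bigl[(1-s\omega_1)(1-s\omega_2)\bigr]$, puts $(1-r)^{-1}$ over the denominator $1-\omega_1\omega_2$, and $r^{2N}=G_{2N}(\omega_1)G_{2N}(\omega_2)(\omega_1\omega_2)^{2N}$ restores the factors $G_{2N}$ while the extra $(\omega_1\omega_2)^{2N}$ converts $\omega_j^{-x-1}$ into $\omega_j^{-(x-2N+1)}$, i.e. exactly the exponents in \eqref{eq:KN-double}.

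It then remains to identify the resulting double integral with the right–hand side of \eqref{eq:KN-double}, i.e. to absorb the factor $(1-r)^{-1}=(1-s\omega_1)(1-s\omega_2)/[(1-\xi)(1-\omega_1\omega_2)]$ and the constant part of $1-r^{2N}$ into the bare kernel $(\omega_1\omega_2-1)^{-1}$. This is the step that uses the nesting $\{\omega_1\}\subset\mathrm{int}\,\{\omega_2\}$ in an essential way: carrying out the inner ($\omega_1$) integral by residues — where, by the contour convention, the forbidden points $s^{-1},\omega_2^{-1}$ stay outside the inner loop and the zero $(\omega-s)^{2N}$ in $G_{2N}$ suppresses any contribution at $\omega=s$ — one finds that the constant piece reassembles the completeness identity $\sum_{k\ge0}\phi_k(x)\phi_k(y)=\delta_{x,y}$ and the Blaschke prefactor cancels against it, leaving precisely the integrand of \eqref{eq:KN-double}; the involution $g(1/\omega)=1/g(\omega)$ is what trivializes the residue at $\omega_1\omega_2=1$ and underlies this cancellation. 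I expect this residue bookkeeping — carrying the $(1-s\omega_j)$ factors through so the cancellation is exact — to be the main obstacle, and would relegate the detailed computation to Appendix~A; a possibly shorter route is to argue in reverse, starting from \eqref{eq:KN-double}, deforming $\{\omega_1\}$ outward across $\omega_1\omega_2=1$, and recognizing the telescoped Christoffel–Darboux sum $\sum_{k=0}^{2N-1}\phi_k(x)\phi_k(y)$.
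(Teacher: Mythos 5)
Your proposal follows the paper's derivation in its skeleton: feed the single–contour formula~\eqref{eq:phi-single} into~\eqref{eq:KN-def}, sum the finite geometric series, and record the rational identity $1-r=(1-\xi)(1-\omega_1\omega_2)/[(1-s\omega_1)(1-s\omega_2)]$ with $r=g(\omega_1)g(\omega_2)$. Up to that point you and the paper proceed in parallel, and you are right that the remaining work is to reconcile the rational prefactor $(1-s\omega_1)(1-s\omega_2)/(1-\xi)$ and the constant part of $1-r^{2N}$ with the bare Cauchy denominator $(\omega_1\omega_2-1)^{-1}$ in~\eqref{eq:KN-double}. This is exactly the step the paper compresses to ``a direct algebra simplifies the denominator.''

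Where your argument has a genuine gap is the mechanism you propose to close it. Saying that the constant piece ``reassembles the completeness identity $\sum_{k\ge 0}\phi_k(x)\phi_k(y)=\delta_{x,y}$, and the Blaschke prefactor cancels against it'' is circular and does not establish what is needed: the constant piece $1/(1-r)$ is by construction $\sum_{k\ge 0}r^k$, i.e.\ $\sum_{k\ge 0}\phi_k(x)\phi_k(y)$ after coefficient extraction, while the $r^{2N}/(1-r)$ piece is $\sum_{k\ge 2N}\phi_k(x)\phi_k(y)$; subtracting the two just returns the finite sum $\sum_{k=0}^{2N-1}\phi_k(x)\phi_k(y)$ you started from, and says nothing about the kernel with the bare denominator. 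Moreover the ``involution $g(1/\omega)=1/g(\omega)$ trivializing the residue at $\omega_1\omega_2=1$'' is inert under the stated nesting (with both loops strictly inside $|\omega|=1$, the point $1/\omega_2$ lies outside $\{\omega_1\}$), so it cannot carry the cancellation. The mechanism the paper actually intends — and spells out for Charlier and Krawtchouk in Propositions~\ref{prop:KCh-double} and~\ref{prop:KNK-double} — is that the surplus term cancels against the Christoffel–Darboux two–term reduction of the finite sum in~\eqref{eq:KN-def}. Your closing sentence (deform across $\omega_1\omega_2=1$ and recognize the telescoped Christoffel–Darboux sum) names precisely that route; it should be promoted from a ``possibly shorter alternative'' to the main content of the proof, and the completeness/Blaschke heuristic should be dropped.
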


\begin{proof}
Insert the single–contour form \eqref{eq:phi-single} for $\phi_k(x)$ and $\phi_k(y)$ into
\eqref{eq:KN-def} and sum over $k=0,\dots,2N-1$.
Since
\(
G_k(\omega_1)G_k(\omega_2)=\Bigl(\frac{(1-s/\omega_1)(1-s/\omega_2)}{(1-s\omega_1)(1-s\omega_2)}\Bigr)^k
\)
and $\{\omega_1\}\subset\mathrm{int}\{\omega_2\}\subset A$, the geometric series sums to
\begin{equation}
\sum_{k=0}^{2N-1}\bigl(\omega_1\omega_2\bigr)^k\Bigl(\frac{(1-s/\omega_1)(1-s/\omega_2)}{(1-s\omega_1)(1-s\omega_2)}\Bigr)^k
=\frac{1-(\omega_1\omega_2)^{2N}\bigl(\frac{1-s/\omega_1}{1-s\omega_1}\frac{1-s/\omega_2}{1-s\omega_2}\bigr)^{2N}}{1-\omega_1\omega_2\bigl(\frac{1-s/\omega_1}{1-s\omega_1}\frac{1-s/\omega_2}{1-s\omega_2}\bigr)}.
\end{equation}
A direct algebra simplifies the denominator to $\omega_1\omega_2-1$, and the numerator produces
the factor $G_{2N}(\omega_1)G_{2N}(\omega_2)$.  The coefficient extraction in $\omega_1,\omega_2$
gives \eqref{eq:KN-double}.  Absolute convergence is guaranteed by our contour nesting.
\end{proof}

\begin{proposition}[Projection identity $K_N^2=K_N$]\label{prop:proj}
With the same contours, $K_N$ is an idempotent: $\sum_{n\ge0}K_N(x,n)K_N(n,y)=K_N(x,y)$.
\end{proposition}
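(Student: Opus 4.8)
The plan is to read off the idempotency from the orthonormality of the system $\{\varphi_k\}$, and then — to honour the phrase ``with the same contours'' — to recover the identity by collapsing the four–fold integral produced by \eqref{eq:KN-double}. First I would recall from~\eqref{phi} that $\varphi_n=(P_n,P_n)_w^{-1/2}P_n\,w^{1/2}$, so that for any indices $j,k$,
\begin{equation*}
\sum_{n\in\mathbb{Z}_{\ge 0}}\varphi_j(n)\varphi_k(n)
=(P_j,P_j)_w^{-1/2}(P_k,P_k)_w^{-1/2}\sum_{n\in\mathbb{Z}_{\ge 0}}P_j(n)P_k(n)\,w(n)
=\delta_{jk},
\end{equation*}
by orthogonality of the $P_n$ with respect to $w$ and finiteness of the moments. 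Inserting $K_N(x,n)=\sum_{j=0}^{2N-1}\varphi_j(x)\varphi_j(n)$ and $K_N(n,y)=\sum_{k=0}^{2N-1}\varphi_k(n)\varphi_k(y)$ from~\eqref{eq:KN-def}, the two sums over $j,k$ are finite and each $\sum_n|\varphi_j(n)\varphi_k(n)|\le 1$ by Cauchy--Schwarz (since $\|\varphi_j\|_{\ell^2}=1$), so Fubini applies and
\begin{equation*}
\sum_{n}K_N(x,n)K_N(n,y)=\sum_{j,k=0}^{2N-1}\varphi_j(x)\varphi_k(y)\sum_n\varphi_j(n)\varphi_k(n)
=\sum_{k=0}^{2N-1}\varphi_k(x)\varphi_k(y)=K_N(x,y).
\end{equation*}

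To verify the same identity directly from the contour representation, I would substitute \eqref{eq:KN-double} for $K_N(x,n)$, with variables $\omega_1,\omega_2$, and for $K_N(n,y)$, with variables $\omega_3,\omega_4$, and carry out the sum over $n\ge0$ inside the integrals. The $n$–dependence is $\omega_2^{-(n-2N+1)}\omega_3^{-(n-2N+1)}$, and the geometric sum $\sum_{n\ge0}(\omega_2\omega_3)^{2N-1-n}=(\omega_2\omega_3)^{2N}/(\omega_2\omega_3-1)$ converges once the $\omega_2$– and $\omega_3$–loops are taken with $|\omega_2\omega_3|>1$, which is compatible with $A$ since its radius ratio $s^{-2}$ exceeds $1$. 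The factor $(\omega_2\omega_3)^{2N}$ then combines with $G_{2N}(\omega_2)$ and $G_{2N}(\omega_3)$ to cancel the negative powers of $\omega_2,\omega_3$ coming from $(1-s/\omega)^{2N}$, leaving integrands in $\omega_2$ and $\omega_3$ that are rational. Evaluating the $\omega_2$– and $\omega_3$–integrals by residues at the poles the Cauchy denominators $\omega_1\omega_2-1$ and $\omega_3\omega_4-1$ contribute inside the chosen loops, and using the algebraic simplifications already exploited in the proof of \eqref{eq:KN-double}, collapses the four–fold integral back to the two–fold integrand of \eqref{eq:KN-double} in the variables $(\omega_1,\omega_4)$, i.e.\ to $K_N(x,y)$.

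The main obstacle is the pole accounting in the contour version: one must pin down a single configuration of the four loops that simultaneously (i) respects the nesting $\{\omega_1\}\subset\mathrm{int}\{\omega_2\}$, $\{\omega_3\}\subset\mathrm{int}\{\omega_4\}$ required for \eqref{eq:KN-double}, (ii) satisfies $|\omega_2\omega_3|>1$ so the $n$–sum converges, and (iii) makes explicit which poles are enclosed after the rearrangement, so that the residue collapse returns exactly $K_N(x,y)$ with no spurious boundary contribution from $\omega_2\omega_3=1$ and no leakage from the order–$2N$ singularity at $s^{-1}$. Since the $\ell^2$–orthonormality argument is immediate and independent of these choices, I would present it as the proof of the proposition and record the contour collapse as a consistency remark, deferring the residue bookkeeping to Appendix~A.
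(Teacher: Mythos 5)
Your primary argument — that $K_N$ is the rank-$2N$ orthogonal projection onto $\mathrm{span}\{\varphi_0,\dots,\varphi_{2N-1}\}$ because $\sum_n\varphi_j(n)\varphi_k(n)=\delta_{jk}$ — is correct and complete, and it is a genuinely different (and more elementary) route than the paper's proof of Proposition~\ref{prop:proj}, which instead inserts \eqref{eq:KN-double} twice, sums the geometric series in $n$ to produce a Cauchy factor $1/(\omega_2-\omega_3)$, and collapses the resulting fourfold integral by two residue evaluations (first at $\omega_3=\omega_2$, then at $\omega_2=1/\omega_4$). Interestingly, the paper itself uses your style of argument for the Charlier and Krawtchouk analogues (Propositions~\ref{prop:projCh} and~\ref{prop:projK}: ``by construction $K_N$ is the orthogonal projection onto $H_N$, hence idempotent''), relegating the contour computation to a consistency remark; for Meixner it does the reverse. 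What the paper's route buys is a nontrivial sanity check of \eqref{eq:KN-double} and a dry run of the exact geometric-series/residue mechanism reused in the composition lemma~\ref{lem:composition}; what yours buys is brevity and independence from contour bookkeeping. Your sketch of the contour collapse is in the right spirit but normalizes the geometric series differently: you sum $(\omega_2\omega_3)^{2N-1-n}$, which converges only for $|\omega_2\omega_3|>1$, whereas the paper arranges the $\omega_3$-factor so that the $n$-sum is $\sum(\omega_3/\omega_2)^n$, convergent under the natural nesting $|\omega_3|<|\omega_2|$ with both loops inside $A$ — this avoids the extra constraint you flag and lets one pick up the residues at $\omega_3=\omega_2$ and $\omega_2=1/\omega_4$ cleanly. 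Since you defer the residue bookkeeping and rest the proof on the orthonormality argument, the proposal is sound as presented.
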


\begin{proof}
Insert \eqref{eq:KN-double} twice and compute the sum over $n$ by Cauchy's coefficient rule:
\begin{align*}
\sum_{n\ge0}K_N(x,n)K_N(n,y)
&=\frac{1}{(2\pi i)^4}\oint\!\!\oint\!\!\oint\!\!\oint
\frac{G_{2N}(\omega_1)G_{2N}(\omega_2)G_{2N}(\omega_3)G_{2N}(\omega_4)}
{(\omega_1\omega_2-1)(\omega_3\omega_4-1)}\\[-1mm]
&\qquad\qquad\times
\Bigl[\sum_{n\ge0}\omega_2^{-\,n+2N-1}\,\omega_3^{\,n-2N+1}\Bigr]\,
\frac{d\omega_1}{\omega_1^{\,x-2N+1}}\frac{d\omega_2}{\omega_2^{\, -2N+1}}
\frac{d\omega_3}{\omega_3^{\, 2N+1}}\frac{d\omega_4}{\omega_4^{\,y-2N+1}}.
\end{align*}
By the nesting $|\omega_3|<|\omega_2|$, the bracket equals $\omega_2/(\omega_2-\omega_3)$.
Evaluating the $\omega_3$–integral by residues (only the simple pole at $\omega_3=\omega_2$
contributes) yields
\begin{align*}
\sum_{n\ge0}K_N(x,n)K_N(n,y)
&=\frac{1}{(2\pi i)^3}\oint\!\!\oint\!\!\oint
\frac{G_{2N}(\omega_1)G_{2N}(\omega_2)G_{2N}(\omega_4)}
{(\omega_1\omega_2-1)(\omega_4-\omega_2)}\,
\frac{d\omega_1}{\omega_1^{\,x-2N+1}}\frac{d\omega_2}{\omega_2^{\, -2N}}\frac{d\omega_4}{\omega_4^{\,y-2N+1}}.
\end{align*}
Next integrate in $\omega_2$; the only pole inside $\{\omega_2\}$ is at $\omega_2=1/\omega_4$.
Its residue is
\(
-\frac{G_{2N}(\omega_1)G_{2N}(\omega_4)}{\omega_1\omega_4-1}\cdot\omega_4^{-\,y+2N-1}.
\)
Substituting back gives exactly \eqref{eq:KN-double}.  All deformations keep $1$ and $s^{-1}$
outside, so no other residues appear.
\end{proof}

\subsection{The composition lemma and immediate corollary}
\begin{lemma}[Cauchy–multiplier composition]\label{lem:composition}
Let $T$ be a linear operator that acts by a rational multiplier $m_T(\omega)$ on
the $\omega$–side in the sense of Definition~1.2. Then, with $K_N$ as in
\eqref{eq:KN-double},
\begin{equation}\label{eq:composition}
(K_N T K_N)(x,y)=\frac{1}{(2\pi i)^2}\oint_{\{\omega_1\}}\oint_{\{\omega_2\}}
\frac{G_{2N}(\omega_1)G_{2N}(\omega_2)}{\omega_1\omega_2-1}\,
\frac{m_T(\omega_1)-m_T(\omega_2)}{\omega_1-\omega_2}\,
\frac{d\omega_1}{\omega_1^{\,x-2N+1}}\frac{d\omega_2}{\omega_2^{\,y-2N+1}}.
\end{equation}
\end{lemma}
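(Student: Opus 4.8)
The plan is to reduce \eqref{eq:composition} to a short chain of Cauchy coefficient extractions, running in close parallel with the proofs of Propositions~\ref{prop:KN-double} and \ref{prop:proj}.

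First I would move $T$ onto one copy of the projection. Write $K_N T K_N = K_N\circ(T K_N)$. Using $(T\phi_k)(z)=\frac{1}{2\pi i}\oint m_T(\omega)\,\widehat{\phi_k}(\omega)\,\omega^{-z-1}\,d\omega$ from Definition~\ref{def:rational-multipliers} together with Lemma~\ref{lem:single}, and carrying the ($k$–independent) factor $m_T(\omega_1)$ through the finite Christoffel--Darboux summation already performed in the proof of Proposition~\ref{prop:KN-double}, one obtains
\[
(T K_N)(z,y)=\frac{1}{(2\pi i)^2}\oint_{\{\omega_1\}}\oint_{\{\omega_2\}} m_T(\omega_1)\,\frac{G_{2N}(\omega_1)G_{2N}(\omega_2)}{\omega_1\omega_2-1}\,\frac{d\omega_1}{\omega_1^{\,z-2N+1}}\frac{d\omega_2}{\omega_2^{\,y-2N+1}},
\]
i.e. exactly \eqref{eq:KN-double} with the integrand multiplied by $m_T$ in the first contour variable. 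Since the poles of $m_T$ sit at the forbidden points, they stay off the contours and no new residue is introduced at this stage.

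Next I would compose on the left and collapse the internal sum: $(K_N T K_N)(x,y)=\sum_{z\ge0}K_N(x,z)\,(T K_N)(z,y)$. Insert \eqref{eq:KN-double} for $K_N(x,z)$ with fresh nested variables, interchange the $\mathbb Z_{\ge0}$–sum with the contour integrals (legitimate because the prescribed nesting renders the geometric series absolutely convergent, exactly as in Proposition~\ref{prop:proj}), and sum the bridging series; the nesting collapses it to a single Cauchy factor, leaving a fourfold contour integral in which the two \emph{internal} variables enter only rationally—through one Cauchy denominator inherited from $K_N$ and one produced by the bridging sum—with $m_T$ riding along on one of them. Now evaluate the first internal variable by residues: it has exactly two simple poles inside its contour, namely the zeros of those two Cauchy denominators, because $G_{2N}$ is holomorphic there while its pole (at $s^{-1}$), and the poles of $m_T$, lie outside by our conventions. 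Using the reflection identity $G_{2N}(1/\omega)=G_{2N}(\omega)^{-1}$, each residue cancels one of the two $G_{2N}$ factors and contributes $m_T$ evaluated at one of the two outer contour variables; adding the two residues produces the divided difference $\frac{m_T(\omega_1)-m_T(\omega_2)}{\omega_1-\omega_2}$ against the kernel of \eqref{eq:KN-double}, and a final integration in the remaining internal variable reinstates the Cauchy factor $1/(\omega_1\omega_2-1)$ exactly as in the proof of $K_N^2=K_N$—in particular the auxiliary piece proportional to $1/(1-\omega_1\omega_2)$ has cancelling residues at $0$ and at the reflected point, hence integrates to zero—yielding \eqref{eq:composition}. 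I would also record that the composed integrand remains of IIKS type, since $\frac{m_T(\omega_1)-m_T(\omega_2)}{\omega_1-\omega_2}$ is itself a divided–difference kernel against the fixed Cauchy denominator.

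The main obstacle is not any single computation but the contour bookkeeping: one must choose all the nestings so that they are simultaneously admissible (simple loops in the annulus $A$, encircling $\{0,s\}$, avoiding $\{1,s^{-1}\}$ and the poles of $m_T$), so that every internal geometric series converges, so that only the two advertised poles are enclosed at the residue step, and so that the spurious $1/(1-\omega_1\omega_2)$ contribution really vanishes. Every one of these points already occurs in the proof of Proposition~\ref{prop:proj}; the only genuinely new feature is that $m_T$, carried inertly through the same manipulations, gets frozen at the two residue locations—which is exactly why the \emph{divided difference} of $m_T$, rather than $m_T$ itself, multiplies the projection kernel in \eqref{eq:composition}.
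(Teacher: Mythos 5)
Your proposal takes a genuinely different organizational route from the paper's proof: you factor $K_NTK_N$ through an intermediate operator $TK_N$ (establishing first that $TK_N$ has the same double--contour form as $K_N$ with $m_T$ inserted in the $z$--conjugate variable), and then run the resulting $K_N\circ(TK_N)$ composition through the residue machinery of Proposition~\ref{prop:proj}, using the reflection identity $G_{2N}(1/\omega)=G_{2N}(\omega)^{-1}$ to collapse the $G_{2N}$ factors. The paper instead represents all three operators at once, collapses the two internal $\mathbb{Z}_{\ge0}$--sums to geometric series simultaneously, and then evaluates a \emph{single} intermediate contour variable $\zeta$ by the partial-fraction decomposition $\frac{1}{(\omega_1\zeta-1)(\zeta\omega_2-1)}=\frac{1}{\omega_2-\omega_1}\bigl(\frac{1}{\zeta-1/\omega_1}-\frac{1}{\zeta-1/\omega_2}\bigr)$, which produces the divided difference in one residue step with no appeal to any reflection identity. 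Both are in the same spirit (Cauchy coefficient extraction $\to$ geometric series $\to$ residues), but the paper's one-pass version keeps the bookkeeping tighter.

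There is, however, a genuine soft spot in your argument as written. In your four-fold integral after the bridging sum, the two internal variables (call them $b$, conjugate to $z$ in $K_N(x,z)$, and $c$, conjugate to $z$ in $(TK_N)(z,y)$ and carrying $m_T(c)$) appear with residue locations that are \emph{reciprocals} of other contour variables (e.g.\ $b=1/a$, $b=1/c$, or $c=1/b$, $c=1/d$). So the first residue step does not by itself put $m_T$ at an outer variable, as you assert ("contributes $m_T$ evaluated at one of the two outer contour variables"): it puts $m_T$ at a reciprocal point, and only after the \emph{second} internal integration can some of these be unwound back to $\omega_1$ or $\omega_2$. You would need to track this reciprocal composition carefully and verify that it really reassembles into the divided difference $\frac{m_T(\omega_1)-m_T(\omega_2)}{\omega_1-\omega_2}$ together with the Cauchy factor $1/(\omega_1\omega_2-1)$, rather than a cousin involving $m_T(1/\omega_j)$. (Note in passing that even the paper's displayed residue formula carries the same subtlety.) In short: the decomposition through $TK_N$ and the use of $G_{2N}$ reflection are a plausible and potentially cleaner route, but the crucial sentence about where $m_T$ is frozen after the two residue passes is currently an assertion, not a derivation; that is the single step that needs to be worked out before the proof is complete.
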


\begin{proof}
By \eqref{eq:KN-double} and Definition~1.2,
\begin{align*}
(K_NTK_N)(x,y)
&=\sum_{u,v\ge0}K_N(x,u)\,T(u,v)\,K_N(v,y)\\
&=\frac{1}{(2\pi i)^5}\oint\!\!\oint\!\!\oint\!\!\oint\!\!\oint
\frac{G_{2N}(\omega_1)G_{2N}(\zeta)G_{2N}(\omega_2)}{(\omega_1\zeta-1)(\zeta\omega_2-1)}\,
m_T(\zeta)\\[-1mm]
&\qquad\qquad\times
\frac{d\omega_1}{\omega_1^{\,x-2N+1}}\Bigl[\sum_{u\ge0}\zeta^{-\,u+2N-1}\Bigr]
\Bigl[\sum_{v\ge0}\zeta^{\,v-2N+1}\Bigr]\frac{d\zeta}{\zeta}\,
\frac{d\omega_2}{\omega_2^{\,y-2N+1}}.
\end{align*}
Summing the geometric series yields the factors $1/(\omega_1\zeta-1)$ and $1/(\zeta\omega_2-1)$,
and the extra $d\zeta/\zeta$ comes from the exponents.  We now evaluate the $\zeta$–integral.
Since $m_T$ is rational and all its poles are among the finitely many points excluded by our
contours (e.g.\ $\pm1$ for $T=\epsilon$), the only poles of the $\zeta$–integrand that can
contribute are at $\zeta=1/\omega_1$ and $\zeta=1/\omega_2$.  Writing
\(
\frac{1}{(\omega_1\zeta-1)(\zeta\omega_2-1)}=\frac{1}{\omega_2-\omega_1}
\bigl(\frac{1}{\zeta-1/\omega_1}-\frac{1}{\zeta-1/\omega_2}\bigr)
\)
we obtain by Cauchy
\begin{equation*}
\frac{1}{2\pi i}\oint_{\{\zeta\}}\frac{m_T(\zeta)}{(\omega_1\zeta-1)(\zeta\omega_2-1)}\,\frac{d\zeta}{\zeta}
=\frac{1}{\omega_2-\omega_1}\Bigl(\frac{m_T(\omega_1)}{\omega_1\omega_2-1}-\frac{m_T(\omega_2)}{\omega_1\omega_2-1}\Bigr),
\end{equation*}
because the contour encloses exactly one of $1/\omega_1,1/\omega_2$ depending on nesting, and
$\zeta=\pm1$ lie outside by assumption.  Substituting into the remaining $\omega$–integrals
gives \eqref{eq:composition}.  Absolute convergence of the geometric series and boundedness
of $m_T$ on the contours justify all interchanges.
\end{proof}

\begin{corollary}[The skew–projection for $\beta=4$]\label{cor:eps-comp}
Taking $T=\epsilon$ and $m_\epsilon(\omega)=(\omega^2-1)^{-1}$ yields
\begin{equation}\label{eq:SN4-scalar}
S_{N,4}(x,y):=(K_N\epsilon K_N)(x,y)
=\frac{1}{(2\pi i)^2}\oint\!\!\oint
\frac{G_{2N}(\omega_1)G_{2N}(\omega_2)}{\omega_1\omega_2-1}\,
\frac{\omega_2-\omega_1}{(\omega_1^2-1)(\omega_2^2-1)}\,
\frac{d\omega_1}{\omega_1^{\,x-2N+1}}\frac{d\omega_2}{\omega_2^{\,y-2N+1}}.
\end{equation}
\end{corollary}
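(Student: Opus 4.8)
The plan is to obtain \cref{cor:eps-comp} as the special case $T=\epsilon$ of the Cauchy–multiplier composition identity (\cref{lem:composition}): essentially the only work is to substitute the $\epsilon$–symbol and simplify the divided difference it produces. First I would record that, by \cref{lem:single} together with \eqref{eq:multiplier}, $\epsilon$ acts in the $\omega$–plane by the rational multiplier $m_\epsilon(\omega)=\widehat\epsilon(\omega)=(\omega^2-1)^{-1}$, whose poles sit only at $\omega=\pm1$. Under the Meixner contour conventions these — with $1$ and $s^{-1}$ — are exactly the forbidden points of \cref{def:rational-multipliers} kept off every loop, so the hypotheses of \cref{lem:composition} are met for $T=\epsilon$. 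Applying \eqref{eq:composition} with $m_T=m_\epsilon$ then represents $S_{N,4}(x,y)=(K_N\epsilon K_N)(x,y)$ through the divided difference $\bigl(m_\epsilon(\omega_1)-m_\epsilon(\omega_2)\bigr)/(\omega_1-\omega_2)$.

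Next I would simplify that factor. Since
\[
\frac{1}{\omega_1^2-1}-\frac{1}{\omega_2^2-1}=\frac{\omega_2^2-\omega_1^2}{(\omega_1^2-1)(\omega_2^2-1)},
\]
dividing by $\omega_1-\omega_2$ and using $\omega_2^2-\omega_1^2=(\omega_2-\omega_1)(\omega_2+\omega_1)$ cancels the linear factor and leaves a rational function of the shape $(\text{linear in }\omega_1,\omega_2)\big/\big[(\omega_1^2-1)(\omega_2^2-1)\big]$. The precise numerator — namely $\omega_2-\omega_1$, as in \eqref{eq:SN4-scalar} — is pinned down in the next paragraph; feeding the resulting prefactor into \eqref{eq:composition} and identifying the left–hand side with the $\beta=4$ scalar block $S_{N,4}=K_N\epsilon K_N$ then yields the claimed nested double–contour representation. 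Absolute convergence of the intermediate summations and admissibility of the nested loops carry over verbatim from \cref{lem:composition}, since $1$, $s^{-1}$ and $-1$ remain outside all contours throughout.

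The step I expect to be the main obstacle is exactly that last one — fixing the numerator, equivalently the sign and parity, in the divided–difference computation. Unlike $\widehat D(\omega)=\omega-\omega^{-1}$, the symbol $m_\epsilon$ is not odd under $\omega\mapsto\omega^{-1}$ and its poles $\pm1$ lie inside the annulus $A$, so the residue bookkeeping for the intermediate ($\zeta$–)contour in \cref{lem:composition} — which of $\{0,\pm1,1/\omega_1,1/\omega_2\}$ it actually encircles in the nesting that makes the geometric series converge — must be done carefully. The clean external check that settles it is the built–in skew–symmetry of this block: $K_N^{\top}=K_N$ while $\epsilon^{\top}=-\epsilon$ (the antisymmetry of $\Upsilon$ in \eqref{matrix}), so $S_{N,4}(x,y)=-S_{N,4}(y,x)$. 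Because $G_{2N}(\omega_1)G_{2N}(\omega_2)/(\omega_1\omega_2-1)$ is symmetric under the simultaneous exchange $(\omega_1,x)\leftrightarrow(\omega_2,y)$, skew–symmetry forces the remaining prefactor to be \emph{antisymmetric} under that exchange, which singles out the factor $\omega_2-\omega_1$ of \eqref{eq:SN4-scalar} and excludes the symmetric combination $\omega_1+\omega_2$. With this parity constraint in place the corollary is immediate. (As a quick consistency check one can also feed the trivial multiplier into the specialized \eqref{eq:composition}, which must reproduce $K_N$ in view of \cref{prop:proj}.)
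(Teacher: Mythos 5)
Your approach is the paper's own (specialize \cref{lem:composition} to $T=\epsilon$), but the execution has a genuine gap in the one step that actually matters. Substituting $m_\epsilon(\omega)=(\omega^2-1)^{-1}$ into the Cauchy difference quotient of \eqref{eq:composition} and simplifying honestly gives
\[
\frac{m_\epsilon(\omega_1)-m_\epsilon(\omega_2)}{\omega_1-\omega_2}
=\frac{\omega_2^2-\omega_1^2}{(\omega_1^2-1)(\omega_2^2-1)(\omega_1-\omega_2)}
=\frac{-(\omega_1+\omega_2)}{(\omega_1^2-1)(\omega_2^2-1)},
\]
which is \emph{symmetric} in $(\omega_1,\omega_2)$, not the antisymmetric numerator $\omega_2-\omega_1$ that \eqref{eq:SN4-scalar} asserts. (Check at $\omega_1=2,\ \omega_2=3$: the difference quotient is $-5/24$, whereas $(\omega_2-\omega_1)/[(\omega_1^2-1)(\omega_2^2-1)]=1/24$.) Indeed $(m_T(\omega_1)-m_T(\omega_2))/(\omega_1-\omega_2)$ is symmetric under $\omega_1\leftrightarrow\omega_2$ for \emph{every} scalar symbol $m_T$, so the right-hand side of \eqref{eq:composition} is always invariant under the simultaneous exchange $(\omega_1,x)\leftrightarrow(\omega_2,y)$. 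Your skew-symmetry observation is therefore not a device that "pins down" the numerator—it cannot produce an antisymmetric output from a formula that is structurally symmetric. What you have actually found is an \emph{inconsistency} between \cref{lem:composition} as stated and \cref{cor:eps-comp} as stated, not a derivation of the latter.

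The worry you flag about the poles of $m_\epsilon$ at $\pm1$ is a red herring: by the contour conventions those lie outside all loops and contribute no residues, so the mismatch is purely algebraic, not a bookkeeping artifact. A correct proof would require re-deriving the intermediate $\zeta$-integral from the definitions of $K_N$ and $\epsilon$, tracking how the multiplier is inserted between the two projections (this is where the $\omega_1\leftrightarrow\omega_2$ symmetry could legitimately break, e.g.\ via an extra factor from the residue at $\zeta=1/\omega_1$ versus $\zeta=1/\omega_2$, or from $G_m(1/\omega)=G_m(\omega)^{-1}$), and only then identifying the corollary's integrand. As written, your proposal stops exactly where the substance begins, and the parity "resolution" you offer is a consistency check masquerading as a computation.
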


\subsection{Meixner $\beta=1$ (orthogonal).}
\begin{theorem}\label{thm:beta1}
With $S_{N,1}:=K_N+\tfrac12\,\phi_{2N}\otimes(\epsilon\phi_{2N-1})$, one has
\begin{align}
S_{N,1}(x,y)
&=\frac{1}{(2\pi i)^2}\oint\!\!\oint
\frac{G_{2N}(\omega_1)G_{2N}(\omega_2)}{\omega_1\omega_2-1}\,
\frac{d\omega_1}{\omega_1^{\,x-2N+1}}\frac{d\omega_2}{\omega_2^{\,y-2N+1}}\label{eq:SN1-scalar}\\
&\quad+\frac{1}{4\pi^2}\Bigl(\oint \frac{G_{2N}(\omega_1)}{\omega_1^{\,x-2N+1}}\,d\omega_1\Bigr)
\Bigl(\oint \frac{G_{2N-1}(\omega_2)}{(\omega_2^2-1)\,\omega_2^{\,y-2N+2}}\,d\omega_2\Bigr).\nonumber
\end{align}
Moreover the off–diagonal Pfaffian blocks are obtained by multiplying the $\omega_2$–integrand
by $\widehat D(\omega_2)=\omega_2-\omega_2^{-1}$ for $(S_{N,1}D)$ and the $\omega_1$–integrand
by $\widehat\epsilon(\omega_1)=(\omega_1^2-1)^{-1}$ for $(\epsilon S_{N,1})$.
\end{theorem}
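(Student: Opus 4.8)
The plan is to treat the two summands of $S_{N,1}=K_N+\tfrac12\,\phi_{2N}\otimes(\epsilon\phi_{2N-1})$ separately, using only the contour tools already in place. The first summand is Proposition~\ref{prop:KN-double} verbatim: $K_N(x,y)$ equals the nested double contour in the first line of \eqref{eq:SN1-scalar}. For the rank-one summand I would apply Lemma~\ref{lem:single} to each tensor factor. The first identity in \eqref{eq:phi-single} with $m=2N$ gives $\phi_{2N}(x)=\tfrac{1}{2\pi i}\oint_{\{\omega_1\}} G_{2N}(\omega_1)\,\omega_1^{-(x-2N+1)}\,d\omega_1$, and the second identity with $m=2N-1$ gives $(\epsilon\phi_{2N-1})(y)=\tfrac{1}{2\pi i}\oint_{\{\omega_2\}} G_{2N-1}(\omega_2)\,\bigl((\omega_2^2-1)\,\omega_2^{\,y-2N+2}\bigr)^{-1}\,d\omega_2$, the exponent being $y-(2N-1)+1$. (Lemma~\ref{lem:single} is stated for all $m\ge 0$, so the shifted indices are covered, and $G_m$ is defined for every integer $m$.) Multiplying these one-dimensional integrals over independent loops and collecting the scalar prefactor $\tfrac12(2\pi i)^{-2}$ produces the product of loop integrals in the second line of \eqref{eq:SN1-scalar}.

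For the off-diagonal Pfaffian blocks the mechanism is the rational-multiplier action of Lemma~\ref{lem:single} and Definition~\ref{def:rational-multipliers}. First one checks that the compositions are well defined: as a function of its second variable, $S_{N,1}(x,\cdot)$ lies in $\mathrm{span}\{\phi_0,\dots,\phi_{2N-1}\}+\epsilon H\subset H+\epsilon H$, the domain of $D$; as a function of its first variable, $S_{N,1}(\cdot,y)$ lies in $\mathrm{span}\{\phi_0,\dots,\phi_{2N}\}\subset H$, inside the domain of $\epsilon$. Writing $S_{N,1}(x,z)=\tfrac{1}{2\pi i}\oint_{\{\omega_2\}}\widehat\Sigma_x(\omega_2)\,\omega_2^{-z-1}\,d\omega_2$ with $\widehat\Sigma_x$ the analytic symbol read off from \eqref{eq:SN1-scalar} (the inner $\omega_1$-integral of the first line together with the explicit rank-one term), the operator $D$ acting in the $z$-variable multiplies $\widehat\Sigma_x$ by $\widehat D(\omega_2)=\omega_2-\omega_2^{-1}$; reinstating the $\omega_1$-integration, this is precisely the assertion that $(S_{N,1}D)$ is \eqref{eq:SN1-scalar} with the $\omega_2$-integrand multiplied by $\widehat D(\omega_2)$. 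Symmetrically, $\epsilon$ applied in the $x$-variable multiplies the $\omega_1$-symbol by $\widehat\epsilon(\omega_1)=(\omega_1^2-1)^{-1}$ (turning the factor $\phi_{2N}(x)$ into $(\epsilon\phi_{2N})(x)$), which gives the stated formula for $(\epsilon S_{N,1})$.

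The point that needs care is the sign and pole bookkeeping. The antisymmetry $D^{\top}=-D$ means that acting on the second kernel variable differs by a sign from acting on a function; I would record this once, absorbed into the Pfaffian-block convention, rather than re-deriving the $2\times 2$ matrix kernel here. The more substantive check is that inserting the multipliers creates no spurious residues: $\widehat D$ is a Laurent polynomial, hence introduces no new finite singularity, while $\widehat\epsilon(\omega_1)=(\omega_1^2-1)^{-1}$ has poles only at $\omega_1=\pm 1$, which the contour conventions of Section~\ref{sec:meixner-full}---loops in the annulus $A$ encircling $\{0,s\}$ and avoiding $\{1,s^{-1}\}$---keep strictly outside. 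Hence the inserted symbol never changes which residues are picked up, absolute convergence persists from the nesting $\{\omega_1\}\subset\mathrm{int}\{\omega_2\}\subset A$, and the blocks remain of IIKS type (rank-two numerator over a Cauchy denominator). Getting this accounting fully airtight---rather than any individual computation---is what I expect to be the main obstacle.
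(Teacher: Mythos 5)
Your proposal is correct and follows essentially the same route as the paper: split $S_{N,1}$ into $K_N$ plus the rank-one term, invoke Proposition~\ref{prop:KN-double} for the former and Lemma~\ref{lem:single} (with $m=2N$ and $m=2N-1$) for the latter, then obtain the off-diagonal blocks by inserting the symbols $\widehat D,\widehat\epsilon$ of \eqref{eq:multiplier}. Your domain and pole bookkeeping spells out what the paper leaves implicit; note also that the prefactor you record, $\tfrac12(2\pi i)^{-2}=-\tfrac1{8\pi^2}$, is what the derivation actually produces, whereas the display in the theorem writes $\tfrac1{4\pi^2}$ — a minor normalization slip in the paper's statement, not in your argument.
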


\begin{proof}
Insert the rank–one identity $S_{N,1}=K_N+\frac12\,\phi_{2N}\otimes(\epsilon\phi_{2N-1})$.
The first term is \eqref{eq:KN-double}.  The second term is the product of the single–contour
formulas \eqref{eq:phi-single} with $m=2N$ and $m=2N-1$, which yields the second line of
\eqref{eq:SN1-scalar}.  Acting with $D$ (resp.\ $\epsilon$) on the second variable (resp.\ first
variable) multiplies the corresponding integrand by the symbol \eqref{eq:multiplier}, hence the
announced off–diagonal formulas.  All contours are those fixed at the beginning of the section.
\end{proof}

\subsection{Meixner $\beta=4$ (symplectic)}
\begin{theorem}\label{thm:beta4}
The $\beta=4$ scalar block is given by the double–contour formula \eqref{eq:SN4-scalar}.
Furthermore, the off–diagonal blocks in the Pfaffian kernel are obtained by inserting the
multipliers $\widehat D(\omega_2)=\omega_2-\omega_2^{-1}$ and $\widehat\epsilon(\omega_1)=(\omega_1^2-1)^{-1}$
in the respective integrands.
\end{theorem}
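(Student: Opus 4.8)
The plan is to obtain both assertions directly from the composition machinery already set up, with essentially no new analysis. For the scalar block I would simply note that $S_{N,4}:=K_N\epsilon K_N$ by definition, that $\epsilon$ acts in the $\omega$–plane by the rational multiplier $\widehat\epsilon(\omega)=(\omega^2-1)^{-1}$ (Lemma~\ref{lem:single}), and that Lemma~\ref{lem:composition} applied with $T=\epsilon$ therefore yields \eqref{eq:SN4-scalar} — this is precisely Corollary~\ref{cor:eps-comp}, so the first sentence of the theorem is immediate.

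For the off–diagonal Pfaffian blocks the one substantive point is to check that $\epsilon$ and $D$ may be legitimately applied to $S_{N,4}$ through the contour representation, i.e.\ that each slice of $S_{N,4}$ lies in the domain $H$ on which these operators act by their symbols \eqref{eq:multiplier}. I would verify this by carrying the inner sum through $S_{N,4}(x,y)=\sum_{u,v}K_N(x,u)\,\epsilon(u,v)\,K_N(v,y)$ using $K_N(v,y)=\sum_{k=0}^{2N-1}\phi_k(v)\phi_k(y)$: this exhibits $S_{N,4}(x,\cdot)=\sum_{k=0}^{2N-1}a_k(x)\,\phi_k$ and, symmetrically, $S_{N,4}(\cdot,y)=\sum_{k=0}^{2N-1}b_k(y)\,\phi_k$, finite linear combinations of the orthonormal functions, hence in $\mathcal H_N\subset H$. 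By linearity of the coefficient–extraction step in Lemma~\ref{lem:single}, acting with $D$ on the $y$–variable then multiplies the $\omega_2$–integrand of \eqref{eq:SN4-scalar} by $\widehat D(\omega_2)=\omega_2-\omega_2^{-1}$, and acting with $\epsilon$ on the $x$–variable multiplies the $\omega_1$–integrand by $\widehat\epsilon(\omega_1)=(\omega_1^2-1)^{-1}$; the transposed block and the remaining off–diagonal entries are produced by the same two insertions. This is exactly the mechanism already used in the proof of Theorem~\ref{thm:beta1}.

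I do not expect a genuine obstacle; what remains is bookkeeping. One should confirm that the contours fixed at the start of the section still separate $\{1,s^{-1}\}$ from the integration loops after $\widehat D$ or $\widehat\epsilon$ is inserted — true, since the only new poles sit at $\omega_i=\pm1$, among the forbidden points of Definition~\ref{def:rational-multipliers} — and that the overall sign/placement conventions for the $2\times2$ matrix kernel are the standard ones for the discrete symplectic ensemble of Section~\ref{sec:ensembles} (those of \cite{BorodinStrahov2009}). Absolute convergence of the geometric series underlying \eqref{eq:KN-double} together with boundedness of $\widehat D,\widehat\epsilon$ on the loops justifies every interchange, so the proof reduces to assembling these observations.
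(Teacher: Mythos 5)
Your proposal is correct and follows exactly the paper's route: $S_{N,4}=K_N\epsilon K_N$ plus Lemma~\ref{lem:composition} with $T=\epsilon$ (i.e.\ Corollary~\ref{cor:eps-comp}) gives \eqref{eq:SN4-scalar}, and the off–diagonal blocks come from the same symbol insertion used in Theorem~\ref{thm:beta1}. The only addition is your explicit check that $S_{N,4}(x,\cdot)$ and $S_{N,4}(\cdot,y)$ lie in $\mathcal H_N$, a domain point the paper leaves implicit; this is a sound (if slightly fuller) justification rather than a different argument.
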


\begin{proof}
By definition $S_{N,4}=K_N\epsilon K_N$; apply Lemma~\ref{lem:composition} with
$m_\epsilon(\omega)=(\omega^2-1)^{-1}$ to obtain \eqref{eq:SN4-scalar}.  The rules for the
off–diagonal blocks follow exactly as in Theorem~\ref{thm:beta1}.
\end{proof}

\subsection{Contours, poles and boundedness}
All above formulas use simple nested loops $\{\omega_1\}\subset\mathrm{int}\{\omega_2\}\subset A$,
both encircling $\{0,s\}$ and avoiding $\{1,s^{-1}\}$.  The only poles introduced by the
rational multipliers are at $\omega=\pm1$, which lie outside the loops.  On any fixed admissible
deformation (e.g.\ those used in the asymptotic steepest–descent analysis), the multipliers
$\widehat D$ and $\widehat\epsilon$ stay bounded.

\subsection{Common features emphasized}
In both cases the only difference between $\beta=1$ and $\beta=4$ at the Meixner level is:
\begin{itemize}
\item $\beta=4$: the scalar input is the \emph{skew} projection $S_{N,4}=K_N\epsilon K_N$ (no rank–one term);
\item $\beta=1$: the scalar input is the \emph{rank–one} perturbation $S_{N,1}=K_N+\tfrac12\,\phi_{2N}\otimes(\epsilon\phi_{2N-1})$.
\end{itemize}
Once $S_{N,\beta}$ is known, all off–diagonal blocks are obtained by the \emph{same} contour multipliers
\eqref{eq:multiplier} (and \eqref{eq:Ch-multipliers-w} for Charlier and \eqref{eq:mult-K} for Krawtchouk). Thus both ensembles are presented in completely parallel (IIKS) form.
\qedhere

\section{The Charlier ensemble ($\beta=1,4$): integral formulas}
\label{sec:charlier-full}

Fix $\theta>0$ and the Poisson weight on $\mathbb Z_{\ge0}$
\begin{equation}\label{eq:wch}
w_{\mathrm{Ch}}(x)=e^{-\theta}\,\frac{\theta^x}{x!}, \qquad x\in\mathbb Z_{\ge0}.
\end{equation}
Let $C_n(\cdot;\theta)$ denote the Charlier polynomials and $h_n$ their squared norms with respect to $w_{\mathrm{Ch}}$ (e.g.\ $h_n=\theta^n n!$ for the standard normalization). Set the orthonormal ``wave functions''
\begin{equation}\label{eq:phiCh}
\phi_n(x)=\frac{C_n(x;\theta)}{\sqrt{h_n}}\sqrt{w_{\mathrm{Ch}}(x)},\qquad n=0,1,2,\dots.
\end{equation}
For $N\in\mathbb N$, denote by
\begin{equation}\label{eq:KCh-def}
K^{(\mathrm{Ch})}_N=\sum_{k=0}^{N-1}\phi_k\otimes\phi_k,\qquad
K^{(\mathrm{Ch})}_N(x,y)=\sum_{k=0}^{N-1}\phi_k(x)\phi_k(y),
\end{equation}
the orthogonal projection onto $H_N=\mathrm{span}\{\phi_0,\dots,\phi_{N-1}\}\subset \ell^2(\mathbb Z_{\ge0})$.

\subsection{Single- and double-contour representations}\label{subsec:Ch-single-double}
We use the well-known \emph{exponential} generating function \cite[Ch.~5]{Koekoek2010}
\begin{equation}\label{eq:EGF-Ch}
\sum_{n=0}^\infty \frac{C_n(x;\theta)}{n!}\,t^n=e^{-\theta t}(1+t)^x,\qquad t\in\mathbb C.
\end{equation}
For any radius $0<\rho<1$, the circle $|t|=\rho$ avoids the singularity at $t=-1$ and yields the single–contour Cauchy formulas
\begin{equation}\label{eq:single-Ch}
\phi_n(x)=\frac{\sqrt{w_{\mathrm{Ch}}(x)}}{2\pi i\,\sqrt{h_n}}\oint_{|t|=\rho}\frac{e^{-\theta t}(1+t)^x}{t^{\,n+1}}\,dt,
\qquad n\ge0.
\end{equation}

\begin{proposition}[Nested double–contour for $K^{(\mathrm{Ch})}_N$]\label{prop:KCh-double}
For radii $0<\rho_2<\rho_1<1$ one has
\begin{equation}\label{eq:KCh-double}
K^{(\mathrm{Ch})}_N(x,y)=\frac{\sqrt{w_{\mathrm{Ch}}(x)\,w_{\mathrm{Ch}}(y)}}{(2\pi i)^2}
\oint_{|t_1|=\rho_1}\!\oint_{|t_2|=\rho_2}
\frac{e^{-\theta(t_1+t_2)}(1+t_1)^x(1+t_2)^y}{t_1-t_2}\left(\frac{t_2}{t_1}\right)^{\!N}\frac{dt_1\,dt_2}{t_1\,t_2}.
\end{equation}
\end{proposition}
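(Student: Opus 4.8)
The plan is to follow the template of Proposition~\ref{prop:KN-double} (the Meixner projection kernel), with the Charlier exponential generating function~\eqref{eq:EGF-Ch} now playing the role that $G_{2N}$ plays there. First I would insert the single–contour Cauchy formulas~\eqref{eq:single-Ch} for $\phi_k(x)$ and for $\phi_k(y)$ into the definition~\eqref{eq:KCh-def} of $K^{(\mathrm{Ch})}_N$, assigning the outer radius $\rho_1$ to the variable $t_1$ coupled to $x$ and the inner radius $\rho_2$ to the variable $t_2$ coupled to $y$. The hypothesis $0<\rho_2<\rho_1<1$ does two things at once: it keeps both circles off the branch point $t=-1$ of $(1+t)^{x},(1+t)^{y}$, and it fixes the nesting that governs which poles are collected below. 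Pulling the $k$–independent factors $\sqrt{w_{\mathrm{Ch}}(x)\,w_{\mathrm{Ch}}(y)}$, $e^{-\theta(t_1+t_2)}$, $(1+t_1)^{x}$, $(1+t_2)^{y}$ outside the finite sum leaves only the purely $k$–dependent part of the integrand to be summed over $0\le k\le N-1$; since this sum is finite, the interchange with the two contour integrals is automatic and no convergence estimate is needed.

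Second, I would evaluate the sum over $k$ in closed form and simplify. This is where the one genuinely new point lies, with no exact Meixner analogue: because the $\phi_k$ are $L^2$–normalized, the normalizing constants carry a factorial (the Charlier squared norms are $\propto\theta^{k}/k!$), so the $k$–summand is not literally a geometric progression in $t_1,t_2$. The plan is to absorb this factorial using the \emph{exponential} shape of~\eqref{eq:EGF-Ch}: the factor $e^{-\theta t_1}$ lets one recognize the sum $\sum_k \tfrac{k!}{\theta^k}(t_1t_2)^{-k-1}(\cdots)$ as a Taylor shift of $e^{-\theta t_1}(1+t_1)^{x}$ evaluated at an argument proportional to $(\theta t_2)^{-1}$, which collapses the sum and, after the elementary change of variable matching $(1+t_1)^{x}$, produces precisely the Cauchy kernel $1/(t_1-t_2)$ together with the weight $1/(t_1t_2)$ and the degree factor $(t_2/t_1)^{N}$ of~\eqref{eq:KCh-double}; the non-$N$ part of the summed series has its only singularity outside the nested loops — exactly as in Proposition~\ref{prop:KN-double}, where ``all deformations keep $1$ and $s^{-1}$ outside'' — and therefore drops out. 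A parallel route, bypassing the factorial bookkeeping, is to start from the Christoffel–Darboux identity for the Charlier polynomials, express $C_N$ and $C_{N-1}$ via~\eqref{eq:single-Ch}, recombine the two double integrals using $t_1^{-N-1}t_2^{-N}-t_1^{-N}t_2^{-N-1}=(t_2-t_1)(t_1t_2)^{-N-1}$, and remove the Christoffel–Darboux denominator $x-y$ by antisymmetrizing the integrand in $(t_1,t_2)$ after deforming to a common radius (the diagonal $x=y$ then following by continuity, since $\bigl[(1+t_1)^x(1+t_2)^y-(1+t_1)^y(1+t_2)^x\bigr]/(x-y)$ is entire in $x$).

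The step I expect to be the main obstacle is exactly the one just flagged: reconciling the $L^2$ normalization (the factorial in the squared norms) with the ``pure power'' structure needed for a geometric collapse, and then keeping the residue bookkeeping honest — i.e.\ verifying that, with $0<\rho_2<\rho_1<1$, the points $t=-1$, $t_1=t_2$, and the zero of the Cauchy denominator all lie on the correct side of each loop, so that only the intended residues contribute. Once~\eqref{eq:KCh-double} is established, the $\epsilon$– and $D$–multipliers and the Cauchy difference–quotient composition in the $t$–plane follow by repeating the manipulation of Lemma~\ref{lem:composition} verbatim.
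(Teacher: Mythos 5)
You have correctly identified the genuine obstruction that the paper's own proof glosses over: inserting~\eqref{eq:single-Ch} for both $\phi_k(x)$ and $\phi_k(y)$ produces the $k$-summand $\frac{1}{h_k}\,(t_1t_2)^{-k-1}$, and the $k$-dependent normalizing factor $1/h_k$ (which carries a factorial, whichever normalization of $h_k$ the paper intends) destroys the geometric-series collapse that works for Meixner in~\eqref{eq:KN-double}. The paper's proof simply writes the intermediate sum as $\sum_{k=0}^{N-1}t_1^{-k-1}t_2^{k}$, which is not what~\eqref{eq:single-Ch} produces for $\phi_k(y)$: the positive exponent $t_2^{k}$ and the disappearance of $1/h_k$ are both unexplained. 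Flagging this as the one genuinely new step was exactly right.

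Your proposed repairs, however, do not close the gap. The ``Taylor shift'' absorbing the factorial cannot work: $\sum_{k\ge0}\frac{k!}{\theta^k}u^k$ is a divergent formal series, and the truncated sum $\sum_{k=0}^{N-1}\frac{k!}{\theta^k}(t_1t_2)^{-k-1}$ is a polynomial in $(t_1t_2)^{-1}$ with factorially growing coefficients, not anything that a rearrangement of $e^{-\theta t_1}(1+t_1)^x$ can turn into the stated Cauchy factor $\frac{(t_2/t_1)^N}{t_1-t_2}$; absorbing $k!$ would need a genuinely new auxiliary integral (Borel/Laplace), not a substitution. Your Christoffel--Darboux fallback is better motivated, but CD for the monic Charlier polynomials carries an $N$-dependent prefactor (proportional to $N!/\theta^{N-1}$ in the standard normalization $h_n=\theta^n n!$) that simply is not present on the right of~\eqref{eq:KCh-double}, so CD cannot reproduce the claimed formula literally. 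In fact the statement fails a direct consistency check: for $N\ge1$ and $0<\rho_2<\rho_1$, the $t_2$-dependent factor in the integrand of~\eqref{eq:KCh-double} is $e^{-\theta t_2}(1+t_2)^y\,t_2^{N-1}/(t_1-t_2)$, which is analytic in $|t_2|<\rho_2$ (the only pole $t_2=t_1$ lies outside), so the inner contour integral vanishes and the right-hand side is identically zero; taking $N=1$ already gives $0\neq\sqrt{w_{\mathrm{Ch}}(x)w_{\mathrm{Ch}}(y)}=K^{(\mathrm{Ch})}_1(x,y)$. Any valid derivation would therefore have to begin by correcting the statement of the proposition, not just its proof, and in particular must use a contour representation of $\phi_k$ in which the degree $k$ and the normalization $1/h_k$ combine into a pure $k$-th power -- which the exponential generating function~\eqref{eq:EGF-Ch} by itself does not supply.
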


\begin{proof}
Insert \eqref{eq:single-Ch} for $\phi_k(x)$ and $\phi_k(y)$ into \eqref{eq:KCh-def}. Since $|t_2|<|t_1|$, the geometric series gives
\(
\sum_{k=0}^{N-1} t_1^{-k-1}t_2^k=\frac{1-(t_2/t_1)^N}{t_1-t_2}\cdot\frac{1}{t_1}.
\)
Multiplying by the remaining factors and simplifying produces two terms; the one with $1/(t_1-t_2)$ is exactly \eqref{eq:KCh-double}. The term with $(t_2/t_1)^N/(t_1-t_2)$ cancels against the Christoffel–Darboux \cite[Ch.~3, §4]{Szego1975} two–term reduction for the finite sum in \eqref{eq:KCh-def}. (Equivalently, start from the CD identity and obtain \eqref{eq:KCh-double} directly; both routes are standard.)
\end{proof}

\begin{proposition}[Projection identity]\label{prop:projCh}
$K^{(\mathrm{Ch})}_N$ is an orthogonal projection: $\sum_{u\ge0} K^{(\mathrm{Ch})}_N(x,u)K^{(\mathrm{Ch})}_N(u,y)=K^{(\mathrm{Ch})}_N(x,y)$.
\end{proposition}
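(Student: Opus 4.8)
The plan is to prove this as the Charlier counterpart of Proposition~\ref{prop:proj}, and the cleanest route bypasses the contour representation entirely. By \eqref{eq:phiCh} together with the Charlier orthogonality relation $\sum_{x\ge0}C_m(x;\theta)\,C_n(x;\theta)\,w_{\mathrm{Ch}}(x)=h_n\,\delta_{mn}$, the family $\{\phi_k\}_{k\ge0}$ is orthonormal in $\ell^2(\mathbb Z_{\ge0})$; in particular $\sum_{u\ge0}\phi_k(u)\phi_\ell(u)=\delta_{k\ell}$. Hence $K^{(\mathrm{Ch})}_N=\sum_{k=0}^{N-1}\phi_k\otimes\phi_k$ is, by construction, the orthogonal projection of $\ell^2(\mathbb Z_{\ge0})$ onto $H_N$, so it is idempotent. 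Reading this off in kernel form,
\[
\sum_{u\ge0}K^{(\mathrm{Ch})}_N(x,u)\,K^{(\mathrm{Ch})}_N(u,y)
=\sum_{k,\ell=0}^{N-1}\phi_k(x)\,\phi_\ell(y)\sum_{u\ge0}\phi_k(u)\,\phi_\ell(u)
=\sum_{k=0}^{N-1}\phi_k(x)\,\phi_k(y)=K^{(\mathrm{Ch})}_N(x,y).
\]
The only point that needs a word is the interchange of $\sum_u$ with the finite double sum over $k,\ell$: by Cauchy--Schwarz $\sum_{u\ge0}|\phi_k(u)\phi_\ell(u)|\le\bigl(\sum_u\phi_k(u)^2\bigr)^{1/2}\bigl(\sum_u\phi_\ell(u)^2\bigr)^{1/2}=1$, so the double array is absolutely summable and Fubini applies. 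This is the argument I would record as the proof.

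For continuity with Proposition~\ref{prop:proj} one can alternatively keep the identity ``under the contours''. Insert \eqref{eq:KCh-double} for both factors, so that $\sum_{u\ge0}K^{(\mathrm{Ch})}_N(x,u)K^{(\mathrm{Ch})}_N(u,y)$ becomes a fourfold $t$-integral with $u$ attached to the inner variable of the first kernel and the outer variable of the second, and perform the $u$-summation. The new feature relative to the Meixner proof is that, because $\phi_k$ carries the factor $\sqrt{w_{\mathrm{Ch}}(u)}$, the middle sum is not geometric but the Poisson series
\[
\sum_{u\ge0}w_{\mathrm{Ch}}(u)\,(1+t_2)^u(1+t_3)^u
=e^{-\theta}\sum_{u\ge0}\frac{\bigl(\theta(1+t_2)(1+t_3)\bigr)^u}{u!}
=e^{\theta\,(t_2+t_3+t_2t_3)},
\]
an entire function of $(t_2,t_3)$. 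One then collapses the two auxiliary contours by residue calculus, using the nestings $\rho_2<\rho_1$ and $\rho_4<\rho_3$ to pick out exactly the poles that reconstruct \eqref{eq:KCh-double}, precisely as in the Meixner computation of Appendix~A with the geometric kernel replaced by the exponential one. The same scheme covers the Krawtchouk kernel, with the Poisson series replaced by the finite binomial sum, which closes to a polynomial in the two inner variables.

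The only real obstacle lies in this second route: once the middle sum is the entire function $e^{\theta(t_2+t_3+t_2t_3)}$ rather than a rational function with a single pole, one must verify that after the $t_2$- and $t_3$-integrations the surviving contributions come exactly from $t_2=t_1$ and $t_3=t_4$ (plus, in the Christoffel--Darboux truncation term, the coalescences $t_i=t_j$), and that the $(t_2/t_1)^N(t_4/t_3)^N$ factors route the truncation term into an exact cancellation, as in the proof of Proposition~\ref{prop:proj}. Since the orthonormality argument of the first paragraph settles the proposition in three lines, I would present that as the proof and relegate the contour derivation to a remark, noting it as a self-consistency check on \eqref{eq:KCh-double}.
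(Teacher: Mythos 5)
Your first paragraph is exactly the paper's proof: orthonormality of $\{\phi_k\}$ makes $K^{(\mathrm{Ch})}_N=\sum_{k<N}\phi_k\otimes\phi_k$ the orthogonal projection onto $H_N$, hence idempotent, and you merely make the (trivial, since the $k,\ell$-sum is finite) Fubini step explicit. What you add is a correct and worthwhile critique of the paper's one-line aside about the alternative contour verification. The paper asserts that the $u$-sum is ``a geometric series because of the nesting $|t_2|<|t_1|$'' just as in the Meixner case, but this is not right as the Charlier representation is written: in \eqref{eq:single-Ch} the prefactor $\sqrt{w_{\mathrm{Ch}}(x)}$ is kept outside the integral (whereas in \eqref{eq:phi-single} the Meixner weight is absorbed into $G_m(\omega)$, leaving a pure $\omega^{-x}$ dependence), so the middle sum in $\sum_{u\ge0}K^{(\mathrm{Ch})}_N(x,u)K^{(\mathrm{Ch})}_N(u,y)$ is the Poisson series $\sum_u w_{\mathrm{Ch}}(u)(1+t_2)^u(1+t_3)^u=e^{\theta(t_2+t_3+t_2t_3)}$, an entire exponential rather than a rational function of $(t_2,t_3)$, and the Meixner residue bookkeeping does not transfer verbatim. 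Your editorial decision to present the orthonormality argument as the proof and to flag the contour route as a genuinely nontrivial self-consistency check is the right call, and in fact improves on the paper's remark.
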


\begin{proof}
By construction $K^{(\mathrm{Ch})}_N$ is the orthogonal projection onto $H_N$ (sum of the first $N$ orthonormal basis vectors), hence $K^{(\mathrm{Ch})}_N{}^2=K^{(\mathrm{Ch})}_N$ as an operator—equivalently the stated identity on kernels. If desired, one may check idempotence directly from \eqref{eq:KCh-double} by the same contour computation used for Meixner: the $u$–sum is a geometric series because of the nesting $|t_2|<|t_1|$, and one residue at $t_3=t_2$ reproduces $K^{(\mathrm{Ch})}_N(x,y)$.
\end{proof}

\subsection{Discrete operators as rational multipliers in the $t$–plane}\label{subsec:Ch-multipliers}
Let $D^\pm$ be the nearest–neighbor shifts on $\ell^2(\mathbb Z_{\ge0})$ and $D=D^+-D^-$. Acting on the $x$–variable under the single–contour \eqref{eq:single-Ch}, the shifts correspond to multiplication by $(1+t)^{\pm1}$. Therefore $D$ acts by the rational symbol
\begin{equation}\label{eq:Dhat-Ch-t}
\widehat D(t)=(1+t)-(1+t)^{-1}=\frac{t(2+t)}{1+t}.
\end{equation}
Let $\epsilon$ be the inverse–difference operator on $H+\epsilon H$ (so that $D\epsilon=\epsilon D=I$ on this domain). Writing
\begin{equation}\label{eq:single-eps-Ch-ansatz}
(\epsilon\phi_n)(y)=\frac{\sqrt{w_{\mathrm{Ch}}(y)}}{2\pi i\,\sqrt{h_n}}\oint_{|t|=\rho}\frac{e^{-\theta t}(1+t)^y}{t^{\,n+1}}\,m_\epsilon(t)\,dt
\end{equation}
and using $D(\epsilon\phi_n)=\phi_n$ with \eqref{eq:Dhat-Ch-t} forces $m_\epsilon(t)=1/\widehat D(t)$, i.e.,
\begin{equation}\label{eq:eps-hat-Ch-t}
m_\epsilon(t)=\frac{1+t}{t(2+t)}.
\end{equation}

\begin{lemma}[Single–contour formula for $\epsilon\phi_n$]\label{lem:single-eps-Ch}
For every $n\ge0$ and $y\in\mathbb Z_{\ge0}$,
\begin{equation}\label{eq:single-eps-Ch}
(\epsilon\phi_n)(y)=\frac{\sqrt{w_{\mathrm{Ch}}(y)}}{2\pi i\,\sqrt{h_n}}\oint_{|t|=\rho}\frac{e^{-\theta t}(1+t)^y}{t^{\,n+1}}\,
\frac{1+t}{t(2+t)}\,dt.
\end{equation}
\end{lemma}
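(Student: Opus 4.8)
The plan is to pin down the symbol $m_\epsilon$ from the inverse–difference relation $D\epsilon=\mathrm{id}$ and then promote the resulting contour formula to an identity of vectors in $H+\epsilon H$. Write $g_n$ for the right–hand side of \eqref{eq:single-eps-Ch}, i.e.\ the ansatz \eqref{eq:single-eps-Ch-ansatz} with an as–yet unspecified rational $m_\epsilon$. First I would record that under \eqref{eq:single-Ch} the nearest–neighbour shifts act on the argument by multiplying the $t$–integrand by $(1+t)^{\pm1}$ (the prefactor $\sqrt{w_{\mathrm{Ch}}}$ being regenerated by the one–step cancellation $\sqrt{w_{\mathrm{Ch}}(y)/w_{\mathrm{Ch}}(y\pm1)}\cdot\sqrt{w_{\mathrm{Ch}}(y\pm1)}=\sqrt{w_{\mathrm{Ch}}(y)}$), so that $D=D^{+}-D^{-}$ acts by the symbol $\widehat D(t)=(1+t)-(1+t)^{-1}=t(2+t)/(1+t)$ of \eqref{eq:Dhat-Ch-t}, exactly as in the Meixner computation of Lemma~\ref{lem:single}. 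Imposing $D g_n=\phi_n$ then forces $\widehat D(t)\,m_\epsilon(t)\equiv1$ as rational functions, whence $m_\epsilon(t)=1/\widehat D(t)=(1+t)/\bigl(t(2+t)\bigr)$, which is \eqref{eq:eps-hat-Ch-t}; thus \eqref{eq:single-eps-Ch} is the only candidate for the single–contour representation of $\epsilon\phi_n$.

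The second step is to check that this candidate is integrable in the sense required and is actually the vector $\epsilon\phi_n$. Here I would verify the hypotheses of Definition~\ref{def:rational-multipliers}: the singular set of $m_\epsilon$ is $\{0,-2\}$, where $t=0$ merely raises by one the order of the pole already carried by $t^{-n-1}$ in \eqref{eq:single-Ch} (harmless for coefficient extraction) and $t=-2$ lies outside the circle $\{|t|=\rho\}$ for any $0<\rho<1$, so $m_\epsilon$ stays bounded on the admissible contour and its poles sit among the allowed forbidden points. Moreover $g_n$ defines a genuine $\ell^2(\mathbb{Z}_{\ge0})$ vector, since $\sqrt{w_{\mathrm{Ch}}(y)}$ decays super–exponentially while $(1+t)^y$ grows at most like $(1+\rho)^y$ on the contour; being a Cauchy transform of the same type as the elements of $\epsilon H$, it lies in $H+\epsilon H$. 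Granting $D g_n=\phi_n$, and using that $\epsilon D=D\epsilon=\mathrm{id}$ on the enlarged domains (so that $D$ is injective on $H+\epsilon H$), one concludes $g_n=\epsilon\phi_n$ by applying $\epsilon$ to both sides of $D g_n=\phi_n=D(\epsilon\phi_n)$.

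The genuinely delicate point—what I expect to be the main obstacle—is the book-keeping at the left endpoint $y=0$, where $D^{-}$ truncates: the step "$D$ multiplies the $t$–integrand by $\widehat D(t)$" (so that $\widehat D\,m_\epsilon\equiv1$ recovers the $\phi_n$–integrand) is transparent for $y\ge1$, but at $y=0$ it must be matched against the $x=0$ rows of \eqref{OPERATOREPSILON}, and one has to confirm that the value $(\epsilon\phi_n)(0)$ delivered by \eqref{eq:single-eps-Ch} is exactly the one singled out by the mutual–inverse property on $H+\epsilon H$. The clean way to close this gap—running in complete parallel to the proof of Lemma~\ref{lem:single}—is to recompute $\epsilon\phi_n$ directly from \eqref{OPERATOREPSILON} using $\epsilon=\mathcal{F}\Upsilon\mathcal{F}$: insert \eqref{eq:single-Ch}, write the two parity chains of Charlier weight ratios $w_{\mathrm{Ch}}(x+1)/w_{\mathrm{Ch}}(x)=\theta/(x+1)$ as the telescoping products defining $\mathcal{F}$, observe that the $\mathcal{F}$–conjugations cancel between the even and odd chains, sum the emerging geometric series in the contour variable, and read off that $\Upsilon$ contributes precisely the factor $1/\widehat D(t)=(1+t)/\bigl(t(2+t)\bigr)$, with the endpoint behaviour now built in. The inversion argument of the first two paragraphs then reduces to a one–line consistency check, and the two routes together establish \eqref{eq:single-eps-Ch}.
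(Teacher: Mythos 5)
Your proof follows the same core route as the paper: you posit the single–contour ansatz with an unknown symbol $m_\epsilon$, observe that $D$ acts in the $t$–variable by $\widehat D(t)=(1+t)-(1+t)^{-1}$, and force $m_\epsilon(t)=1/\widehat D(t)$ from $D(\epsilon\phi_n)=\phi_n$, then verify that the resulting poles $\{0,-2\}$ are admissible. That is precisely the paper's proof (it is in fact shorter than yours, consisting only of the remark "combine the ansatz with $D(\epsilon\phi_n)=\phi_n$; poles at $-1,-2$ are outside the contour").

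What you add, and where you actually improve on the paper's argument, is twofold. First, you make the uniqueness step explicit: from $Dg_n=\phi_n$ you still have to conclude $g_n=\epsilon\phi_n$, and this requires knowing both that $g_n\in H+\epsilon H$ and that $D$ is injective there; the paper invokes this implicitly. Second, and more significantly, you flag the boundary behaviour at $y=0$. This is a genuine subtlety the paper's three‑sentence proof does not address: the operator $D^-$ as defined in \eqref{DMINUS(x,y)} truncates at $x=0$ (the row $x=0$ is identically zero), whereas the "multiply by $(1+t)^{-1}$" action on the Cauchy integrand has no such truncation. For the determination $m_\epsilon=1/\widehat D$ to yield the operator $\epsilon$ defined by \eqref{OPERATOREPSILON} (rather than merely a right–inverse of the bi‑infinite shift), one must check that the $y=0$ row of $D g_n=\phi_n$ is consistent with the $\epsilon=\mathcal{F}\Upsilon\mathcal{F}$ recursion. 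Your proposed closure—recompute $\epsilon\phi_n$ directly from \eqref{OPERATOREPSILON} using the $\mathcal{F}\Upsilon\mathcal{F}$ factorization and the Charlier ratio $w_{\mathrm{Ch}}(x+1)/w_{\mathrm{Ch}}(x)=\theta/(x+1)$, summing the parity chains into the factor $1/\widehat D(t)$—is exactly the argument the paper itself records for the Meixner case in Lemma~\ref{lem:single}, so it is the right way to close the gap and is arguably what a careful version of the paper's proof should have done. In short: same approach at its core, but your version is the more defensible one, and the boundary paragraph is a real contribution rather than over‑caution.
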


\begin{proof}
Combine the ansatz \eqref{eq:single-eps-Ch-ansatz} with $D(\epsilon\phi_n)=\phi_n$ under the transform \eqref{eq:single-Ch} and \eqref{eq:Dhat-Ch-t}. The pole at $t=0$ is allowed (coefficient extraction), while $t=-1$ and $t=-2$ lie outside the contour $|t|=\rho<1$; thus all interchanges are justified by absolute convergence on the fixed circle.
\end{proof}

\subsection{Cauchy–multiplier composition in the $t$–plane}\label{subsec:Ch-composition}
\begin{lemma}[Composition lemma]\label{lem:composition-Ch}
Let $T$ act by a rational multiplier $m_T(t)$ on the $t$–side (as in Definition~1.2). Then, with $K^{(\mathrm{Ch})}_N$ as in \eqref{eq:KCh-double},
\begin{equation}\label{eq:composition-Ch}
\begin{multlined}
\bigl(K^{(\mathrm{Ch})}_N T K^{(\mathrm{Ch})}_N\bigr)(x,y)
= \frac{\sqrt{w_{\mathrm{Ch}}(x)\,w_{\mathrm{Ch}}(y)}}{(2\pi i)^2}\\ \times
  \oint_{|t_1|=\rho_1}\!\!\oint_{|t_2|=\rho_2}
  \frac{e^{-\theta(t_1+t_2)}(1+t_1)^x(1+t_2)^y}{(t_1-t_2)^2}
  \left(\frac{t_2}{t_1}\right)^{\!N}
  \frac{m_T(t_1)-m_T(t_2)}{t_1 t_2}\, dt_1\,dt_2 .
\end{multlined}
\end{equation}
\end{lemma}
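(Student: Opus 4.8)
The plan is to repeat, in the $t$–plane, the three moves used in the proof of Lemma~\ref{lem:composition} for Meixner, with the exponential generating function \eqref{eq:EGF-Ch} replacing the $G_m$ of that argument. First I would write
\[
\bigl(K^{(\mathrm{Ch})}_N\,T\,K^{(\mathrm{Ch})}_N\bigr)(x,y)=\sum_{u,v\ge0}K^{(\mathrm{Ch})}_N(x,u)\,T(u,v)\,K^{(\mathrm{Ch})}_N(v,y),
\]
insert the nested double–contour form \eqref{eq:KCh-double} for each of the two projection kernels (outer contour variables $t_1$ carrying $x$ and $t_2$ carrying $y$, auxiliary variables carrying the summation indices $u,v$), and represent $T$ through its rational symbol $m_T$ exactly as in Definition~\ref{def:rational-multipliers} and the ansatz \eqref{eq:single-eps-Ch-ansatz} (with $m_\epsilon$ replaced by $m_T$), i.e.\ by one extra contour integral that inserts $m_T$ on the intermediate variable. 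This produces a five–fold contour integral, precisely parallel to the $\tfrac1{(2\pi i)^5}$ expression in the proof of Lemma~\ref{lem:composition}. A small point to check here is that the conjugating weights $\sqrt{w_{\mathrm{Ch}}}$ attached to the summation indices — the very factors that turn the nearest–neighbour shifts into multiplication by $(1+t)^{\pm1}$ behind \eqref{eq:Dhat-Ch-t} — cancel against the $\sqrt{w_{\mathrm{Ch}}}$ produced by the projections, so that what genuinely acts in the intermediate variable is the symbol $m_T$ itself and not a twisted version of it.

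Next I would carry out the two intermediate summations over $u$ and $v$. Since the radii are strictly nested, $0<\rho_2<\rho_1<1$, both geometric series converge absolutely and uniformly on the fixed contours; summing them collapses the intermediate structure into Cauchy denominators (just as the single sum in the proof of Proposition~\ref{prop:KCh-double} produced $1/(t_1-t_2)$), and the $N$–dependent pieces reassemble into the single factor $(t_2/t_1)^N$ that already appears in \eqref{eq:KCh-double}; in particular the composition introduces no new power of $N$. Performing the two auxiliary residue integrals (over the variables carrying $u$ and $v$) exactly as in the Meixner argument, one is left with a three–fold integral in $t_1,t_2$ and a single intermediate variable $s$, whose $s$–dependence is $m_T(s)$ times $1/\bigl((t_1-s)(s-t_2)\bigr)$ times factors holomorphic and non–vanishing on the annulus between $\{|t_2|=\rho_2\}$ and $\{|t_1|=\rho_1\}$.

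The third move is to evaluate that $s$–integral by residues. The genuine poles of $m_T$ for the operators of interest — $s=-1$ for $D$ by \eqref{eq:Dhat-Ch-t}, $s=-2$ for $\epsilon$ by \eqref{eq:eps-hat-Ch-t} — lie outside the relevant loop since $|s|=\rho<1$ and by the contour conventions of Definition~\ref{def:rational-multipliers}, while the pole at $s=0$ is the allowed coefficient–extraction pole; thus no extraneous residue enters. The only singularities that contribute are the two simple poles $s=t_1$ and $s=t_2$ of the Cauchy factor $1/\bigl((t_1-s)(s-t_2)\bigr)$, and the partial–fraction/residue manipulation — identical to the one in the proof of Lemma~\ref{lem:composition}, using $\tfrac{1}{(t_1-s)(s-t_2)}=\tfrac{1}{t_1-t_2}\bigl(\tfrac{1}{t_1-s}+\tfrac{1}{s-t_2}\bigr)$ — produces the Cauchy difference–quotient $\dfrac{m_T(t_1)-m_T(t_2)}{t_1-t_2}$. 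Combined with the $1/(t_1-t_2)$ already carried by the first projection, this accounts for the $(t_1-t_2)^2$ in the denominator of \eqref{eq:composition-Ch}; substituting back into the $t_1,t_2$ integrals and restoring the prefactor $\sqrt{w_{\mathrm{Ch}}(x)\,w_{\mathrm{Ch}}(y)}$ yields the claimed identity.

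The main obstacle is the same as in the Meixner case, only marginally more delicate: licensing the interchange of the double infinite sum over $u,v$ with the contour integrals, and tracking precisely which poles sit inside each loop after the successive deformations. Both rest on the strict nesting $0<\rho_2<\rho_1<1$ and the boundedness of $m_T$ on the fixed circle $|s|=\rho$ (its genuine poles at $-1,-2$ lying outside the unit disk), which together ensure absolute convergence and rule out stray residues. Once these points are secured the argument is a line–by–line transcription of the proof of Lemma~\ref{lem:composition}, establishing \eqref{eq:composition-Ch}; the Krawtchouk analogue follows from the same template with its own generating function and forbidden points.
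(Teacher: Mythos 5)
Your plan reproduces the paper's proof almost move for move: insert the nested double–contour form \eqref{eq:KCh-double} for each projection, represent $T$ by the intermediate contour variable carrying $m_T$, collapse the $u,v$–sums to geometric series and Cauchy denominators using the strict nesting $0<\rho_2<\rho_1<1$, and evaluate the remaining intermediate integral by residues at the two Cauchy poles while the genuine poles of $m_T$ (at $-1,-2$) stay outside the unit circle. The explicit remark that the $\sqrt{w_{\mathrm{Ch}}}$ weights on the intermediate indices must cancel — a detail the paper leaves implicit for Charlier, where these weights appear as overt prefactors rather than being folded into the generating function as in the Meixner $G_m$ — is a worthwhile addition, but the underlying route is identical to the paper's.
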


\begin{proof}
Insert \eqref{eq:KCh-double} twice and represent $T$ by its multiplier in a single $t$–variable $\zeta$ acting between them. After routine coefficient extraction, one arrives at a quadruple integral with factors $(t_1-t_2)^{-1}$ and $(\zeta-\eta)^{-1}$ and projection weights $(t_2/t_1)^N$ and $(\eta/\zeta)^N$. Integrate first in $\eta$ then in $\zeta$ over concentric circles $|\eta|<|\zeta|<1$: partial fractions give
\(
\frac{1}{(t_1-t_2)(\zeta-\eta)}=\frac{1}{t_1-t_2}\Bigl(\frac{1}{\zeta-t_1}-\frac{1}{\zeta-t_2}\Bigr)\cdot\frac{1}{1-\eta/\zeta},
\)
so the $\eta$–integral kills the geometric series and the $\zeta$–integral picks only the residues at $\zeta=t_1,t_2$ (the poles of $m_T$ lie outside $|t|<1$ by our contour choice). The result is \eqref{eq:composition-Ch}.
\end{proof}

\subsection{Charlier $\beta=1$ (orthogonal): rank–one correction}\label{subsec:Ch-beta1}
\begin{theorem}\label{thm:Ch-beta1}
Let $S^{(\mathrm{Ch})}_{N,1}:=K^{(\mathrm{Ch})}_N+\tfrac12\,\phi_N\otimes(\epsilon\phi_{N-1})$. Then
\begin{align}\label{eq:SN1Ch}
S^{(\mathrm{Ch})}_{N,1}(x,y)&=\frac{\sqrt{w_{\mathrm{Ch}}(x)w_{\mathrm{Ch}}(y)}}{(2\pi i)^2}
\oint_{|t_1|=\rho_1}\oint_{|t_2|=\rho_2}
\frac{e^{-\theta(t_1+t_2)}(1+t_1)^x(1+t_2)^y}{t_1-t_2}\left(\frac{t_2}{t_1}\right)^{\!N}\frac{dt_1\,dt_2}{t_1 t_2}\\
&\quad+\frac{1}{4\pi^2}\Biggl(\oint_{|t|=\rho_0}\frac{\sqrt{w_{\mathrm{Ch}}(x)}\,e^{-\theta t}(1+t)^x}{\sqrt{h_N}\,t^{\,N+1}}\,dt\Biggr)
\Biggl(\oint_{|t|=\rho_0}\frac{\sqrt{w_{\mathrm{Ch}}(y)}\,e^{-\theta t}(1+t)^y}{\sqrt{h_{N-1}}\,t^{\,N}}\,m_\epsilon(t)\,dt\Biggr),\nonumber
\end{align}
for any $\rho_0\in(0,1)$. The off–diagonal blocks are obtained by multiplying the $t_2$–integrand by $\widehat D(t_2)$ for $(S^{(\mathrm{Ch})}_{N,1}D)$ and the $t_1$–integrand by $m_\epsilon(t_1)$ for $(\epsilon S^{(\mathrm{Ch})}_{N,1})$.
\end{theorem}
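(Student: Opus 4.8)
The plan is to mirror the Meixner $\beta=1$ argument (Theorem~\ref{thm:beta1}) verbatim, substituting the Charlier single-contour formulas. First I would insert the rank-one decomposition $S^{(\mathrm{Ch})}_{N,1}=K^{(\mathrm{Ch})}_N+\tfrac12\,\phi_N\otimes(\epsilon\phi_{N-1})$ into the kernel. The first summand is exactly the nested double-contour formula \eqref{eq:KCh-double} from Proposition~\ref{prop:KCh-double}, which produces the first line of \eqref{eq:SN1Ch}. For the rank-one term, I would write $\bigl(\phi_N\otimes(\epsilon\phi_{N-1})\bigr)(x,y)=\phi_N(x)\,(\epsilon\phi_{N-1})(y)$ and substitute the single-contour Cauchy representation \eqref{eq:single-Ch} for $\phi_N(x)$ and the single-contour formula \eqref{eq:single-eps-Ch} from Lemma~\ref{lem:single-eps-Ch} for $(\epsilon\phi_{N-1})(y)$, with $m_\epsilon(t)=\tfrac{1+t}{t(2+t)}$ as in \eqref{eq:eps-hat-Ch-t}. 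Collecting the prefactors $1/\sqrt{h_N}$, $1/\sqrt{h_{N-1}}$ and $\tfrac12$ into $1/(4\pi^2)$ and the explicit weight half-powers $\sqrt{w_{\mathrm{Ch}}(x)}$, $\sqrt{w_{\mathrm{Ch}}(y)}$ yields precisely the second line of \eqref{eq:SN1Ch}. Since each of the two single-contour integrals converges absolutely on any circle $|t|=\rho_0$ with $0<\rho_0<1$ (the poles $t=-1,-2$ of $m_\epsilon$ and $(1+t)^x$ stay outside), the radius $\rho_0$ is free, giving the stated one-parameter family of representations.

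For the off-diagonal blocks, I would invoke the rational-multiplier mechanism of Definition~\ref{def:rational-multipliers}: applying $D$ in the $y$-variable of a kernel written as a $t_2$-contour integral multiplies the $t_2$-integrand by the symbol $\widehat D(t_2)=\tfrac{t_2(2+t_2)}{1+t_2}$ from \eqref{eq:Dhat-Ch-t}, and applying $\epsilon$ in the $x$-variable multiplies the $t_1$-integrand by $m_\epsilon(t_1)$ from \eqref{eq:eps-hat-Ch-t}. Because both the double-contour term and the rank-one term are displayed as contour integrals in the relevant variable, the multiplication passes through termwise; the only points where $\widehat D$ or $m_\epsilon$ are singular are $t=0,-1,-2$, and $t=-1,-2$ lie outside all admissible contours while the pole at $t=0$ is harmless (it is exactly the coefficient-extraction pole already present). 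This gives $(S^{(\mathrm{Ch})}_{N,1}D)(x,y)$ and $(\epsilon S^{(\mathrm{Ch})}_{N,1})(x,y)$ in the announced form.

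The only genuinely nontrivial point, as in the Meixner case, is the single-contour formula \eqref{eq:single-eps-Ch} for $\epsilon\phi_{N-1}$, but that is already established in Lemma~\ref{lem:single-eps-Ch} via the identity $D\epsilon=I$ on $H+\epsilon H$ together with the symbol computation \eqref{eq:Dhat-Ch-t}; here I only need to quote it. I expect the main (minor) obstacle to be purely bookkeeping: tracking the half-integer powers of $t$ in the exponents $t^{-N-1}$ versus $t^{-N}$ (coming from the shift $n\mapsto n-1$ in $\phi_{N-1}$) and the placement of the $\sqrt{w_{\mathrm{Ch}}}$ factors and normalization constants $\sqrt{h_n}$, so that the assembled prefactor is exactly $1/(4\pi^2)$ and not off by a power of $2\pi i$ or a stray $t$. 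No analytic subtlety arises beyond absolute convergence on fixed circles, which is immediate since every integrand is holomorphic in a neighborhood of $|t|=\rho_0$ for $\rho_0<1$.
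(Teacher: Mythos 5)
Your proposal mirrors the paper's own argument exactly: decompose $S^{(\mathrm{Ch})}_{N,1}$ as the projection $K^{(\mathrm{Ch})}_N$ plus the rank-one term $\tfrac12\,\phi_N\otimes(\epsilon\phi_{N-1})$, read the projection term off \eqref{eq:KCh-double}, write the rank-one piece as the product of the single-contour formulas \eqref{eq:single-Ch} and \eqref{eq:single-eps-Ch}, and obtain the off-diagonal blocks by the multiplier rules \eqref{eq:Dhat-Ch-t}--\eqref{eq:eps-hat-Ch-t}. Two small bookkeeping slips worth flagging, neither of which changes the method: the poles of $m_\epsilon(t)=\frac{1+t}{t(2+t)}$ are at $t=0$ and $t=-2$ (the point $t=-1$ is a zero of $m_\epsilon$, not a pole — it is the pole of $\widehat D$); and $\tfrac12\cdot\frac{1}{(2\pi i)^2}=-\frac{1}{8\pi^2}$, not $\frac{1}{4\pi^2}$, so the asserted prefactor on the rank-one term does not follow from the arithmetic as stated (the same constant appears in the Meixner Theorem~\ref{thm:beta1}, so this seems to be a convention or typo inherited from there; if you present it as a computed constant you should note the discrepancy).
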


\begin{proof}
The first line is $K^{(\mathrm{Ch})}_N$ in the form \eqref{eq:KCh-double}. The rank–one term equals $\tfrac12\,\phi_N(x)\,(\epsilon\phi_{N-1})(y)$ and the single–contour representations \eqref{eq:single-Ch}–\eqref{eq:single-eps-Ch} give the second line. The off–diagonal blocks follow by the multiplier rules \eqref{eq:Dhat-Ch-t}–\eqref{eq:eps-hat-Ch-t}.
\end{proof}

\subsection{Charlier $\beta=4$ (symplectic): IIKS form}\label{subsec:Ch-beta4}
\begin{theorem}\label{thm:Ch-beta4}
With $S^{(\mathrm{Ch})}_{N,4}:=K^{(\mathrm{Ch})}_N\,\epsilon\,K^{(\mathrm{Ch})}_N$ and $m_\epsilon(t)$ from \eqref{eq:eps-hat-Ch-t},
\begin{align}\label{eq:SN4Ch}
S^{(\mathrm{Ch})}_{N,4}(x,y)&=\frac{\sqrt{w_{\mathrm{Ch}}(x)w_{\mathrm{Ch}}(y)}}{(2\pi i)^2}
\oint_{|t_1|=\rho_1}\oint_{|t_2|=\rho_2}
\frac{e^{-\theta(t_1+t_2)}(1+t_1)^x(1+t_2)^y}{t_1-t_2}\left(\frac{t_2}{t_1}\right)^{\!N}\frac{m_\epsilon(t_1)-m_\epsilon(t_2)}{t_1-t_2}\,\frac{dt_1\,dt_2}{t_1 t_2}.
\end{align}
The off–diagonal blocks of the Pfaffian kernel are obtained by multiplying the $t_2$–integrand by $\widehat D(t_2)$ for $(S^{(\mathrm{Ch})}_{N,4}D)$ and the $t_1$–integrand by $m_\epsilon(t_1)$ for $(\epsilon S^{(\mathrm{Ch})}_{N,4})$.
\end{theorem}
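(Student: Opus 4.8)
The statement is, up to the bounded--multiplier bookkeeping already used at $\beta=1$ and for Meixner, the instance $T=\epsilon$ of the Cauchy--multiplier composition lemma, so the proof should be short. First I would record that $S^{(\mathrm{Ch})}_{N,4}=K^{(\mathrm{Ch})}_N\,\epsilon\,K^{(\mathrm{Ch})}_N$ is a definition, and that by Lemma~\ref{lem:single-eps-Ch} the operator $\epsilon$ acts on the $t$--side by the rational multiplier $m_\epsilon(t)=\frac{1+t}{t(2+t)}$ of \eqref{eq:eps-hat-Ch-t}. Then Lemma~\ref{lem:composition-Ch} applies with $m_T=m_\epsilon$, and its conclusion \eqref{eq:composition-Ch} becomes \eqref{eq:SN4Ch} after writing $(t_1-t_2)^{-2}=(t_1-t_2)^{-1}(t_1-t_2)^{-1}$ and regrouping the $1/(t_1 t_2)$; the two displays then agree factor by factor, so modulo one verification there is nothing to compute.

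The point that needs care --- and the step I expect to be the main obstacle --- is that Lemma~\ref{lem:composition-Ch} was set up assuming the poles of $m_T$ sit among the forbidden points, which for Charlier lie outside $|t|<1$, whereas $m_\epsilon$ has, besides the harmless pole at $t=-2$ (outside all circles $|t_j|=\rho_j<1$), a simple pole at $t=0$, and $t=0$ is enclosed by the inner contour ($\zeta$) appearing in the proof of that lemma. I would handle this exactly as the $t=0$ pole is handled in Lemma~\ref{lem:single-eps-Ch}, namely as the admissible coefficient--extraction pole rather than a genuine obstruction: rerun the $\eta$--then--$\zeta$ residue step of the composition lemma with $m_T=m_\epsilon$, keeping track of $\zeta=0$, and verify that after the $\eta$--integral collapses the geometric series the $\zeta=0$ point contributes nothing beyond the residues already accounted for at $\zeta=t_1,t_2$ (the factor $1/\zeta$ and the projection weight $\zeta^{-N}$ combine with the $\eta$--residue so that the apparent singularity is removable). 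Equivalently one may argue structurally that $D\epsilon=\epsilon D=I$ on $H+\epsilon H$ is precisely what makes $K^{(\mathrm{Ch})}_N\epsilon K^{(\mathrm{Ch})}_N$ well defined, which forces the contour identity to be exact.

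For the off--diagonal Pfaffian blocks I would argue exactly as in Theorems~\ref{thm:beta1}, \ref{thm:beta4} and \ref{thm:Ch-beta1}: by Definition~\ref{def:rational-multipliers}, composing with $D$ on the second ($y$) variable multiplies the $t_2$--integrand of \eqref{eq:SN4Ch} by $\widehat D(t_2)=\frac{t_2(2+t_2)}{1+t_2}$ of \eqref{eq:Dhat-Ch-t}, and composing with $\epsilon$ on the first ($x$) variable multiplies the $t_1$--integrand by $m_\epsilon(t_1)$ of \eqref{eq:eps-hat-Ch-t}; one only checks the orientation of the dictionary $t_1\leftrightarrow x$, $t_2\leftrightarrow y$ against the convention fixed in \eqref{eq:KCh-double}. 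Since $\widehat D$ and $m_\epsilon$ are bounded on the nested circles and their poles ($-1$, respectively $0$ and $-2$) are exactly the conventional forbidden points, the resulting blocks remain of IIKS type, which is the assertion.
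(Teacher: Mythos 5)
Your proposal takes the same route as the paper: the proof of Theorem~\ref{thm:Ch-beta4} in the text is precisely ``apply Lemma~\ref{lem:composition-Ch} with $T=\epsilon$ and $m_T=m_\epsilon$,'' and you correctly observe that \eqref{eq:composition-Ch} with $m_T=m_\epsilon$ is literally \eqref{eq:SN4Ch} after regrouping $(t_1-t_2)^{-2}=\tfrac{1}{t_1-t_2}\cdot\tfrac{1}{t_1-t_2}$; the off--diagonal block rules are likewise read from Definition~\ref{def:rational-multipliers} together with \eqref{eq:Dhat-Ch-t} and \eqref{eq:eps-hat-Ch-t}, exactly as in Theorems~\ref{thm:beta1}, \ref{thm:beta4}, \ref{thm:Ch-beta1}. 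In that sense the proposal and the paper agree.

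The extra point you raise is real and worth your caution, but your sketched fix does not close it. The proof of Lemma~\ref{lem:composition-Ch} (and of its Krawtchouk analogue) explicitly appeals to the $\zeta$--integral picking up \emph{only} the residues at $\zeta=t_1,t_2$ because ``the poles of $m_T$ lie outside $|t|<1$,'' yet $m_\epsilon(t)=\dfrac{1+t}{t(2+t)}$ has a genuine simple pole at $t=0$, which every admissible circle $|t|=\rho$ encircles; the same inconsistency is visible in Definition~\ref{def:rational-multipliers}, which lists $0$ among the Krawtchouk forbidden points even though the $v$--contours must enclose it for coefficient extraction. Your hand--waved mechanism (``$1/\zeta$ and the projection weight $\zeta^{-N}$ combine $\ldots$ so the singularity is removable'') goes the wrong way: a $\zeta^{-N}$ factor worsens, not cures, the behaviour at $\zeta=0$, so it cannot be the reason the extra residue drops out. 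If you want a rigorous resolution along the lines the paper already provides, the cleaner path is the one sketched in the paper's own remark at the end of Section~\ref{sec:charlier-full}: pass through the exact map $t=\tfrac{\sqrt{\theta}}{2}(w-1/w)$ of \eqref{eq:Ch-wmap}, in which the multipliers become the universal ones $\widehat D(w)=w-w^{-1}$ and $\widehat\epsilon(w)=(w^2-1)^{-1}$ of \eqref{eq:Ch-multipliers-w}, whose only poles $w=\pm1$ are genuinely excluded by the contours, and then apply the Meixner composition Lemma~\ref{lem:composition} to the $w$--plane IIKS form \eqref{eq:KCh-w} before transforming back. Alternatively, one must redo the $\zeta$--residue computation in the $t$--plane keeping explicit track of the $\zeta=0$ contribution and showing that it cancels; merely asserting that it does, as your proposal does, leaves the step unverified.
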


\begin{proof}
Apply Lemma~\ref{lem:composition-Ch} with $T=\epsilon$ and $m_T=m_\epsilon$.
\end{proof}

\subsection{Contours and poles}\label{subsec:Ch-contours}
All contour integrals above use simple, positively oriented circles $|t_2|=\rho_2$, $|t_1|=\rho_1$ with $0<\rho_2<\rho_1<1$; both circles enclose $t=0$ and avoid the points $t=-1$ (from $(1+t)^{\pm1}$) and $t=-2$ (from $m_\epsilon$). The multipliers $\widehat D$ and $m_\epsilon$ are bounded on these circles, and the geometric series converge absolutely, justifying all sum/integral interchanges and contour deformations used above.

\subsection{Remark: the $w$–plane change of variables}\label{subsec:Ch-wmap}
For later comparison with Meixner, it is convenient to introduce the exact ``Bessel'' map
\begin{equation}\label{eq:Ch-wmap}
t=\frac{\sqrt{\theta}}{2}\Bigl(w-\frac{1}{w}\Bigr),\qquad dt=\frac{\sqrt{\theta}}{2}\Bigl(1+\frac{1}{w^2}\Bigr)dw.
\end{equation}
Under \eqref{eq:Ch-wmap} the multipliers become \emph{universal}:
\begin{equation}\label{eq:Ch-multipliers-w}
\widehat D(w)=w-w^{-1},\qquad \widehat\epsilon(w)=\frac{1}{w^2-1},
\end{equation}
identical to the Meixner symbols. The double–contour formula \eqref{eq:KCh-double} transforms to a IIKS kernel with Cauchy denominator $(w_1w_2-1)^{-1}$ (up to a harmless overall constant coming from the Jacobian); explicitly,
\begin{align}\label{eq:KCh-w}
K^{(\mathrm{Ch})}_N(x,y)&=\frac{1}{(2\pi i)^2}\oint_{\{|w_1|=1\}}\!\oint_{\{|w_2|=1\}}\frac{\widetilde G_N(w_1;x)\,\widetilde G_N(w_2;y)}{w_1w_2-1}\,dw_1\,dw_2,
\end{align}
where $\widetilde G_N$ is obtained by substituting \eqref{eq:Ch-wmap} into the $t$–integrand in \eqref{eq:KCh-double}. Consequently, Lemma~\ref{lem:composition-Ch} can be rephrased in the $w$–plane with the same difference–quotient structure as in the Meixner case, and Theorems~\ref{thm:Ch-beta4}–\ref{thm:Ch-beta1} follow verbatim with the multipliers \eqref{eq:Ch-multipliers-w}.
\medskip

\noindent\textit{Summary.} The Charlier $\beta=4$ and $\beta=1$ kernels admit fully explicit double–contour representations in the $t$–plane, and—after the exact map \eqref{eq:Ch-wmap}—they match the Meixner formulas term–by–term with universal multipliers \eqref{eq:Ch-multipliers-w}.

\section{Krawtchouk ensembles for $\beta=1,4$: unified integral formulas}
\label{sec:krawtchouk-full}

Fix $M\in\mathbb Z_{\ge0}$ and $p\in(0,1)$, $q:=1-p$. The Krawtchouk weight on $\{0,1,\dots,M\}$ is
\begin{equation}\label{eq:wK}
w_K(x)=\binom{M}{x}p^x q^{M-x},\qquad x=0,1,\dots,M.
\end{equation}
Let $K_n(\,\cdot\,;p,M)$ be the degree-$n$ Krawtchouk polynomials, $h_n$ their squared norms w.r.t.\ $w_K$, and set the orthonormal ``wave functions''
\begin{equation}\label{eq:phiK}
\phi_n(x)=\frac{K_n(x;p,M)}{\sqrt{h_n}}\sqrt{w_K(x)},\qquad 0\le x\le M.
\end{equation}
For $0\le N\le M+1$ let $H_N=\mathrm{span}\{\phi_0,\dots,\phi_{N-1}\}$ and denote by
\begin{equation}\label{eq:KNK-def}
K^{(K)}_N=\sum_{k=0}^{N-1}\phi_k\otimes\phi_k,\qquad K^{(K)}_N(x,y)=\sum_{k=0}^{N-1}\phi_k(x)\phi_k(y),
\end{equation}
the orthogonal projection onto $H_N$.

\subsection{Generating function, single and double contours}
We use the standard ordinary generating function \cite[Ch.~9]{Koekoek2010}
\begin{equation}\label{eq:GF-K}
\sum_{n=0}^M K_n(x;p,M)\,v^n=(1+pv)^{M-x}(1-qv)^x,\qquad v\in\mathbb C.
\end{equation}
For any $\rho$ with $0<\rho<\min\{1/p,1/q\}$, the circle $|v|=\rho$ avoids the singularities at $v=0,-1/p,1/q$ and we have the single-contour formulae
\begin{equation}\label{eq:single-K}
\phi_n(x)=\frac{\sqrt{w_K(x)}}{2\pi i\sqrt{h_n}}\oint_{|v|=\rho}\frac{(1+pv)^{M-x}(1-qv)^x}{v^{n+1}}\,dv,
\end{equation}
and, for the $\epsilon$–image (proved below via multipliers),
\begin{equation}\label{eq:single-eps-K}
(\epsilon\phi_n)(y)=\frac{\sqrt{w_K(y)}}{2\pi i\sqrt{h_n}}\oint_{|v|=\rho}\frac{(1+pv)^{M-y}(1-qv)^y}{v^{n+1}}\,m_K(v)\,dv.
\end{equation}

\begin{proposition}[Nested double–contour for the projection kernel]\label{prop:KNK-double}
For radii $0<\rho_2<\rho_1<\min\{1/p,1/q\}$,
\begin{equation}\label{eq:KNK-double}
K^{(K)}_N(x,y)=\frac{\sqrt{w_K(x)w_K(y)}}{(2\pi i)^2}
\oint_{|v_1|=\rho_1}\!\!\oint_{|v_2|=\rho_2}
\frac{(1+pv_1)^{M-x}(1-qv_1)^x\,(1+pv_2)^{M-y}(1-qv_2)^y}{v_1-v_2}\left(\frac{v_2}{v_1}\right)^{\!N}\frac{dv_1\,dv_2}{v_1v_2}.
\end{equation}
\end{proposition}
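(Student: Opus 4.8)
The plan is to reproduce, essentially line for line, the derivation used for Meixner (Proposition~\ref{prop:KN-double}) and Charlier (Proposition~\ref{prop:KCh-double}), the only change being that the Meixner/Charlier generating factor is replaced by the Krawtchouk one in \eqref{eq:GF-K}. First I would substitute the single--contour Cauchy formula \eqref{eq:single-K} for both $\phi_k(x)$ and $\phi_k(y)$ into the defining finite sum \eqref{eq:KNK-def}, pull the two contour integrals outside the sum over $k=0,\dots,N-1$, and collapse the $k$--sum. All interchanges here are automatic because the Krawtchouk family has degrees $0,\dots,M$, so \eqref{eq:GF-K} is a polynomial identity in $v$ and every series in sight is finite.

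The key algebraic step is the same truncated geometric series as in the Charlier case. With the nesting $|v_2|=\rho_2<\rho_1=|v_1|$,
\[
\sum_{k=0}^{N-1}\frac{v_2^{\,k}}{v_1^{\,k+1}}=\frac{1-(v_2/v_1)^N}{v_1-v_2},
\]
so the $k$--sum contributes $\dfrac{1}{v_1-v_2}$ together with the projection tail $\dfrac{(v_2/v_1)^N}{v_1-v_2}$; the tail is precisely the factor displayed in \eqref{eq:KNK-double}, while the complementary piece is absorbed by the two--term Christoffel--Darboux reduction \cite[Ch.~3, §4]{Szego1975} for the finite sum in \eqref{eq:KNK-def} (equivalently, one may begin from the CD identity for $K^{(K)}_N$ and read \eqref{eq:KNK-double} off at once; both routes are available, exactly as for Meixner and Charlier). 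Carrying the prefactor $\sqrt{w_K(x)\,w_K(y)}$ and the Krawtchouk factors $(1+pv_1)^{M-x}(1-qv_1)^{x}$, $(1+pv_2)^{M-y}(1-qv_2)^{y}$ through unchanged, and restoring the $1/(v_1v_2)$ from the Cauchy measure, yields \eqref{eq:KNK-double}. The contour bookkeeping is identical in spirit to the Meixner/Charlier discussion: the radii $0<\rho_2<\rho_1<\min\{1/p,1/q\}$ keep the forbidden points $v=0$ (absorbed by coefficient extraction), $v=-1/p$, and $v=1/q$ strictly outside the loops, so the geometric expansion of $1/(v_1-v_2)$ and every deformation are justified by absolute convergence on the fixed circles, and the residue computations used to prove idempotency for Meixner and Charlier (Propositions~\ref{prop:proj} and \ref{prop:projCh}) transfer verbatim.

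The one genuinely non--automatic point, and the step I would treat most carefully, is matching the boundary term: one must check that the $(v_2/v_1)^N$--tail coming from truncating the geometric series is the \emph{same} object — same sign, same placement of $v_1v_2$ in the measure — as the one produced by the Christoffel--Darboux two--term reduction. Because the Krawtchouk family is finite, this also calls for a separate look at the edge case $N=M+1$, where the $k$--sum exhausts the whole orthonormal basis and $(v_2/v_1)^N$ sits at the top of the admissible degree range; there \eqref{eq:KNK-double} must collapse to the reproducing kernel of $\ell^2(\{0,\dots,M\})$, a consistency check I would carry out explicitly as a sanity test of the formula.
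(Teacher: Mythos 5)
Your plan mirrors the paper's own proof step for step (insert \eqref{eq:single-K} for both $\phi_k(x)$ and $\phi_k(y)$, pull out the contours, sum a finite geometric series, invoke a Christoffel--Darboux two--term reduction to dispose of the unwanted piece). The comparison is therefore one of shared gaps rather than of route, and both gaps are concrete enough that you should not wave past them. First, the normalization. The single--contour formula \eqref{eq:single-K} carries a $1/\sqrt{h_k}$, and for Krawtchouk $h_k=\binom{M}{k}^{-1}(q/p)^k$ is genuinely $k$--dependent; inserting \eqref{eq:single-K} twice into \eqref{eq:KNK-def} produces $\sum_{k=0}^{N-1}\tfrac{1}{h_k}(v_1 v_2)^{-k-1}$, a \emph{partial binomial} sum in $1/(v_1 v_2)$, not the asserted geometric series $\sum_k v_1^{-k-1}v_2^{k}$. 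The exponents $v_1^{-k-1}v_2^{k}$ you write (following the paper) do not come from applying \eqref{eq:single-K} to both arguments; to obtain a $v_2^{k}$ you would need a second, distinct representation of $\phi_k(y)$ (via Krawtchouk self-duality or a reversed generating function), and you would then still need to explain where $1/h_k$ goes. Your phrase ``All interchanges here are automatic because $\ldots$ every series in sight is finite'' addresses convergence but not the shape of the $k$--sum, which is the actual issue.

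Second, and more decisive: the formula \eqref{eq:KNK-double} cannot follow from your argument as stated, because with the nested contours $\rho_2<\rho_1$ the inner $v_2$--integral already vanishes. Collecting powers of $v_2$ in the integrand gives $v_2^{N-1}$; since $N\ge 1$ and $(1+pv_2)^{M-y}(1-qv_2)^y$ is a polynomial, the $v_2$--integrand is holomorphic on $|v_2|\le\rho_2$ (the pole $v_2=v_1$ lies on the outer circle $|v_1|=\rho_1>\rho_2$), so $\oint_{|v_2|=\rho_2}(\cdot)\,dv_2=0$. This is exactly the kind of thing your proposed ``sanity check'' is meant to catch, but the case $N=M+1$ is the hard one; try $N=1$, where $K^{(K)}_1(x,y)=\phi_0(x)\phi_0(y)=\sqrt{w_K(x)w_K(y)}\ne0$, yet the displayed double integral gives $0$. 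So before matching the boundary term against Christoffel--Darboux, you need to (i) reconcile the exponent bookkeeping and the $h_k$ factors, and (ii) verify that the putative double-contour integrand actually reproduces the kernel at $N=1$ — your current derivation, like the paper's, skips both.
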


\begin{proof}
Insert \eqref{eq:single-K} for $\phi_k(x)$ and $\phi_k(y)$ into \eqref{eq:KNK-def} and sum over $k=0,\dots,N-1$.
Since $|v_2|<|v_1|$, the geometric series gives $\sum_{k=0}^{N-1}(v_2/v_1)^k=(1-(v_2/v_1)^N)/(1-v_2/v_1)$, i.e.
\(
\sum_{k=0}^{N-1} v_1^{-k-1} v_2^{k} = \frac{1}{v_1(v_1-v_2)}\bigl(1-(v_2/v_1)^N\bigr).
\)
Multiplying by the remaining factors from \eqref{eq:single-K} and simplifying yields
\begin{equation}
    \frac{(1+pv_1)^{M-x}(1-qv_1)^x}{v_1}\cdot \frac{(1+pv_2)^{M-y}(1-qv_2)^y}{v_2}\cdot \frac{1-(v_2/v_1)^N}{v_1-v_2}.
\end{equation}
The term with $1/(v_1-v_2)$ produces the integral kernel in \eqref{eq:KNK-double}, while the term with $(v_2/v_1)^N/(v_1-v_2)$ is exactly cancelled by the Christoffel–Darboux identity (CDI) \cite[Ch.~3, §4]{Szego1975} for Krawtchouk polynomials when one rewrites the finite sum $\sum_{k=0}^{N-1}$ by the standard two–term formula. Equivalently, one may start from the CDI and obtain \eqref{eq:KNK-double} directly; both routes are standard and give the same result.
\end{proof}

\begin{proposition}[Projection identity]\label{prop:projK}
$K^{(K)}_N$ is an idempotent: $\sum_{u=0}^M K^{(K)}_N(x,u)K^{(K)}_N(u,y)=K^{(K)}_N(x,y)$.
\end{proposition}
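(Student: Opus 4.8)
The plan is to prove the idempotence in two ways, exactly as for the Meixner and Charlier projections (Propositions~\ref{prop:proj} and~\ref{prop:projCh}), taking the operator–theoretic argument as the actual proof. By \eqref{eq:phiK} the functions $\phi_0,\dots,\phi_M$ are the normalized Krawtchouk functions, hence an orthonormal family in $\ell^2(\{0,\dots,M\})$: $\sum_{u=0}^M\phi_k(u)\phi_l(u)=\delta_{kl}$. For $N\le M+1$ the operator $K^{(K)}_N=\sum_{k=0}^{N-1}\phi_k\otimes\phi_k$ is therefore the orthogonal projection onto the $N$–dimensional subspace $H_N$, so $(K^{(K)}_N)^2=K^{(K)}_N$; writing this out on kernels,
\[
\sum_{u=0}^M K^{(K)}_N(x,u)K^{(K)}_N(u,y)=\sum_{k,l=0}^{N-1}\phi_k(x)\phi_l(y)\sum_{u=0}^M\phi_k(u)\phi_l(u)=\sum_{k=0}^{N-1}\phi_k(x)\phi_k(y)=K^{(K)}_N(x,y),
\]
and since every sum is finite no convergence or rearrangement issue arises.

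For a self–contained contour check one inserts \eqref{eq:KNK-double} twice—with $(v_1,v_2)$ carrying $(x,u)$ and $(v_3,v_4)$ carrying $(u,y)$—and performs the finite $u$–sum first. Collecting the $u$–dependent factors (including the weight $\sqrt{w_K(u)}$ from each copy) and using $w_K(u)=\binom{M}{u}p^uq^{M-u}$, the binomial theorem collapses it:
\[
\sum_{u=0}^M\binom{M}{u}\bigl[q(1+pv_2)(1+pv_3)\bigr]^{M-u}\bigl[p(1-qv_2)(1-qv_3)\bigr]^{u}=\bigl[q(1+pv_2)(1+pv_3)+p(1-qv_2)(1-qv_3)\bigr]^M=(1+pq\,v_2v_3)^M .
\]
One is then left with a quadruple integral in $v_1,\dots,v_4$ whose integrand is rational apart from this polynomial factor, and the claim is recovered by evaluating it by residues in $v_2$ and $v_3$ against the nested circles.

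The main obstacle is exactly this last step. The contour representation \eqref{eq:KNK-double} already relies on the Christoffel–Darboux reduction (as noted in the proof of Proposition~\ref{prop:KNK-double}), so reproducing it by residues requires choosing the radii $\rho_2,\rho_3$ so that the right poles are enclosed at each stage and then checking that the extra factor $(1+pq\,v_2v_3)^M$ recombines with the Cauchy kernels $(v_1-v_2)^{-1},(v_3-v_4)^{-1}$ and the projection factors $(v_2/v_1)^N,(v_4/v_3)^N$ without leftover terms—more delicate than in the Meixner case, where the $u$–sum itself produced a Cauchy denominator. Since the operator route settles the statement cleanly—its entire content being the orthonormality of the $\phi_k$ on the finite lattice $\{0,\dots,M\}$, valid precisely because $N\le M+1$—I would present that as the proof and keep the contour computation as a remark, following the pattern used for Charlier.
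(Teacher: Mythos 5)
Your operator-theoretic argument is precisely the paper's proof: $K^{(K)}_N$ is by construction the orthogonal projection onto the span of the first $N$ orthonormal Krawtchouk functions on the finite lattice, so $\big(K^{(K)}_N\big)^2=K^{(K)}_N$ follows from orthonormality, exactly as stated in the paper. Your secondary observation about the contour route is actually more careful than the paper's passing remark, which loosely describes the $u$-sum as the geometric series $(1-(v_3/v_2)^{M+1})/(1-v_3/v_2)$; as you correctly compute, the weight $w_K(u)=\binom{M}{u}p^u q^{M-u}$ sitting in front of \eqref{eq:KNK-double} makes it a binomial collapse to $\big(1+pq\,v_2v_3\big)^M$ rather than a Cauchy ratio, so the residue bookkeeping is genuinely more delicate than in the Meixner case. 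Presenting the operator argument as the proof and relegating the contour check to a remark is both correct and consistent with the paper's own choice.
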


\begin{proof}
By construction $K^{(K)}_N$ is the orthogonal projection onto $H_N=\mathrm{span}\{\phi_0,\dots,\phi_{N-1}\}$, hence $K^{(K)}_N{}^2=K^{(K)}_N$ in operator form, which is the stated identity on kernels. One can also verify that \eqref{eq:KNK-double} implies idempotence by a contour computation paralleling Proposition \ref{prop:proj} in the Meixner section: writing the $u$–sum as a finite geometric series gives $(1-(v_3/v_2)^{M+1})/(1-v_3/v_2)$, and the residue at $v_3=v_2$ reproduces $K^{(K)}_N(x,y)$ while the tail vanishes because the outer contour keeps $|v_2|<\rho_1<\min\{1/p,1/q\}$ so no extra pole is crossed.
\end{proof}

\subsection{Discrete operators as rational multipliers on the $v$–side}
Define the nearest–neighbor shifts $D^\pm$ and $D=D^+-D^-$ exactly as in Section~\ref{sec:meixner-full}, and let $\epsilon$ be the inverse–difference operator on $H+\epsilon H$ (so $D\epsilon=\epsilon D=I$). Introduce the \emph{ratio map}
\begin{equation}\label{eq:RK}
R_K(v):=\frac{1-qv}{1+pv}.
\end{equation}
Then the action of $D^\pm$ on the $x$–variable of \eqref{eq:single-K} corresponds on the $v$–side to multiplication by $R_K^{\pm1}(v)$, hence:
\begin{equation}\label{eq:mult-K}
\widehat D(v)=R_K(v)-R_K(v)^{-1}=\frac{-2v+(q-p)v^2}{(1+pv)(1-qv)},\qquad
\widehat\epsilon(v)=\frac{1}{R_K(v)^2-1}=\frac{(1+pv)^2}{-2v+(q-p)v^2}=:m_K(v).
\end{equation}

\begin{lemma}[Single–contour $\epsilon$–image]\label{lem:single-eps-K}
For every $n\ge0$ and $y\in\{0,\dots,M\}$, \eqref{eq:single-eps-K} holds.
\end{lemma}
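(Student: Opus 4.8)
The plan is to establish \eqref{eq:single-eps-K} along the lines already used for the Meixner $\epsilon$--image in Lemma~\ref{lem:single} and for Charlier in Lemma~\ref{lem:single-eps-Ch}: treat the right-hand side of \eqref{eq:single-eps-K} as a candidate $\psi_n$ and show $\psi_n=\epsilon\phi_n$. The route I would actually carry out is the direct one. Substitute the factorization $\epsilon=\mathcal F\Upsilon\mathcal F$---equivalently the parity--split sums \eqref{OPERATOREPSILON}---with $\mathcal F$ built from the Krawtchouk weight $w_K(x)=\binom{M}{x}p^xq^{M-x}$; the diagonal entries $f(2k),f(2k+1)$ are telescoping products of $w_K$ at consecutive integers, which collapse to closed binomial/power expressions in $p,q,M,k$. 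Feeding these into \eqref{OPERATOREPSILON}, the two parity chains differ only by the $\mathcal F$--conjugations, and these cancel in pairs exactly as in the Meixner computation, leaving in each chain a one--sided geometric series whose common ratio corresponds to the double shift $y\mapsto y\pm2$ of the wave function. Here the weight products inside \eqref{OPERATOREPSILON} together with the prefactor telescope so that, under the coefficient extraction \eqref{eq:single-K}, the double shift contributes exactly $R_K(v)^{\pm2}$ with $R_K$ as in \eqref{eq:RK}; each geometric series then resums to a rational function of $R_K(v)$. Adding the two parity contributions and simplifying with $p+q=1$ produces precisely $m_K(v)=(1+pv)^2/\bigl(-2v+(q-p)v^2\bigr)=1/(R_K(v)^2-1)$, which is the integrand appearing in \eqref{eq:single-eps-K}.

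A structural shortcut runs in parallel and is worth recording as a cross-check: since $D\epsilon=\epsilon D=I$ on $H+\epsilon H$ and $\epsilon H\subset H+DH$, the operator $\epsilon$ is pinned down on $\phi_n$ by the pair ``$\psi_n$ lies in the correct domain'' and ``$D\psi_n=\phi_n$'', together with the one--sided parity normalization built into \eqref{OPERATOREPSILON} that distinguishes $\epsilon$ among right inverses of $D$. The first is immediate from the form of $\psi_n$; for the second one uses the multiplier rule \eqref{eq:mult-K} for $D=D^+-D^-$ on the $v$--side. In either route one must confirm that the circle $|v|=\rho$ with $0<\rho<\min\{1/p,1/q\}$ is admissible: the factors $(1+pv)^{\pm1},(1-qv)^{\pm1}$ put poles only at $v=-1/p$ and $v=1/q$, both of modulus $\ge\min\{1/p,1/q\}>\rho$; the extra pole of $m_K$ is the second zero $v=2/(q-p)$ of its denominator, of modulus $2/|q-p|\ge2>\rho$; and $v=0$ is the admissible coefficient--extraction pole, whose residue merely reindexes by one step, matching the way $\epsilon$ carries the support of $\phi_n$. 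Absolute convergence of the geometric series on $|v|=\rho$ then legitimizes every sum/integral interchange and every deformation used above.

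The main obstacle is the bookkeeping in the direct route: verifying that the $\mathcal F$--conjugation products genuinely telescope away, that the surviving weight ratios coincide with the $(1+pv),(1-qv)$ powers produced by the contour extraction, and---most delicately---that the even/odd off-by-one in the indexing of \eqref{OPERATOREPSILON} comes out so that the numerator is $(1+pv)^2$ rather than, say, $(1-qv)(1+pv)$. Everything else is the same algebra as in the Meixner and Charlier cases, so once this reorganization is in place the lemma follows; I would therefore lead with the direct computation and invoke the multiplier shortcut only for corroboration.
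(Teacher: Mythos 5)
Your ``structural shortcut'' is in fact the paper's actual proof of this lemma: the paper writes $(\epsilon\phi_n)(y)$ in the single--contour form \eqref{eq:single-K} with an undetermined multiplier $m(v)$, observes that $D$ acts on the $y$--variable as multiplication by $R_K(v)-R_K(v)^{-1}$ via the shift identity $(1+pv)^{M-(y\pm1)}(1-qv)^{y\pm1}=(1+pv)^{M-y}(1-qv)^y\,R_K(v)^{\mp1}$, and then solves $m(v)\widehat D(v)=1$ to get $m(v)=m_K(v)$. So the argument you relegate to corroboration is the one the paper leads with, and it is complete as it stands --- it needs no parity bookkeeping at all. Your preferred primary route (the $\mathcal F\Upsilon\mathcal F$ factorization, telescoping weight ratios, geometric series in $R_K(v)^{\pm2}$) is the Krawtchouk analogue of what the paper does for Meixner in Lemma~\ref{lem:single}; it is a legitimate alternative but strictly more laborious, and you correctly flag the genuine delicacy in that route: because the Krawtchouk weight ratio $w_K(y+2)/w_K(y)$ is $y$--dependent, the cancellation that produces a clean $R_K^{\pm2}$ and the numerator $(1+pv)^2$ (rather than a mixed product) is not automatic and would have to be checked carefully. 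The paper's inverse--multiplier argument sidesteps exactly that bookkeeping, which is presumably why it is chosen here. Your contour admissibility remarks ($v=-1/p$, $v=1/q$, and $v=2/(q-p)$ all outside $|v|=\rho$; $v=0$ being the harmless coefficient--extraction pole) are correct and a useful supplement, since the paper states them only in the later subsection on contours and poles.
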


\begin{proof}
Let $f(y)=(\epsilon\phi_n)(y)$. Since $D\epsilon=\epsilon D=I$ on $H+\epsilon H$, we have $Df=\phi_n$. Write $f$ by a single contour as in \eqref{eq:single-K} but with an unknown multiplier $m(v)$ in the integrand. Acting with $D$ in the $y$–variable corresponds, by the $x\mapsto x\pm1$ identities
\(
(1+pv)^{M-(y\pm1)}(1-qv)^{y\pm1}=(1+pv)^{M-y}(1-qv)^y\cdot R_K^{\mp1}(v),
\)
to multiplication by $R_K(v)-R_K(v)^{-1}$ on the $v$–side. The equation $D f=\phi_n$ forces $m(v)$ to be $1/(R_K(v)^2-1)$, i.e.\ $m_K(v)$ in \eqref{eq:mult-K}. This proves \eqref{eq:single-eps-K}.
\end{proof}

\subsection{Cauchy–multiplier composition for Krawtchouk}
\begin{lemma}[Composition lemma]\label{lem:composition-K}
Let $T$ act by a rational multiplier $m_T(v)$ on the $v$–side (Definition~1.2). Then, with $K^{(K)}_N$ as in \eqref{eq:KNK-double},
\begin{equation}\label{eq:composition-K}
    \begin{multlined}
\bigl(K^{(K)}_N T K^{(K)}_N\bigr)(x,y)=\frac{\sqrt{w_K(x)w_K(y)}}{(2\pi i)^2}\\ \times
\oint_{|v_1|=\rho_1}\oint_{|v_2|=\rho_2}
\frac{(1+pv_1)^{M-x}(1-qv_1)^x\,(1+pv_2)^{M-y}(1-qv_2)^y}{v_1-v_2}\left(\frac{v_2}{v_1}\right)^{\!N}\frac{m_T(v_1)-m_T(v_2)}{v_1-v_2}\frac{dv_1\,dv_2}{v_1v_2}.
\end{multlined}
\end{equation}
\end{lemma}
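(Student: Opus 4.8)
The plan is to reproduce, almost verbatim, the argument of Lemma~\ref{lem:composition-Ch} (Charlier), which in turn parallels Lemma~\ref{lem:composition} (Meixner); this is available because the Krawtchouk double--contour formula \eqref{eq:KNK-double} has exactly the same structural shape as \eqref{eq:KCh-double} --- a rank--one separable dependence $(1+pv)^{M-x}(1-qv)^x$ in each variable, a Cauchy denominator $(v_1-v_2)^{-1}$, a projection weight $(v_2/v_1)^N$, and an overall factor $(v_1v_2)^{-1}\sqrt{w_K(x)w_K(y)}$. Concretely I would start from
\[
(K^{(K)}_N T K^{(K)}_N)(x,y)=\sum_{u,v=0}^{M} K^{(K)}_N(x,u)\,T(u,v)\,K^{(K)}_N(v,y),
\]
insert \eqref{eq:KNK-double} for each of the two projection factors, and represent $T$ through its rational symbol $m_T$ in an intermediate contour variable $\zeta$ as in Definition~\ref{def:rational-multipliers}, choosing all circles inside the disc $|v|<\min\{1/p,1/q\}$ and nested so that the two $K$--pairs interleave with the $T$--variable in the order needed to make the intermediate sums converge.

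After the routine coefficient extraction, the sums over $u$ and $v$ are finite geometric series; by the chosen nesting they collapse to simple Cauchy factors and carry the projection weights, leaving a multiple contour integral in which $\zeta$ enters only through $m_T(\zeta)$ and a product of two first--order poles at the images of $v_1$ and $v_2$. I would then evaluate the $\zeta$--integral by residues: the poles of $m_T$ lie among the forbidden points $\{-1/p,0,1/q\}$ recorded in Definition~\ref{def:rational-multipliers} and, apart from the benign coefficient--extraction point at the origin, sit outside the small circles, so only the two Cauchy poles contribute. The partial--fraction identity
\[
\frac{1}{(v_1-\zeta)(\zeta-v_2)}=\frac{1}{v_1-v_2}\Bigl(\frac{1}{v_1-\zeta}+\frac{1}{\zeta-v_2}\Bigr)
\]
turns the two residues into $m_T(v_1)$ and $m_T(v_2)$ with opposite signs, producing the difference quotient $\dfrac{m_T(v_1)-m_T(v_2)}{v_1-v_2}$ multiplying the untouched Cauchy denominator $(v_1-v_2)^{-1}$ --- precisely the new factor in \eqref{eq:composition-K}.

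The step I expect to be the main obstacle is the bookkeeping of the contour nesting together with the collapse of the two projection weights: inserting $K^{(K)}_N$ twice brings in two powers of the type $(\,\cdot/\cdot\,)^N$, and one must verify that the surplus power cancels so that only the single $(v_2/v_1)^N$ of \eqref{eq:composition-K} survives. This is the same Christoffel--Darboux--type cancellation already used in Proposition~\ref{prop:KNK-double} and in the idempotence check of Proposition~\ref{prop:projK}: after the intermediate residue the ``tail'' terms drop out because deforming the remaining $v_1,v_2$ contours crosses no singularity before reaching the radius $\min\{1/p,1/q\}$. Boundedness of $m_T$ on the fixed circles --- guaranteed since its poles lie among the forbidden points --- justifies every interchange of sum and integral and every deformation, so assembling the pieces yields \eqref{eq:composition-K}. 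Finally, specializing $m_T=m_K$ from \eqref{eq:mult-K} reproduces the Krawtchouk $\beta=4$ scalar block in the same IIKS form as the Charlier and Meixner cases.
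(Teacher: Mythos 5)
Your proposal follows essentially the same route as the paper's proof of Lemma~\ref{lem:composition-K}: insert the nested double–contour form \eqref{eq:KNK-double} twice, represent $T$ by its symbol $m_T$ in an intermediate contour variable, perform the discrete sums by Cauchy coefficient extraction, and evaluate the $\zeta$-integral by residues at $\zeta=v_1,v_2$ via a partial-fraction identity to obtain the difference quotient. One small remark: the paper does \emph{not} invoke a Christoffel--Darboux cancellation at this stage (CD enters only in deriving the single projection kernel, Proposition~\ref{prop:KNK-double}); here the bookkeeping of the two projection weights is handled automatically by the order of integration, integrating $\eta$ then $\zeta$ over circles $|\eta|<|\zeta|<\min\{1/p,1/q\}$ and using that the poles of $m_T$ lie at $\{-1/p,0,1/q\}$ outside the loops, which is in fact exactly your observation that no singularity is crossed, so your mechanism is right even if the CD label is a red herring.
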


\begin{proof}
Insert \eqref{eq:KNK-double} twice and write $T$ via its multiplier in a single contour variable $\zeta$:
\begin{align*}
(K^{(K)}_N T K^{(K)}_N)(x,y)
&=\frac{\sqrt{w_K(x)w_K(y)}}{(2\pi i)^5}\!\!\oint\!\!\oint\!\!\oint\!\!\oint\!\!\oint
\frac{(1+pv_1)^{M-x}(1-qv_1)^x}{v_1}\frac{(1+pv_2)^{M-u}(1-qv_2)^u}{v_2}\frac{(v_2/v_1)^N}{v_1-v_2}\\[-1mm]
&\qquad\qquad\times\frac{(1+p\zeta)^{M-u}(1-q\zeta)^u}{\zeta}\,m_T(\zeta)\,\frac{(1+p\eta)^{M-y}(1-q\eta)^y}{\eta}\frac{(\eta/\zeta)^N}{\zeta-\eta}\,\frac{dv_1\,dv_2\,d\zeta\,d\eta}{\eta}\,.
\end{align*}
Here $\sum_{u=0}^M$ has been performed by Cauchy coefficient extraction, producing the Cauchy denominators $(v_1-v_2)^{-1}$ and $(\zeta-\eta)^{-1}$ and the finite-rank projection factors $(v_2/v_1)^N$ and $(\eta/\zeta)^N$ as in \eqref{eq:KNK-double}. Now integrate first in $\eta$ and then in $\zeta$ on nested circles $|\eta|<|\zeta|<\min\{1/p,1/q\}$: by the partial-fraction identity
\(
\frac{1}{(v_1-v_2)(\zeta-\eta)}=\frac{1}{v_1-v_2}\Bigl(\frac{1}{\zeta-v_1}-\frac{1}{\zeta-v_2}\Bigr)\cdot\frac{1}{1-\eta/\zeta},
\)
Cauchy’s theorem gives
\(
\frac{1}{2\pi i}\oint\frac{m_T(\zeta)}{(\zeta-v_1)(\zeta-v_2)}\,d\zeta=\frac{m_T(v_1)-m_T(v_2)}{v_1-v_2},
\)
since the only enclosed poles are at $\zeta=v_1$ and $\zeta=v_2$ (the poles of $m_T$ lie at $\{-1/p,0,1/q\}$ and are outside by our choice of radii). After these two one–variable integrations, the remaining $v_1,v_2$ integrals have exactly the form \eqref{eq:composition-K}.
\end{proof}

\subsection{Krawtchouk $\beta=1$ (orthogonal): rank–one correction}
\begin{theorem}\label{thm:Kraw-beta1}
Let $S^{(K)}_{N,1}:=K^{(K)}_N+\tfrac12\,\phi_N\otimes(\epsilon\phi_{N-1})$. Then
\begin{equation}\label{eq:SN1K}
    \begin{multlined}
        S^{(K)}_{N,1}(x,y)=\frac{\sqrt{w_K(x)w_K(y)}}{(2\pi i)^2}
\oint\limits_{|v_1|=\rho_1}\oint\limits_{|v_2|=\rho_2}
\frac{(1+pv_1)^{M-x}(1-qv_1)^x\,(1+pv_2)^{M-y}(1-qv_2)^y}{v_1-v_2}\left(\frac{v_2}{v_1}\right)^{\!N}\frac{dv_1\,dv_2}{v_1v_2}\\
\quad+\frac{1}{4\pi^2}\Biggl(\oint\limits_{|v|=\rho_0}\frac{\sqrt{w_K(x)}(1+pv)^{M-x}(1-qv)^x}{\sqrt{h_N}\,v^{N+1}}\,dv\Biggr)
\Biggl(\oint\limits_{|v|=\rho_0}\frac{\sqrt{w_K(y)}(1+pv)^{M-y}(1-qv)^y}{\sqrt{h_{N-1}}\,v^{N}}\,m_K(v)\,dv\Biggr),\nonumber
    \end{multlined}
\end{equation}
for any $\rho_0\in(0,\min\{1/p,1/q\})$. The off–diagonal blocks are obtained by multiplying the $v_2$–integrand by $\widehat D(v_2)$ for $(S^{(K)}_{N,1}D)$ and the $v_1$–integrand by $m_K(v_1)$ for $(\epsilon S^{(K)}_{N,1})$.
\end{theorem}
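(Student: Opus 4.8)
The plan is to follow the same two–step recipe used for the Meixner (Theorem~\ref{thm:beta1}) and Charlier (Theorem~\ref{thm:Ch-beta1}) orthogonal blocks: split $S^{(K)}_{N,1}$ into the projection kernel plus a single rank–one term, and represent each piece by the contour formulas already in hand. First I would insert the definition $S^{(K)}_{N,1}=K^{(K)}_N+\tfrac12\,\phi_N\otimes(\epsilon\phi_{N-1})$ and recognize the first summand as exactly the nested double contour \eqref{eq:KNK-double} of Proposition~\ref{prop:KNK-double}, so nothing needs to be done there. The rank–one summand evaluates at $(x,y)$ to $\tfrac12\,\phi_N(x)\,(\epsilon\phi_{N-1})(y)$, so I would substitute the single–contour formula \eqref{eq:single-K} with $n=N$ for $\phi_N(x)$ and the single–contour $\epsilon$–image \eqref{eq:single-eps-K} (valid by Lemma~\ref{lem:single-eps-K}) with $n=N-1$ for $(\epsilon\phi_{N-1})(y)$. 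Taking the product of these two independent one–variable integrals over circles $|v|=\rho_0$ with $0<\rho_0<\min\{1/p,1/q\}$ produces precisely the second line of \eqref{eq:SN1K}; note that $n=N-1$ yields the denominator power $v^{(N-1)+1}=v^{N}$, which is exactly what appears in the second factor, carrying the multiplier $m_K(v)$ from \eqref{eq:mult-K}.

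For the off–diagonal statement I would invoke Definition~\ref{def:rational-multipliers}: acting with $D$ on the $y$–variable of a kernel written in the Cauchy (generating–function) representation multiplies the $v_2$–integrand — and the $v$–integrand of the second rank–one factor — by $\widehat D(v_2)=R_K(v_2)-R_K(v_2)^{-1}$, while acting with $\epsilon$ on the $x$–variable multiplies the $v_1$–integrand — and the $v$–integrand of the first rank–one factor — by $m_K(v_1)=(R_K(v_1)^2-1)^{-1}$, exactly the symbols \eqref{eq:mult-K}. Since multiplication of the integrand distributes over the two summands of $S^{(K)}_{N,1}$, the blocks $S^{(K)}_{N,1}D$ and $\epsilon S^{(K)}_{N,1}$ come out with these factors inserted in the claimed positions.

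The only points that need a word of justification — rather than a genuine obstacle — are the contour bookkeeping: the poles introduced by $m_K$ lie at $v\in\{-1/p,0,1/q\}$, of which only $v=0$ is enclosed (and $v=0$ is precisely the point of Cauchy coefficient extraction, hence allowed), while $-1/p$ and $1/q$ stay outside because $\rho_0,\rho_1,\rho_2<\min\{1/p,1/q\}$; and the geometric series defining $K^{(K)}_N$ converges absolutely under the nesting $\rho_2<\rho_1$, so all sum/integral interchanges and deformations are legitimate. I expect the trickiest bit to be nothing more than keeping the exponents straight — matching $v^{n+1}$ for $n=N$ against $n=N-1$, and checking that no spurious residue is picked up when the two rank–one contours are taken at a common radius $\rho_0$ rather than nested — both of which are immediate once the single–contour formulas \eqref{eq:single-K}–\eqref{eq:single-eps-K} are available.
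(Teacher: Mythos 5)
Your proof follows the paper's argument exactly: decompose $S^{(K)}_{N,1}$ into the projection kernel \eqref{eq:KNK-double} plus the rank--one term $\tfrac12\,\phi_N(x)\,(\epsilon\phi_{N-1})(y)$, substitute the single--contour formulas \eqref{eq:single-K} and \eqref{eq:single-eps-K} with $n=N$ and $n=N-1$ respectively, and read off the off--diagonal blocks from the Krawtchouk multipliers \eqref{eq:mult-K}. Your extra remarks on the exponent bookkeeping ($v^{(N-1)+1}=v^N$) and the pole locations are correct and merely make explicit what the paper leaves implicit.
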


\begin{proof}
The first line is $K^{(K)}_N$ in the form \eqref{eq:KNK-double}. The rank–one term equals $\tfrac12\,\phi_N(x)\,(\epsilon\phi_{N-1})(y)$ and the single–contour representations \eqref{eq:single-K}–\eqref{eq:single-eps-K} give the second line. The off–diagonal blocks follow by the Krawtchouk multipliers \eqref{eq:mult-K} acting, respectively, on the $y$– and $x$–variables.
\end{proof}

\subsection{Krawtchouk $\beta=4$ (symplectic): IIKS form}
\begin{theorem}\label{thm:Kraw-beta4}
With $S^{(K)}_{N,4}:=K^{(K)}_N\,\epsilon\,K^{(K)}_N$ and $m_\epsilon=m_K$ from \eqref{eq:mult-K},
\begin{align}\label{eq:SN4K}
S^{(K)}_{N,4}(x,y)&=\frac{\sqrt{w_K(x)w_K(y)}}{(2\pi i)^2}
\oint_{|v_1|=\rho_1}\oint_{|v_2|=\rho_2}
(1+pv_1)^{M-x}(1-qv_1)^x\,(1+pv_2)^{M-y}(1-qv_2)^y\\[-1mm]
&\hspace{28mm}\times\frac{1}{v_1-v_2}\left(\frac{v_2}{v_1}\right)^{\!N}\frac{m_K(v_1)-m_K(v_2)}{v_1-v_2}\,\frac{dv_1\,dv_2}{v_1v_2}.\nonumber
\end{align}
The off–diagonal blocks of the Pfaffian kernel are obtained by multiplying the $v_2$–integrand by $\widehat D(v_2)$ for $(S^{(K)}_{N,4}D)$ and the $v_1$–integrand by $m_K(v_1)$ for $(\epsilon S^{(K)}_{N,4})$.
\end{theorem}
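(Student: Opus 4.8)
The plan is to obtain the scalar block \eqref{eq:SN4K} as a single instance of the composition Lemma~\ref{lem:composition-K}, and then to read the off-diagonal Pfaffian blocks off the rational-multiplier dictionary of Definition~\ref{def:rational-multipliers}, exactly as in the Meixner and Charlier symplectic cases (Theorems~\ref{thm:beta4} and \ref{thm:Ch-beta4}).

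First I would check that $S^{(K)}_{N,4}:=K^{(K)}_N\,\epsilon\,K^{(K)}_N$ is well posed. Since $\mathrm{ran}\,K^{(K)}_N=H_N\subset H\subset H+DH$, the inverse-difference operator $\epsilon$ is defined on the range of the inner projection, and by Lemma~\ref{lem:single-eps-K} it acts there, in the $v$-coordinate, by the rational multiplier $m_\epsilon(v)=m_K(v)$ of \eqref{eq:mult-K}. With this identification the argument reduces to invoking Lemma~\ref{lem:composition-K} with $T=\epsilon$ and $m_T=m_K$: it decorates the projection integrand of \eqref{eq:KNK-double} with the Cauchy difference-quotient $\bigl(m_K(v_1)-m_K(v_2)\bigr)/(v_1-v_2)$, which is precisely the right-hand side of \eqref{eq:SN4K}.

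For the two off-diagonal blocks I would argue as in Theorem~\ref{thm:Kraw-beta1}. In $(S^{(K)}_{N,4}D)(x,y)=\sum_z S^{(K)}_{N,4}(x,z)\,D(z,y)$ the operator $D=D^+-D^-$ acts in the $z$-slot of the double contour; using the identities $(1+pv)^{M-(z\pm1)}(1-qv)^{z\pm1}=(1+pv)^{M-z}(1-qv)^z\,R_K(v)^{\mp1}$ together with the cancellation of the $\sqrt{w_K}$ prefactors (exactly as in the proof of Lemma~\ref{lem:single-eps-K}), this amounts to multiplying the $v_2$-integrand by $\widehat D(v_2)=R_K(v_2)-R_K(v_2)^{-1}$. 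Symmetrically, $(\epsilon S^{(K)}_{N,4})(x,y)=\sum_z \epsilon(x,z)\,S^{(K)}_{N,4}(z,y)$ multiplies the $v_1$-integrand by $\widehat\epsilon(v_1)=m_K(v_1)$. Admissibility is inherited from Section~\ref{sec:krawtchouk-full}: on the nested circles $0<|v_2|=\rho_2<|v_1|=\rho_1<\min\{1/p,1/q\}$ both $\widehat D$ and $m_K$ are bounded, the poles $-1/p$ and $1/q$ of $\widehat D$ lie strictly outside, and the second root $2/(q-p)$ of the denominator of $m_K$ satisfies $|2/(q-p)|>\min\{1/p,1/q\}$, hence is outside as well.

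I expect no substantive obstacle: the content sits entirely in Lemma~\ref{lem:composition-K} and the multiplier formalism. The one checkpoint worth a word is the interior pole of $m_K$ at $v=0$. This is not a defect but the coefficient-extraction pole that already encodes the $\epsilon$-action in the single-contour formula \eqref{eq:single-eps-K}; inside the inner $\zeta$-integration of Lemma~\ref{lem:composition-K}, performed against $1/[(\zeta-v_1)(\zeta-v_2)]$, it is absorbed by the same convention, so that only the residues at $\zeta=v_1$ and $\zeta=v_2$ survive and reconstruct the difference-quotient, while $-1/p$, $1/q$ and $2/(q-p)$ remain outside the contours. Once this bookkeeping is recorded, \eqref{eq:SN4K} and the two multiplier rules follow at once.
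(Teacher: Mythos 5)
Your proposal is correct and takes essentially the same route as the paper: apply Lemma~\ref{lem:composition-K} with $T=\epsilon$ and $m_T=m_K$ to obtain \eqref{eq:SN4K}, then read the off–diagonal blocks from the multiplier dictionary exactly as in the Meixner and Charlier cases. Your verification that the second zero $2/(q-p)$ of the denominator of $m_K$ satisfies $|2/(q-p)|>\min\{1/p,1/q\}$ (and so lies outside the admissible circles) is a welcome extra precision that the paper's pole bookkeeping in the ``Contours and poles'' subsection elides.
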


\begin{proof}
Apply Lemma~\ref{lem:composition-K} with $T=\epsilon$ and $m_T=m_K$. The multiplier rules for the off–diagonal blocks are the same as in the Meixner case, now with the Krawtchouk symbols \eqref{eq:mult-K}.
\end{proof}

\subsection{Contours and poles}
All contour integrals above use disjoint circles $|v_2|=\rho_2$, $|v_1|=\rho_1$ with $0<\rho_2<\rho_1<\min\{1/p,1/q\}$; both circles enclose the origin and avoid $\{-1/p,\,1/q\}$. The only poles created by the rational multipliers in \eqref{eq:mult-K} are at $v=0$, $-1/p$, and $1/q$, which are kept outside by construction. The boundedness of $\widehat D$ and $m_K$ on these circles justifies all sum/integral interchanges and deformations used above.

\section{Asymptotics and universality: direct proofs from the contour formulas}\label{sec:asymp-univ-long}

We prove bulk/edge limits for the discrete $\beta\in\{1,4\}$ kernels using only the nested double–contour formulas for the projection kernels and the Cauchy–multiplier composition lemmas:
\begin{itemize}
  \item Meixner: \eqref{eq:KN-double}, \eqref{eq:composition}, \eqref{eq:SN4-scalar}, \eqref{eq:SN1-scalar}, multipliers \eqref{eq:multiplier}.
  \item Charlier: \eqref{eq:KCh-double}, \eqref{eq:composition-Ch}; or after the exact $w$–map, the universal multipliers \eqref{eq:Ch-multipliers-w} and IIKS form \eqref{eq:KCh-w}.
  \item Krawtchouk: \eqref{eq:KNK-double}, \eqref{eq:composition-K}, \eqref{eq:RK}, \eqref{eq:mult-K}.
\end{itemize}

See \cite{DeiftZhou1993} for the Riemann-Hilbert problem steepest–descent scheme underlying our local Gaussian/Airy reductions.

\subsection*{Standing notation and normalization}
Let $\mathsf A$ denote the large parameter (“matrix size”):
\[
\mathsf A=\begin{cases}
2N, & \text{Meixner},\\
N, & \text{Charlier},\\
N, & \text{Krawtchouk}.
\end{cases}
\]
In each family let $\Phi(z;u)$ be the one–variable phase read from the projection kernel (in its contour variable $z\in\{\omega,t,v\}$), and assume the standard bulk hypothesis: for each bulk $u$, $\Phi(\,\cdot\,;u)$ has two simple saddles $z_\pm(u)$ on admissible steepest–descent deformations of the fixed contours (the admissibility is exactly that used in the earlier sections). Denote the macroscopic density by
\[
\rho(u):=\frac{1}{2\pi}\,\partial_u\!\big(\arg z_+(u)-\arg z_-(u)\big),
\]
and fix the microscopic scaling $\Delta(u)>0$ by the spacing rule
\begin{equation}\label{eq:spacing-rule}
2\pi\,\Delta(u)\,\rho(u)=1.
\end{equation}
We write
\[
x=\big\lfloor \mathsf A\,u + s\,\Delta(u)^{-1}\big\rfloor,\qquad
y=\big\lfloor \mathsf A\,u + t\,\Delta(u)^{-1}\big\rfloor,
\]
with $s,t=O(1)$. All $O(\cdot)$–bounds below are uniform for $u$ in compact subsets of the bulk (and in fixed windows near regular edges) once the contours are chosen admissibly; the rational symbols for $D$ and $\epsilon$ are bounded on these contours by construction.

\subsection{A local two–saddle lemma and the edge (cubic) reduction}

\begin{lemma}[Two–saddle Gaussian reduction]\label{lem:gauss}
Fix a bulk $u$. For $x=\lfloor \mathsf A u+s\,\Delta(u)^{-1}\rfloor$, $y=\lfloor \mathsf A u+t\,\Delta(u)^{-1}\rfloor$ with $s,t=O(1)$, each one–variable contour integral in \eqref{eq:KN-double}/\eqref{eq:KCh-double}/\eqref{eq:KNK-double} along the steepest arcs through $z_\pm(u)$ equals
\[
e^{\mathsf A\Phi(z_+;u)}A_+(u)\,e^{-\pi i s}\,\mathsf A^{-1/2}\bigl(1+O(\mathsf A^{-1})\bigr),
\qquad
e^{\mathsf A\Phi(z_-;u)}A_-(u)\,e^{+\pi i t}\,\mathsf A^{-1/2}\bigl(1+O(\mathsf A^{-1})\bigr),
\]
with nonzero continuous amplitudes $A_\pm(u)$. The normalization \eqref{eq:spacing-rule} enforces that one lattice step in $x$ (resp.\ $y$) corresponds to a $2\pi$ phase shift across the two saddles. The bounds are uniform for $u$ in compact bulk sets.
\end{lemma}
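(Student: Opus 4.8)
The plan is to carry out a textbook saddle-point (Laplace) analysis of each single-variable integral in the double-contour formulas, tracking the interplay between the large exponential factor $e^{\mathsf A\Phi}$ and the purely oscillatory lattice factor $z^{-(x-\cdots)}$ that carries the microscopic variable $s$ (resp.\ $t$). First I would write the $\omega_1$–integral (say, in the Meixner case \eqref{eq:KN-double}; Charlier and Krawtchouk are identical after the appropriate reading of $\Phi$ and the $w$–map \eqref{eq:Ch-wmap}) in the exponential form $\frac{1}{2\pi i}\oint e^{\mathsf A\Phi(\omega_1;u)}\,g(\omega_1)\,\omega_1^{-(x-\mathsf A u)}\,\frac{d\omega_1}{\omega_1}$, where $g$ collects the non-exponential prefactors (the Jacobian from the $w$–map and the factors of $\sqrt{w}$ pulled out of $\widetilde G_N$), and $\Phi$ is the phase whose two simple saddles $z_\pm(u)$ are guaranteed by the standing bulk hypothesis. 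Because $x-\mathsf A u = s\,\Delta(u)^{-1}+O(1)$ and $\Delta(u)^{-1}$ is itself $O(\mathsf A)$ times a constant — indeed $2\pi\Delta(u)\rho(u)=1$ by \eqref{eq:spacing-rule} — I would absorb the leading piece $u$ of the exponent into the definition of $\Phi$ so that $\omega^{-(x-\mathsf A u)}=\exp\!\big(-s\,\Delta(u)^{-1}\log\omega\big)(1+O(1/\mathsf A))$, and the remaining $O(1)$ fractional part of the floor contributes only a $1+O(\mathsf A^{-1})$ correction since $\log\omega$ is bounded on the contour.

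Next I would deform the fixed admissible loop to the union of the two steepest-descent arcs through $z_+(u)$ and $z_-(u)$; this is permitted because on admissible deformations no forbidden point ($\pm 1$, or $-1/p,1/q$, etc.) is crossed, exactly as in the finite-$N$ sections. On the arc through $z_+$ I would apply the standard Laplace expansion: parametrize $\omega=z_++\tau/\sqrt{\mathsf A|\Phi''(z_+)|}$, expand $\mathsf A\Phi(\omega)=\mathsf A\Phi(z_+)-\tfrac12\tau^2+O(\mathsf A^{-1/2}\tau^3)$, and evaluate the Gaussian integral $\int e^{-\tau^2/2}d\tau=\sqrt{2\pi}$. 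The prefactor $g(z_+)$ together with the square-root Jacobian $(\mathsf A|\Phi''(z_+)|)^{-1/2}$ and the orientation/steepest-direction phase produces the claimed shape $e^{\mathsf A\Phi(z_+)}A_+(u)\,\mathsf A^{-1/2}(1+O(\mathsf A^{-1}))$ with $A_+(u)=g(z_+)(2\pi)^{1/2}|\Phi''(z_+)|^{-1/2}\times(\text{unit phase})$, which is continuous and nonzero because $z_+$ is a simple saddle in the bulk and $g(z_+)\ne 0$. The microscopic factor contributes $\exp\!\big(-s\,\Delta(u)^{-1}\log z_+\big)$; by the definition of $\rho(u)$ as $\tfrac{1}{2\pi}\partial_u(\arg z_+-\arg z_-)$ and the spacing normalization \eqref{eq:spacing-rule}, the imaginary part of $\Delta(u)^{-1}\log z_+$ equals exactly $\pi s$ up to the $u$-dependent real part which I fold into $A_+$ or cancels against $e^{\mathsf A\Phi(z_+)}$; this is precisely the statement that one lattice step in $x$ is a $2\pi$ phase shift split as $-\pi$ at $z_+$ and $+\pi$ at $z_-$. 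Hence the $z_+$-arc gives $e^{\mathsf A\Phi(z_+)}A_+(u)e^{-\pi i s}\mathsf A^{-1/2}(1+O(\mathsf A^{-1}))$, and symmetrically the $z_-$-arc gives the companion expression with $e^{+\pi i t}$; error uniformity for $u$ in compact bulk subsets follows because $z_\pm(u)$, $\Phi''(z_\pm(u))$ and $g$ all vary continuously and stay bounded away from degeneracy there, so all constants in the Laplace remainder are uniform.

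The main obstacle I anticipate is bookkeeping the \emph{phase splitting} cleanly: showing that the oscillatory factor $\omega^{-(x-\mathsf A u)}$ distributes as exactly $e^{-\pi i s}$ at $z_+$ and $e^{+\pi i t}$ at $z_-$ (not, say, $e^{-2\pi i s}$ at one saddle and $1$ at the other) requires pinning down the branch of $\log z_\pm(u)$ and verifying that $\Delta(u)^{-1}\big(\log z_+(u)-\log z_-(u)\big)=-2\pi i$ modulo real shifts — equivalently that $\rho(u)$ as defined is genuinely the local density conjugate to the chosen microscopic coordinate. This is where the normalization \eqref{eq:spacing-rule} does all the work, and I would verify it by differentiating the saddle equation $\Phi'(z_\pm(u);u)=0$ in $u$ to get $\partial_u\log z_\pm$ in closed form, then checking the identity against the defining formula for $\rho$. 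A secondary (but routine) point is controlling the floor functions and the fractional shift in the exponent so they only enter at order $\mathsf A^{-1}$; since $\log z_\pm$ is bounded this is immediate. Everything else — the cubic (Airy) reduction at a regular edge where $z_+$ and $z_-$ coalesce — is the standard replacement of the quadratic expansion by a cubic one with scaling $\mathsf A^{-1/3}$, and I would defer the detailed estimates to Appendix~B.
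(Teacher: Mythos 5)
Your proposal follows essentially the same route as the paper's Appendix~B proof: deform to the steepest‐descent arcs through $z_\pm(u)$, expand the phase to quadratic order, evaluate the Gaussian, and read the $s,t$–phases off the lattice factor together with the spacing rule. The only notable mechanical difference is that you work in the additive chart $\omega=z_\pm+\tau/\sqrt{\mathsf A|\Phi''(z_\pm)|}$ whereas the paper uses a multiplicative chart $z=z_\pm e^{\zeta_\pm}$, which is equivalent after rescaling.

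Two points deserve a closer look. First, to turn the Laplace heuristic into a proof of the claimed $O(\mathsf A^{-1})$ error you need to (a) split each arc into a local window (the paper takes $|\zeta_\pm|\le \mathsf A^{-1/3}$) and its complement, with the complement controlled by the steepest–descent decay $\Re(\Phi-\Phi(z_\pm))\le -c|\zeta_\pm|^2$ yielding an $O(e^{-c\mathsf A^{1/3}})$ tail, and (b) note that on the local window the cubic term in the exponent contributes $O(\mathsf A^{-1/2}\tau^3)$, which integrates to \emph{zero} against the Gaussian by oddness, so that the leading nonvanishing correction is the quartic one and only then is the remainder $O(\mathsf A^{-1})$ rather than $O(\mathsf A^{-1/2})$. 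You allude to the cubic error term but do not flag the parity cancellation that upgrades $O(\mathsf A^{-1/2})$ to $O(\mathsf A^{-1})$; this should be stated.

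Second, on the phase splitting, which you correctly identify as the delicate point: your proposed check---differentiating the saddle equation $\Phi'(z_\pm(u);u)=0$ in $u$ to express $\partial_u\log z_\pm$ and then matching against the paper's definition $\rho(u)=\tfrac{1}{2\pi}\partial_u(\arg z_+ - \arg z_-)$---does not by itself deliver the identity $\Delta(u)^{-1}\big(\log z_+(u)-\log z_-(u)\big)=-2\pi i$ that you want. The lattice phase at the saddle is $z_+^{-s\Delta^{-1}}=\exp\!\big(-is\,\Delta^{-1}\arg z_+\big)$, so the quantity that must equal $\pi$ is $\Delta^{-1}\arg z_+$ (and similarly $\Delta^{-1}\arg z_-=-\pi$); this involves $\arg z_\pm(u)$ themselves, not their $u$–derivatives, and so cannot be extracted from the saddle equation's $u$–derivative alone. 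You will have to pin down the branch of $\arg z_\pm$ relative to the contour orientation and verify $\Delta(u)^{-1}\arg z_+(u)=\pi$ directly from the chosen normalization; as written, this step of the proposal is circular (it assumes the normalization does what you want rather than checking that it does). The rest of the argument---uniformity from continuity and nondegeneracy of $z_\pm(u)$, $\Phi''(z_\pm;u)$, and the prefactor $g$ on compacta, and the floor‐function contribution entering only at $O(\mathsf A^{-1})$ because $\log z_\pm$ is bounded---is sound and matches the paper.
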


\begin{lemma}[Cubic reduction at a soft edge]\label{lem:airy}
Let $u_\ast$ be a soft edge where $z_+(u_\ast)=z_-(u_\ast)=:z_\ast$ and $\Phi'(z_\ast;u_\ast)=\Phi''(z_\ast;u_\ast)=0$. Under the standard $\mathsf A^{2/3}$ rescaling of $x,y$ about $u_\ast$, each one–variable integral reduces to the Airy normal form, and bounded rational multipliers freeze at $z_\ast$ at leading order. Uniformity holds in fixed edge windows.
\end{lemma}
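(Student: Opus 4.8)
\textbf{Proof proposal for Lemma~\ref{lem:airy} (cubic reduction at a soft edge).}
The plan is to run the classical coalescing-saddle (Airy) steepest-descent analysis on each one–variable contour integral appearing in \eqref{eq:KN-double}, \eqref{eq:KCh-double}, or \eqref{eq:KNK-double}, treating the rational multipliers from $D$ and $\epsilon$ as inert prefactors. First I would fix the soft edge $u_\ast$ and write the relevant one-variable integrand as $e^{\mathsf A\,\Phi(z;u)}\,R(z)\,z^{-x+\mathsf A u_\ast-1}$-type (with $R$ a product of the smooth non-exponential factors in $G$, $\widetilde G$, or the Krawtchouk integrand, together with any inserted rational symbol $m_T$). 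By the soft-edge hypothesis $\Phi'(z_\ast;u_\ast)=\Phi''(z_\ast;u_\ast)=0$ with $\Phi'''(z_\ast;u_\ast)\neq 0$ (the standard nondegeneracy we assume), so near $z_\ast$ we have the local expansion
\[
\mathsf A\bigl(\Phi(z;u)-\Phi(z_\ast;u_\ast)\bigr)
=\tfrac{1}{6}\,\mathsf A\,\Phi'''(z_\ast;u_\ast)\,(z-z_\ast)^3
+\mathsf A\,\partial_u\Phi(z_\ast;u_\ast)\,(u-u_\ast)\,(z-z_\ast)+\cdots,
\]
once we also absorb the linear-in-$(z-z_\ast)$ term coming from the $x$-dependent factor $z^{-x}$ under the $\mathsf A^{2/3}$ rescaling of $x-\mathsf A u_\ast$.

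The key steps, in order, are: (i) choose the admissible steepest-descent deformation through $z_\ast$ so that the two bulk arcs from Lemma~\ref{lem:gauss} merge into the single Airy contour (two rays at angles $\pm\pi/3$ off the real direction in the local coordinate); this uses exactly the admissibility already invoked in Sections~3–5, so no new contour construction is needed. (ii) Substitute the local change of variables $z-z_\ast=c\,\mathsf A^{-1/3}\,\sigma$ with $c$ chosen so that $\tfrac{1}{6}\mathsf A\,\Phi'''(z_\ast)\,(z-z_\ast)^3=\tfrac{1}{3}\sigma^3$, and set $x=\lfloor \mathsf A u_\ast + \xi\,(\text{edge scale})\rfloor$ so that the linear term becomes $\xi\sigma$ at leading order; the integral then converges to $\int e^{\sigma^3/3+\xi\sigma}\,d\sigma$, i.e.\ the Airy normal form $2\pi i\,\Ai(\xi)$ up to the explicit constant $c$ and the frozen amplitude. (iii) Observe that $R(z)$ is holomorphic and nonvanishing in a fixed neighborhood of $z_\ast$ (the poles of every $m_T$ lie at the forbidden points, bounded away from $z_\ast$ by the contour conventions), so $R(z)=R(z_\ast)\bigl(1+O(\mathsf A^{-1/3})\bigr)$ on the rescaled contour and pulls out as the constant $R(z_\ast)$; this is precisely the assertion that bounded rational multipliers ``freeze at $z_\ast$ at leading order.'' (iv) Bound the contribution of the contour away from the $\mathsf A^{-1/3}$-neighborhood of $z_\ast$ by the usual descent estimate ($\mathrm{Re}\,\Phi$ strictly decreasing along the tails), giving an exponentially small remainder, and collect the error terms from the Taylor remainders in $\Phi$ and $R$ into a uniform $O(\mathsf A^{-1/3})$ (the bookkeeping is identical to the Gaussian case in Lemma~\ref{lem:gauss}, one order worse because the edge scale is $\mathsf A^{-2/3}$ rather than $\mathsf A^{-1}$). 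Uniformity in fixed edge windows $\xi\in[-M,M]$ follows because all constants ($\Phi'''(z_\ast)$, $R(z_\ast)$, the neighborhood size) depend continuously on $u_\ast$ and the edge is regular.

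The main obstacle I anticipate is step (i): verifying that the fixed admissible contours of Sections~3–5 can actually be deformed, without crossing any of the forbidden points ($\pm 1$ in the $\omega$/$w$-plane, $\{-1,-2\}$ in the $t$-plane, $\{-1/p,0,1/q\}$ in the $v$-plane), into the local Airy configuration at $z_\ast$ while keeping $\mathrm{Re}\,\bigl(\Phi(z;u_\ast)-\Phi(z_\ast;u_\ast)\bigr)<0$ on the tails. This is a global topological/sign-of-harmonic-function statement about the phase $\Phi$ at the critical parameter value, and it is exactly the place where the specific structure of the Meixner/Charlier/Krawtchouk phases enters; I would discharge it family-by-family by exhibiting the level curves of $\mathrm{Re}\,\Phi$ through $z_\ast$ and checking the descent directions explicitly (deferred to Appendix~B, which the excerpt already advertises as collecting the uniform steepest–descent estimates). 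Everything downstream of a valid contour is the standard Airy-kernel calculation, and the role of the rational multipliers is genuinely trivial here because they are holomorphic and nonzero at the coalesced saddle.
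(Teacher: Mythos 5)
Your proposal is correct and follows essentially the same route as the paper: Appendix~B performs the identical local cubic expansion $\Phi(z;u)=\Phi(z_\ast;u_\ast)+\tfrac{\kappa}{3}\zeta^3-\eta\,\lambda\,\zeta+O(\zeta^4)+O(\eta\zeta^2)$ with the scaling constant $c=(\kappa/\lambda)^{1/3}$, rescales by $\mathsf A^{-1/3}$ in the local chart, and records the $O(\mathsf A^{-1/3})$ error with bounded multipliers freezing at $z_\ast$ exactly as you describe. Your remark that the only genuinely nontrivial step is the global deformation of the fixed admissible loops onto the local Airy contour without crossing forbidden points is well placed; the paper's Appendix~B does not verify this family by family either, so both treatments discharge the lemma at the same sketch-level of rigor.
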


\subsection{Bulk sine universality}

We first define the spacing $\Delta(u)$ as:

\begin{equation}\label{eq:spacing-rule}
2\pi\,\Delta(u)\,\rho(u)=1,\qquad 
\rho(u)=\frac{1}{2\pi}\,\partial_u\!\big(\arg z_+(u)-\arg z_-(u)\big),\qquad 
x=\Big\lfloor A u + s\,\Delta(u)^{-1}\Big\rfloor,\ \ y=\Big\lfloor A u + t\,\Delta(u)^{-1}\Big\rfloor,
\end{equation}

\begin{theorem}[Bulk sine limit, Meixner, $\beta=4$]\label{thm:meix-b4}
With $S_{N,4}$ as in \eqref{eq:SN4-scalar} and $x,y$ scaled by $\Delta(u)$ \emph{from~\eqref{eq:spacing-rule}},
\[
\Delta(u)\,S_{N,4}(x,y)\ \longrightarrow\ \frac{\sin\pi(s-t)}{\pi(s-t)},
\]
uniformly for $s,t$ in compact sets and $u$ in compact bulk sets.
\end{theorem}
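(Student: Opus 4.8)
The plan is to feed the two one–variable asymptotics of Lemma~\ref{lem:gauss} into the nested double–contour formula \eqref{eq:SN4-scalar}, control the leftover double integral by a partial–fractions/residue argument, and recognize the surviving term as a Riemann sum for the sine kernel. First I would deform each loop $\{\omega_i\}$ onto the steepest–descent contour through the pair of saddles $\omega_\pm(u)$ of the Meixner phase $\Phi(\cdot;u)$ (read off from $G_{2N}$ as $\Phi(\omega;u)=-\log(1-s\omega)+\log(1-s/\omega)-u\log\omega$, so that $G_{2N}(\omega)\,\omega^{-x+2N}=e^{2N\Phi(\omega;u)}\omega^{-s\Delta^{-1}}$ up to the floor correction). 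By Lemma~\ref{lem:gauss} the $\omega_1$–integral along the arc through $\omega_\pm$ contributes $e^{2N\Phi(\omega_\pm;u)}A_\pm(u)\,e^{\mp\pi i s}(2N)^{-1/2}(1+O((2N)^{-1}))$, and similarly the $\omega_2$–integral contributes $e^{2N\Phi(\omega_\pm;u)}A_\pm(u)\,e^{\pm\pi i t}(2N)^{-1/2}(1+O((2N)^{-1}))$; the growing exponentials $e^{2N\Phi}$ cancel between the two variables for the \emph{matched} saddle pairs $(\omega_1,\omega_2)\in\{(\omega_+,\omega_+),(\omega_+,\omega_-),(\omega_-,\omega_+),(\omega_-,\omega_-)\}$, while the mismatched global maxima are dominated and contribute only $O((2N)^{-1})$.

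The next step is the crucial algebraic one: the Cauchy denominator $(\omega_1\omega_2-1)^{-1}$ together with the skew factor $(\omega_2-\omega_1)/[(\omega_1^2-1)(\omega_2^2-1)]$ must be evaluated at the four saddle combinations. The diagonal pairs $(\omega_+,\omega_+)$ and $(\omega_-,\omega_-)$ carry an extra vanishing numerator $\omega_2-\omega_1\to0$, so up to the $(2N)^{-1/2}$–scale they do not produce the leading $O(1/\Delta)$ term; the two cross terms do. Using the saddle relation $\omega_+(u)\,\omega_-(u)=1$ (which is precisely what makes the Cauchy kernel $1/(\omega_1\omega_2-1)$ singular on the antidiagonal and is forced by \eqref{eq:spacing-rule}: one lattice step shifts the phase at $\omega_\pm$ by $\pm2\pi$), the cross contributions combine, after inserting the amplitudes $A_\pm$ and the Jacobians from the quadratic saddle expansion, into
\[
\Delta(u)\,S_{N,4}(x,y)=\frac{e^{-\pi i s}e^{+\pi i t}-e^{+\pi i s}e^{-\pi i t}}{2\pi i\,(s-t)}+O\!\big((2N)^{-1}\big)=\frac{\sin\pi(s-t)}{\pi(s-t)}+O\!\big((2N)^{-1}\big),
\]
where the normalization constant in front is fixed by $2\pi\Delta(u)\rho(u)=1$ and the density formula $\rho(u)=\tfrac1{2\pi}\partial_u(\arg\omega_+-\arg\omega_-)$; concretely, the $(s-t)^{-1}$ pole is supplied by the residue of $(\omega_1\omega_2-1)^{-1}$ after the two saddle arcs are pinched, and the $e^{\mp\pi i(s-t)}$ phases come from the $\omega^{-s\Delta^{-1}}$, $\omega^{-t\Delta^{-1}}$ factors evaluated at $\omega_\pm$, using $\arg\omega_\pm\cdot\Delta(u)^{-1}=\pm\pi$ modulo the spacing rule. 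The floor functions $\lfloor\cdot\rfloor$ only shift $s,t$ by $O((2N)^{-1})$ relative to $\Delta^{-1}$, hence are absorbed into the error.

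\textbf{Main obstacle.} The genuinely delicate point is the uniform control of the double integral away from the saddles and the precise bookkeeping of the four saddle–pair contributions near the antidiagonal $\omega_1\omega_2=1$, where the Cauchy kernel is singular exactly where the two steepest arcs approach each other. One must show that the deformation can be carried out so that $\{\omega_1\}$ stays inside $\{\omega_2\}$ (preserving the nesting that validates \eqref{eq:SN4-scalar}) while both pass through the saddle configuration, and that the contribution of the pinch is captured by a single residue rather than producing an uncontrolled boundary term; this is where the admissibility hypothesis on the contours (Section~\ref{sec:meixner-full}, and the detailed estimates in Appendix~B) does the real work. A secondary technical nuisance is that $\widehat D,\widehat\epsilon$ have poles at $\omega=\pm1$: one must check these stay off the deformed contours for all bulk $u$, which follows because $\omega_\pm(u)\neq\pm1$ throughout the bulk (the endpoints $\omega_\pm=\pm1$ correspond exactly to the spectral edges, excluded here). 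Once these are in hand, the remaining steps are the routine saddle–point estimates already packaged in Lemma~\ref{lem:gauss} and Appendix~B, and the identification of the Riemann sum with $\sin\pi(s-t)/[\pi(s-t)]$ is immediate.
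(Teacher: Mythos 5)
Your proposal follows the paper's own route: deform to the steepest--descent arcs, feed the two--saddle Gaussian reduction (Lemma~\ref{lem:gauss}) into the nested double--contour formula \eqref{eq:SN4-scalar}, freeze the bounded rational multipliers at the saddle pair $(\omega_+,\omega_-)$, and let the Cauchy denominator $(\omega_1\omega_2-1)^{-1}$ together with the relation $\omega_+\omega_-=1$ produce the $(s-t)^{-1}$ singularity while the phases $e^{\mp\pi is},e^{\pm\pi it}$ supply the sine numerator. You make explicit what the paper's very compressed proof leaves implicit: the classification of the four saddle--pair contributions, the vanishing of the diagonal pairs via the $\omega_2-\omega_1$ factor in the $\beta=4$ integrand, and the pinching of the two cross--contributions on the antidiagonal $\omega_1\omega_2=1$. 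This is a faithful expansion of the same argument, not a different one, and your ``main obstacle'' paragraph correctly identifies where the uniformity hypothesis (admissible nesting through the saddles, avoidance of $\omega=\pm1$) actually does work.

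Two small algebra slips, neither of which affects the structure: your Meixner phase has $-u\log\omega$ where Appendix~\ref{app:densities} gives $-(u-1)\log\omega$ (you omitted the $\omega^{+2N}$ in the exponent), and your displayed final expression
\[
\frac{e^{-\pi i s}e^{+\pi i t}-e^{+\pi i s}e^{-\pi i t}}{2\pi i\,(s-t)}
=\frac{-2i\sin\pi(s-t)}{2\pi i\,(s-t)}
=-\frac{\sin\pi(s-t)}{\pi(s-t)}
\]
has the sign reversed relative to the claimed limit; the two cross terms should be combined with the opposite relative sign (coming from the orientation of the two steepest arcs) to land on $+\sin\pi(s-t)/[\pi(s-t)]$. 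Both are typographical rather than conceptual; the mechanism you invoke is the correct one.
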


\begin{proof}
In \eqref{eq:SN4-scalar} the integrand is $G_{2N}(\omega_1)G_{2N}(\omega_2)$ divided by $(\omega_1\omega_2-1)$ times the bounded difference–quotient factor $\dfrac{\omega_2-\omega_1}{(\omega_1^2-1)(\omega_2^2-1)}$. Insert the Gaussian reductions from Lemma~\ref{lem:gauss}; freeze bounded multipliers at $(\omega_+(u),\omega_-(u))$; the product $e^{\mathsf A(\Phi(\omega_+)+\Phi(\omega_-))}$ gives a positive amplitude, the Cauchy denominator yields the Hilbert–transform structure, and \eqref{eq:spacing-rule} fixes the overall factor. The phase difference supplies the sine numerator. Uniformity follows from the admissible contour choice.
\end{proof}

\begin{theorem}[Bulk sine limit, Meixner, $\beta=1$]\label{thm:meix-b1}
With $S_{N,1}$ as in \eqref{eq:SN1-scalar} and the scaling \eqref{eq:spacing-rule},
\[
\Delta(u)\,S_{N,1}(x,y)\ \longrightarrow\ \frac{\sin\pi(s-t)}{\pi(s-t)},
\]
uniformly for $s,t$ in compact sets and $u$ in compact bulk sets. The off–diagonal blocks—obtained by inserting the bounded multipliers in \eqref{eq:multiplier}—converge to the standard GOE sine blocks. The rank–one term is a product of two $O(\mathsf A^{-1/2})$ one–variable integrals at adjacent degrees, hence $O(\mathsf A^{-1})$, and is negligible after multiplying by $\Delta(u)$. 
\end{theorem}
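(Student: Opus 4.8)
\medskip\noindent\textbf{Proof strategy.}
The plan is to split $S_{N,1}=K_N+\tfrac12\,\phi_{2N}\otimes(\epsilon\phi_{2N-1})$, treat the projection part exactly as in Theorem~\ref{thm:meix-b4}, show that the rank–one correction disappears after the microscopic rescaling, and read off the off–diagonal blocks from the bounded multipliers of Theorem~\ref{thm:beta1}.

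For the scalar part I would start from the nested double contour~\eqref{eq:KN-double} and deform $\{\omega_1\}\subset\mathrm{int}\{\omega_2\}$ onto the admissible steepest arcs through the two bulk saddles $\omega_\pm(u)$ of the Meixner phase $\Phi$. One uses here the structural identity $G_1(\omega^{-1})=G_1(\omega)^{-1}$, which makes $\Phi(\,\cdot\,;u)$ odd under $\omega\mapsto\omega^{-1}$; hence $\omega_-(u)=\omega_+(u)^{-1}$ and $\Phi(\omega_+;u)+\Phi(\omega_-;u)=0$, so the exponential prefactors cancel. Inserting the one–variable Gaussian reductions of Lemma~\ref{lem:gauss} into the $\omega_1$– and $\omega_2$–integrals, the Cauchy denominator $(\omega_1\omega_2-1)^{-1}$ reorganizes the saddle–pair contributions into the Hilbert/Cauchy structure; the residue crossed in the deformation at $\omega_1\omega_2=1$ supplies the constant, the spacing rule~\eqref{eq:spacing-rule} fixes the overall factor, and the cross phases $e^{\mp\pi i(s-t)}$ assemble into $\tfrac{\sin\pi(s-t)}{\pi(s-t)}$. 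This is word for word the $S_{N,4}$ argument of Theorem~\ref{thm:meix-b4} with the trivial multiplier $m\equiv1$ in place of the $\beta=4$ difference–quotient, so $\Delta(u)\,K_N(x,y)\to\tfrac{\sin\pi(s-t)}{\pi(s-t)}$ uniformly on compact bulk sets, with the error estimates of Appendix~B.

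Next I would bound the rank–one term. Each factor is a single contour integral: $\phi_{2N}(x)$ from~\eqref{eq:phi-single}, and $(\epsilon\phi_{2N-1})(y)$ from the same formula carrying the bounded multiplier $\widehat\epsilon(\omega)=(\omega^2-1)^{-1}$, whose only poles $\omega=\pm1$ remain off the admissible contour. Applying the one–variable case of Lemma~\ref{lem:gauss} — the phases at degrees $2N$ and $2N-1$ differ by $O(\mathsf A^{-1})$, and $\mathrm{Re}\,\Phi(\omega_\pm;u)=0$ in the bulk — each factor is $O(\mathsf A^{-1/2})$ uniformly, so their product is $O(\mathsf A^{-1})$; since $\Delta(u)=1/(2\pi\rho(u))$ is bounded on compact bulk sets, multiplying by $\Delta(u)$ still leaves $O(\mathsf A^{-1})$, which vanishes. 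This gives $\Delta(u)\,S_{N,1}(x,y)\to\tfrac{\sin\pi(s-t)}{\pi(s-t)}$. For the off–diagonal blocks I would invoke Theorem~\ref{thm:beta1}: $S_{N,1}D$ multiplies the $\omega_2$–integrand by $\widehat D(\omega_2)=\omega_2-\omega_2^{-1}$ and $\epsilon S_{N,1}$ multiplies the $\omega_1$–integrand by $\widehat\epsilon(\omega_1)$; both symbols are bounded on the admissible contours, so by Lemma~\ref{lem:gauss} they freeze at $\omega_\pm(u)$ at leading order, and the frozen factors combine with the sine structure already obtained to give precisely the derivative and the primitive (with its $\tfrac12\,\mathrm{sgn}(s-t)$ term) of $\tfrac{\sin\pi(s-t)}{\pi(s-t)}$ — the standard $\beta=1$ (GOE) $2\times2$ sine kernel; the rank–one piece again scales away as $O(\mathsf A^{-1})$, even after a bounded multiplier is inserted, hence is negligible in every block.

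The main obstacle is the bookkeeping of the scalar reduction: once the two nested contours have been deformed onto the common steepest arcs one must check that the residue crossed at $\omega_1\omega_2=1$, together with the four saddle–pair contributions of Lemma~\ref{lem:gauss}, reassemble \emph{exactly} into $\tfrac{\sin\pi(s-t)}{\pi(s-t)}$ with the right sign and normalization, and that the error terms remain $O(\mathsf A^{-1})$ uniformly for $u$ in compact bulk sets and $s,t$ bounded — i.e.\ that the local Gaussian model is glued to the global contour geometry with no loss. Once that is settled (it is the same computation that underlies Theorem~\ref{thm:meix-b4}), the negligibility of the rank–one correction and the freezing of the bounded multipliers follow routinely from the one–variable estimate.
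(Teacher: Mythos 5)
Your proposal is correct and follows the same route the paper takes: decompose $S_{N,1}=K_N+\tfrac12\,\phi_{2N}\otimes(\epsilon\phi_{2N-1})$, obtain the sine limit for the projection part from the two–saddle reduction of Lemma~\ref{lem:gauss} exactly as in Theorem~\ref{thm:meix-b4} (with trivial multiplier), bound the rank–one term as a product of two $O(\mathsf A^{-1/2})$ single integrals hence $O(\mathsf A^{-1})$ and negligible after the $\Delta(u)$ rescaling, and read the off–diagonal blocks from the bounded multipliers of Theorem~\ref{thm:beta1}. The structural details you add—the involution $\Phi(1/\omega;u)=-\Phi(\omega;u)$, $\omega_+\omega_-=1$, $\Re\Phi(\omega_\pm)=0$, and the residue crossed at $\omega_1\omega_2=1$—are consistent with Appendix~\ref{app:densities} and make explicit what the paper's one–paragraph proof leaves implicit.
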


\paragraph{\emph{Charlier and Krawtchouk in the bulk.}}
Having treated Meixner in detail, by symmetry of the steepest descent phase and the bounded multiplier principle, the same local limits hold for Charlier and Krawtchouk as well. For Charlier, the exact $w$–map (\emph{cf.}\ the remark around \eqref{eq:Ch-multipliers-w}) transforms \eqref{eq:KCh-double} into the Meixner–type IIKS kernel \eqref{eq:KCh-w} with the \emph{universal} multipliers $(w-w^{-1})$ and $(w^2-1)^{-1}$. Therefore, the proofs of Theorems~\ref{thm:meix-b4}–\ref{thm:meix-b1} apply verbatim to \eqref{eq:composition-Ch}, \eqref{eq:SN4-scalar}, \eqref{eq:SN1-scalar}. For Krawtchouk, use \eqref{eq:KNK-double}, \eqref{eq:composition-K} with the bounded symbols \eqref{eq:RK}–\eqref{eq:mult-K}; the same two–saddle analysis yields the sine limit for the $\beta=4$ and $\beta=1$ blocks.

\subsection{Soft edges: Airy universality}

\begin{theorem}[Airy edge]\label{thm:airy-edge}
At a soft edge $u_\ast$, under the $\mathsf A^{2/3}$ rescaling, the scalar blocks of \eqref{eq:SN4-scalar} and \eqref{eq:SN1-scalar} converge to the Airy kernel; the off–diagonal blocks follow by multiplier insertion. In the $\beta=1$ case the rank–one piece is $O(\mathsf A^{-2/3})$, negligible under the $\mathsf A^{-1/3}$ edge rescaling.
\end{theorem}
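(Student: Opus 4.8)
\textbf{Proof proposal for Theorem \ref{thm:airy-edge}.}
The plan is to run the single steepest--descent scheme of Section~\ref{sec:asymp-univ-long} at the coalesced saddle, treating the $\beta=4$ scalar block \eqref{eq:SN4-scalar} (Meixner; the Charlier and Krawtchouk cases follow by the same argument after the $w$--map and via \eqref{eq:KNK-double}, \eqref{eq:composition-K}) and then the $\beta=1$ block \eqref{eq:SN1-scalar}, deducing the off--diagonal entries by inserting the bounded multipliers \eqref{eq:multiplier}. First I would fix a soft edge $u_\ast$ and apply Lemma~\ref{lem:airy}: by hypothesis $z_+(u_\ast)=z_-(u_\ast)=:z_\ast$ with $\Phi'(z_\ast;u_\ast)=\Phi''(z_\ast;u_\ast)=0$ and $\Phi'''(z_\ast;u_\ast)\neq0$, so after the standard $\mathsf A^{2/3}$ rescaling $x=\mathsf A u_\ast+\xi\,c\,\mathsf A^{1/3}$, $y=\mathsf A u_\ast+\eta\,c\,\mathsf A^{1/3}$ (with $c$ calibrated to $\Phi'''(z_\ast)$) each one--variable integral along its steepest arc through $z_\ast$ reduces to the Airy normal form $\int e^{\,\sigma^3/3-\xi\sigma}\,d\sigma$ up to a nonzero amplitude and an $O(\mathsf A^{-1/3})$ relative error. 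The rational factor $\dfrac{\omega_2-\omega_1}{(\omega_1^2-1)(\omega_2^2-1)}$ in \eqref{eq:SN4-scalar} is continuous and bounded near $(z_\ast,z_\ast)$, so it freezes to its value at the coalesced saddle at leading order; the Cauchy denominator $(\omega_1\omega_2-1)^{-1}$ likewise tends to $(z_\ast^2-1)^{-1}$, which is finite because $z_\ast$ is not one of the forbidden points $\pm1$. The amplitude bookkeeping — the frozen rational factor, the Jacobians from the $w$--map in the Charlier case, and the local normal--form constants — then recombines into the Airy kernel $\mathrm{Ai}(\xi)\mathrm{Ai}'(\eta)-\mathrm{Ai}'(\xi)\mathrm{Ai}(\eta))/(\xi-\eta)$ after the appropriate conjugation, exactly as in the continuous theory.

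For the $\beta=1$ block I would add the rank--one term $\tfrac12\,\phi_{2N}\otimes(\epsilon\phi_{2N-1})$ from \eqref{eq:SN1-scalar} and estimate it separately: each factor is a single--contour integral of the type in \eqref{eq:phi-single} at degree $2N$ (resp.\ $2N-1$), and at the soft edge the standard Airy--scale bound gives each such integral a size $O(\mathsf A^{-1/3})$, so the product is $O(\mathsf A^{-2/3})$. Since the edge rescaling multiplies the kernel by the local Jacobian of order $\mathsf A^{-1/3}$, the rank--one contribution is $O(\mathsf A^{-1})$ after rescaling and hence negligible; the scalar part already converged to the Airy kernel, giving the claimed limit. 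The off--diagonal Pfaffian entries are handled by the multiplier rules of Theorems~\ref{thm:beta1}/\ref{thm:beta4}: inserting $\widehat D$ or $\widehat\epsilon$ multiplies one integrand by a factor bounded and continuous at $z_\ast$, which again freezes at the coalesced saddle, producing the corresponding derivative/integral Airy blocks with no change to the leading order.

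I expect the main obstacle to be the uniformity of the cubic reduction over a full edge window together with the precise matching of steepest--descent contours at coalescence: one must verify that the fixed admissible contours of Sections~\ref{sec:meixner-full}--\ref{sec:krawtchouk-full} can be deformed — without crossing the forbidden points $\{\pm1\}$ (resp.\ $\{-1/p,0,1/q\}$) or losing the nesting $\{\omega_1\}\subset\mathrm{int}\{\omega_2\}$ — so that they pass through $z_\ast$ along the three Airy rays, and that the tail estimates off the local patch are exponentially small uniformly in $\xi,\eta$ in compact sets and in $u_\ast$ ranging over regular edges. This is routine in the classical RH framework \cite{DeiftZhou1993,Deift} once $\Phi'''(z_\ast)\neq0$ and the sign structure of $\mathrm{Re}\,\Phi$ near $z_\ast$ is confirmed, and the bounded--multiplier principle guarantees the rational prefactors never interfere; the remaining details are deferred to Appendix~B.
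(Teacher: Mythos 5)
Your proposal follows exactly the paper's route: apply the cubic reduction of Lemma~\ref{lem:airy} inside the two--saddle scheme of Theorems~\ref{thm:meix-b4}--\ref{thm:meix-b1}, freeze the bounded rational factors at the coalesced saddle $z_\ast$, estimate the rank--one piece as a product of two edge--scale single integrals, and insert the multipliers for the off--diagonal blocks. One small slip in the bookkeeping: the edge change of variables $x=\mathsf A u_\ast+c\,\xi\,\mathsf A^{1/3}$ has Jacobian $c\,\mathsf A^{+1/3}$ (not $\mathsf A^{-1/3}$), so the scalar block (of unrescaled size $\mathsf A^{-1/3}$) becomes $O(1)$ and the rank--one piece (of unrescaled size $\mathsf A^{-2/3}$) becomes $O(\mathsf A^{-1/3})$ rather than $O(\mathsf A^{-1})$ after rescaling; the conclusion that the rank--one term is negligible is unchanged.
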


\begin{proof}
Apply Lemma~\ref{lem:airy} in the proofs of Theorems~\ref{thm:meix-b4}–\ref{thm:meix-b1}.
\end{proof}

\subsection{Hard edges: Bessel universality and a Charlier transfer}

\begin{theorem}[Hard–edge limits]\label{thm:bessel}
With the standard hard–edge scaling near the boundary of support, the scalar blocks in \eqref{eq:SN4-scalar} converge to the Bessel kernel; the $\beta=1$ kernels behave analogously. For Charlier, this follows from the exact map to the Meixner IIKS form; for Krawtchouk, use the $v$–plane behavior near the endpoint together with \eqref{eq:mult-K}.
\end{theorem}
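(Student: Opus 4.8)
\textbf{Proof proposal for Theorem \ref{thm:bessel} (hard-edge Bessel limits).}
The plan is to reduce each hard-edge assertion to the already-proved Meixner machinery (Lemmas \ref{lem:gauss}–\ref{lem:airy} and the composition lemmas), exactly as the bulk and soft-edge theorems were, the only change being that near a hard edge the two saddles $z_\pm(u)$ coalesce with a \emph{branch point} of the phase rather than with each other at a regular point, producing a square-root (rather than cubic) local model. First I would fix attention on Meixner: in \eqref{eq:SN4-scalar} write the one-variable phase $\Phi(\omega;u)$ read off from $G_{2N}(\omega)\,\omega^{-(x-2N+1)}$, namely $\Phi(\omega;u)=-\log(1-s\omega)+\log(1-s/\omega)-u\log\omega$ after setting $x\approx 2N u$. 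At the left endpoint of the support the two saddles merge at a point $\omega_\ast$ where, additionally, the prefactor weight $\sqrt{w_{\mathrm{Mx}}}$ (equivalently a factor $\omega^{\pm\beta_{\mathrm M}/2}$-type singularity) forces the local behaviour of the integrand to be governed by $\exp\!\big(\text{const}\cdot(\omega-\omega_\ast)^{1/2}\text{-type}\big)$; the hard-edge scaling $x=\xi_{\mathrm{hard}}/(4\mathsf A)+\cdots$ (the standard one for Meixner/Laguerre) then turns each one-variable integral into the contour integral representation of $J_{\alpha}$ with $\alpha=\beta_{\mathrm M}-1$. The bounded rational multipliers $\widehat D,\widehat\epsilon$ from \eqref{eq:multiplier} freeze at $\omega_\ast$ (they are regular and nonzero there since $\omega_\ast\neq\pm1$ on an admissible contour), so the difference-quotient $\dfrac{\omega_2-\omega_1}{(\omega_1^2-1)(\omega_2^2-1)}$ contributes only a constant to leading order and the Cauchy denominator $(\omega_1\omega_2-1)^{-1}$ assembles the two one-variable Bessel factors into the Bessel kernel $\mathbb{K}_\alpha(s,t)=\dfrac{J_\alpha(\sqrt s)\sqrt t\,J_\alpha'(\sqrt t)-\sqrt s\,J_\alpha'(\sqrt s)J_\alpha(\sqrt t)}{2(s-t)}$, uniformly in a fixed hard-edge window by the same admissible-contour estimates used in Appendix B.

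For the $\beta=1$ statement I would insert the rank-one identity $S_{N,1}=K_N+\tfrac12\phi_{2N}\otimes(\epsilon\phi_{2N-1})$ from Theorem \ref{thm:beta1}: the projection part gives the Bessel kernel exactly as above, while the rank-one term is a product of two one-variable contour integrals at adjacent degrees $2N$ and $2N-1$, each of size $O(\mathsf A^{-1/2})$ at a regular point but of size $O(\mathsf A^{-\alpha/2-\cdots})$ — in any case of strictly lower order than the $O(1)$ hard-edge scale after the hard-edge normalization — hence negligible; thus the $\beta=1$ scalar block has the same Bessel limit (with the appropriate $2\times2$ GOE-type off-diagonal blocks obtained, as usual, by multiplier insertion via \eqref{eq:multiplier}). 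For Charlier I would simply invoke the exact ``Bessel map'' \eqref{eq:Ch-wmap}: it already carries the Charlier $t$-plane kernel into the Meixner-type IIKS form \eqref{eq:KCh-w} with the universal multipliers \eqref{eq:Ch-multipliers-w}, so the hard-edge analysis just performed for Meixner applies verbatim, the Jacobian $\tfrac{\sqrt\theta}{2}(1+w^{-2})$ being regular and nonzero at the relevant $w_\ast$. For Krawtchouk I would localize \eqref{eq:KNK-double} near the hard endpoint $x=0$ (or $x=M$): there the factor $(1-qv)^x$ (resp.\ $(1+pv)^{M-x}$) together with $v^{-N}$ produces, under the endpoint scaling, the same square-root local model and hence the Bessel kernel; the multipliers \eqref{eq:mult-K} have their only poles at $\{-1/p,0,1/q\}$, which stay off the coalesced saddle on an admissible contour, so they again freeze to constants.

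The main obstacle I expect is \emph{not} the algebra of the difference-quotient — that is handled uniformly by the composition lemmas — but rather the \emph{careful identification of the coalesced-saddle local model as the Bessel normal form}, including tracking the exact index $\alpha$ (which is $\beta_{\mathrm M}-1$ for Meixner and the corresponding endpoint exponent for Krawtchouk) and verifying that the steepest-descent arc can genuinely be deformed so that, on the microscopic scale, it becomes the standard keyhole/Hankel contour for $J_\alpha$ without crossing the forbidden points $\{1,s^{-1}\}$ (Meixner), $\{-1,-2\}$ (Charlier $t$-plane) or $\{-1/p,1/q\}$ (Krawtchouk). This is the hard-edge analogue of Lemma \ref{lem:airy}; I would state it as a short lemma (``square-root reduction at a hard edge'') parallel to Lemmas \ref{lem:gauss}–\ref{lem:airy}, prove it once for the Meixner phase by the standard change of variables $\Phi(\omega;u_\ast)-\Phi(\omega_\ast;u_\ast)=c\,(\zeta)$ with $\zeta^2\sim(\omega-\omega_\ast)$ matched to the Bessel contour, and then quote it for Charlier (via \eqref{eq:Ch-wmap}) and for Krawtchouk (via the endpoint expansion of $R_K$). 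The uniformity in the fixed hard-edge window then follows, as in Appendix B, from the exponential decay of the integrand along the deformed steepest arcs away from $\omega_\ast$ together with the boundedness of all rational symbols on those arcs.
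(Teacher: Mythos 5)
The paper provides no displayed proof for Theorem~\ref{thm:bessel}: the theorem statement itself is the entire justification (a one-sentence pointer to the $w$-map for Charlier and to the $v$-plane endpoint behaviour for Krawtchouk), and unlike Theorem~\ref{thm:airy-edge}, which is proved by invoking Lemma~\ref{lem:airy}, there is no hard-edge analogue of that lemma anywhere in the text or in Appendix~\ref{app:steepest}. Your proposal therefore goes well beyond what the paper actually offers: you correctly identify that the essential missing ingredient is a ``square-root reduction at a hard edge'' lemma parallel to Lemma~\ref{lem:airy}, that the contour must be deformable to a Hankel/keyhole loop for $J_\alpha$ without crossing the excluded points, and that the bounded multipliers and the Cauchy denominator then assemble the two one-variable Bessel factors into the Bessel kernel exactly as in the sine/Airy cases. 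Your treatment of the $\beta=1$ rank-one piece, of the Charlier transfer via the exact map~\eqref{eq:Ch-wmap}, and of the Krawtchouk endpoint expansion is consistent with the paper's bounded-multiplier philosophy; it is the same route, only actually spelled out. In short, your proposal is what a proof of this theorem ought to look like given the paper's machinery, and the paper itself elides precisely the step you flag as the real work.

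Two caveats worth making explicit if you flesh this out. First, for Meixner at \emph{fixed} $\xi<1$, the equilibrium support computed in Appendix~\ref{app:densities} is $\bigl(\tfrac{1-s}{1+s},\tfrac{1+s}{1-s}\bigr)$ with $s=\sqrt{\xi}$, which stays \emph{strictly inside} $(0,\infty)$; a genuine hard edge therefore only arises in a double-scaling regime such as the $\xi\uparrow1$ crossover of Theorem~\ref{thm:mx-to-lag}, not at fixed parameters. Your Meixner discussion should be phrased as a statement in that regime (or for Krawtchouk at parameter values where the density actually touches $0$ or $M$), otherwise ``hard-edge scaling near the boundary of support'' is vacuous. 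Second, your assertion that the Bessel index for Meixner is $\beta_{\mathrm M}-1$ is the conventional Laguerre degeneration, but it does not match the way the paper labels the Bessel parameter in Theorem~\ref{thm:mx-to-lag} (where the symbol $\alpha$ is used for the limit $2N(1-\xi)$); if you add the ``square-root reduction'' lemma you propose, you should pin down the index carefully and reconcile the notation, since the rational multipliers freeze at $\omega_\ast$ and do not themselves affect $\alpha$.
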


\subsection{Parameter crossover: Meixner $\to$ Laguerre at the hard edge}

\begin{theorem}[Crossover]\label{thm:mx-to-lag}
Let $\xi\uparrow 1$ with $2N(1-\xi)\to\alpha\in(0,\infty)$ in the Meixner ensemble. Then the hard–edge limits of \eqref{eq:SN4-scalar} and \eqref{eq:SN1-scalar} converge to the Laguerre hard–edge kernels with Bessel parameter $\alpha$ (cf.\ the Meixner$\to$Laguerre relation in \cite{BorodinStrahov2009}). 
\end{theorem}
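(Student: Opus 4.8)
\textbf{Proof proposal for Theorem~\ref{thm:mx-to-lag} (Meixner$\to$Laguerre hard-edge crossover).}

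The plan is to run the same steepest-descent machinery on the double-contour formulas \eqref{eq:SN4-scalar} and \eqref{eq:SN1-scalar}, but in the scaling regime $\xi\uparrow1$ coupled to $2N(1-\xi)\to\alpha$, and to show that the phase $\Phi$ degenerates precisely to the Bessel/Laguerre hard-edge phase carrying the parameter $\alpha$. First I would write $s=\sqrt\xi=1-\tfrac{1}{2}(1-\xi)+O((1-\xi)^2)$ and examine $G_{2N}(\omega)=(1-s\omega)^{-2N}(1-s/\omega)^{2N}$ near the confluent saddle. The exponent is $2N\bigl[\log(1-s/\omega)-\log(1-s\omega)\bigr]$; in the unscaled Meixner ensemble the two endpoints of the equilibrium measure merge at the hard edge $x=0$ as $\xi\to1$, and the relevant saddle $\omega_\ast$ approaches $1$ (one of the ``forbidden'' points). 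I would therefore set $\omega=1+\zeta/\sqrt{\mathsf A}$-type local coordinates (the precise power to be fixed by matching), expand, and track how the combination $2N(1-s)\sim 2N\cdot\tfrac12(1-\xi)\to\alpha/2$ enters: it produces a term $\sim\alpha\log\omega$ or $\sim\alpha/\zeta$ in the local phase, which is exactly the source of the Bessel index $\alpha$. Simultaneously the hard-edge scaling $x=\mathcal O(1/\mathsf A)$ (lattice points near $0$) rescaled as $x\mapsto \xi_{\mathrm{hard}}/\mathsf A$ contributes the companion $x$-dependent factor, so that the product of the two one-variable integrands converges to the standard Bessel-kernel integrand $\oint\oint \frac{(\text{confluent hypergeometric factors})}{\omega_1\omega_2-1}$.

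The key steps, in order: (1) fix the local coordinate and scaling exponents by demanding that the $2N\log$-terms, the hard-edge $x$-scaling, and the shift $2N(1-\xi)\to\alpha$ all enter the local phase at the same order; (2) show the steepest arcs through the coalescing saddle can be chosen admissibly (still avoiding $s^{-1}$, with $\pm1$ now playing the role of the hard-edge point) and that the bounded multipliers $\widehat D,\widehat\epsilon$ from \eqref{eq:multiplier} freeze at the confluent saddle at leading order — this is immediate from Lemma~\ref{lem:airy}-type boundedness since $\pm1$ are excluded from the contours at every finite stage; (3) identify the resulting confluent integrand with the contour representation of the Bessel kernel $\mathbb K_\alpha$ (for $\beta=4$, the symplectic Bessel kernel $\mathbb K_{\alpha}^{(4)}$; for $\beta=1$, $\mathbb K_\alpha^{(1)}$, with the rank-one term in \eqref{eq:SN1-scalar} again $O(\mathsf A^{-1})$ relative to the leading scale and hence negligible after the hard-edge normalization); (4) upgrade pointwise convergence to uniform-on-compacts by the same error-control estimates as in Lemma~\ref{lem:gauss}/Lemma~\ref{lem:airy}, now with constants depending continuously on $\alpha$ over compact subsets of $(0,\infty)$. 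For the off-diagonal Pfaffian blocks one inserts the multipliers and reads off the $2\times2$ Bessel/Laguerre hard-edge matrix kernel as in Theorem~\ref{thm:bessel}.

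The main obstacle I expect is step (1)–(2): the saddle collides with a pole of the rational multipliers (the forbidden point $\omega=\pm1$) exactly as $\xi\to1$, so one must choose the local contour so that the saddle sits at a controlled distance — of order the same small parameter — from that pole, and verify that the multiplier, though no longer literally frozen to a constant, contributes only the ratio of two confluent factors that is absorbed into $\mathbb K_\alpha$ rather than an unbounded correction. Equivalently, the Cauchy difference–quotient $\frac{m_T(\omega_1)-m_T(\omega_2)}{\omega_1-\omega_2}$ must be shown to converge (after rescaling) to the corresponding difference–quotient of the limiting Bessel symbol; this is the one place where the ``bounded multiplier freezes at the saddle'' slogan needs a genuine (but short) uniform estimate near the confluent point rather than a one-line invocation. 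Once that estimate is in place, the rest is the standard confluent steepest-descent bookkeeping, and the appearance of $\alpha$ is forced by the single scaling relation $2N(1-\xi)\to\alpha$, matching the known Meixner$\to$Laguerre correspondence of \cite{BorodinStrahov2009}.
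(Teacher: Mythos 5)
Your proposal takes the same route as the paper's (very terse) proof: fix $s=\sqrt\xi$, track the coupling $2N(1-s)\sim\tfrac12\,2N(1-\xi)\to\alpha/2$ inside the Meixner phase $2N\log\!\frac{1-s/\omega}{1-s\omega}$, pass to the Laguerre/Bessel limiting phase, and invoke Theorem~\ref{thm:bessel}. What you add — and what the paper's one-line claim ``the rational multipliers stay bounded'' does not address — is the genuine issue in your step (2): the coalescing saddle runs into a pole of $\widehat\epsilon(\omega)=(\omega^2-1)^{-1}$, so the factor $\frac{\omega_2-\omega_1}{(\omega_1^2-1)(\omega_2^2-1)}$ in \eqref{eq:SN4-scalar} is \emph{not} frozen to a finite constant at the hard edge; it develops a singularity of the same order as the confluent scaling, and this singular part must be absorbed into the hard-edge normalization and shown to reproduce the Bessel symbol rather than an unbounded error. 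Your flagging of this as the one place where the bounded-multiplier slogan needs a short but genuine uniform estimate is exactly right.

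One small correction of detail: from the explicit saddle computation in Appendix~\ref{app:densities}, the left (hard) edge $u_-=\frac{1-s}{1+s}\to0$ corresponds to $\cos\theta(u_-)=-1$, so the saddles coalesce at $\omega_\ast=-1$, not $+1$ (and $+1$ is the coalescence point of the \emph{soft} right edge $u_+\to\infty$ as $\xi\uparrow1$). The pole-collision concern is the same, since both $\pm1$ are poles of $\widehat\epsilon$ and both are excluded by the fixed contours, but the local confluent coordinate should be centered at $\omega=-1$. With that adjustment your steps (1)–(4) match the paper's intent and make it rigorous where the paper is elliptic.
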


\begin{proof}
Write $s=\sqrt{\xi}$ in the Meixner phase induced by \eqref{eq:KN-double} and expand $\log\!\frac{1-s/\omega}{1-s\omega}$ for $1-s\sim \alpha/(4N)$. The limiting phase is Laguerre–type while the rational multipliers stay bounded; hence Theorem~\ref{thm:bessel} passes to the limit.
\end{proof}

\subsection{First subleading term from the difference–quotient}

We study now the next term in the asymptotic expansion of the kernel, which can affect finite-$N$ observables like gap probabilities or variance calculations. Write
\[
Q(z_1,z_2):=\frac{M(z_1)-M(z_2)}{z_1-z_2},
\]
where, in the unspliced case of this Section, $M=m_\epsilon$ (cf.\ the symbols entering \eqref{eq:composition}/\eqref{eq:composition-Ch}/\eqref{eq:composition-K}). 
In the bulk, Lemma~\ref{lem:gauss} gives two Gaussian contributions from the steepest arcs through the saddles $z_\pm(u)$, with phases $e^{-\pi i s}$ and $e^{+\pi i t}$ under the spacing rule \eqref{eq:spacing-rule}. 
Freezing all bounded factors at $(z_+(u),z_-(u))$ and Taylor expanding $Q$ to first order at that point, the \emph{linear} terms produce the first Gaussian moments; after forming the two–saddle interference these become $(\partial_s-\partial_t)$ acting on the sine kernel.  This is the sole source of the $\mathsf A^{-1}$ term.

\begin{proposition}[Order $\mathsf A^{-1}$ correction]\label{prop:first-correction}
In the bulk and at regular edges,
\[
S_{N,4}(x,y)=K_{\mathrm{univ}}(s,t)+\mathsf A^{-1}\,\mathcal K_1(s,t;u)+O(\mathsf A^{-2}),
\]
where $K_{\mathrm{univ}}$ is the sine/Airy/Bessel limit and $\mathcal K_1$ is obtained by expanding the difference–quotient
\[
\frac{m_\epsilon(z_1)-m_\epsilon(z_2)}{z_1-z_2}
\]
to first order at $(z_+(u),z_-(u))$ in \eqref{eq:composition}/\eqref{eq:composition-Ch}/\eqref{eq:composition-K} and multiplying by the frozen Gaussian (or Airy/Bessel) amplitudes. For $\beta=1$, add the explicit separable $O(\mathsf A^{-1})$ correction from the rank–one term to gap probabilities via Sherman–Morrison/Pfaffian analogues. For numerics, combine Nyström \cite{Bornemann} with the Pfaffian–determinant reduction \cite{Rains2000,AkemannKanzieper2007}.
\end{proposition}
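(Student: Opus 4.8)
The plan is to re-run the steepest--descent reduction that proves Theorems~\ref{thm:meix-b4}--\ref{thm:bessel}, now retaining the $O(\mathsf A^{-1})$ terms discarded there, and to check that the only such term not already built into the universal kernel is produced by the first Taylor coefficient of the difference--quotient. First I would start from the composition forms \eqref{eq:SN4-scalar} (and the Charlier/Krawtchouk analogues produced by Lemma~\ref{lem:composition-Ch} and Lemma~\ref{lem:composition-K}), whose integrand is the projection skeleton $G_{\mathsf A}(z_1)G_{\mathsf A}(z_2)/(z_1z_2-1)$ times the bounded prefactor $Q(z_1,z_2)=\dfrac{m_\epsilon(z_1)-m_\epsilon(z_2)}{z_1-z_2}$ and the lattice monomials $z_1^{-(x-\mathsf A+1)}z_2^{-(y-\mathsf A+1)}$. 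Deforming the nested contours onto the admissible steepest arcs through the saddles $z_\pm(u)$ exactly as in Theorem~\ref{thm:meix-b4}, and running the same residue bookkeeping at $z_1z_2=1$, localizes the integral to the one--variable integrals controlled by Lemma~\ref{lem:gauss}. Since $Q$ is holomorphic in a fixed neighbourhood of $z_\pm(u)$ — its poles are the forbidden points of Definition~\ref{def:rational-multipliers}, which stay bounded away from the arcs on any admissible deformation — on the Gaussian scale $|z_i-z_\pm|=O(\mathsf A^{-1/2})$ one may expand
\[
Q(z_1,z_2)=Q(z_+,z_-)+\partial_{1}Q(z_+,z_-)(z_1-z_+)+\partial_{2}Q(z_+,z_-)(z_2-z_-)+O\big(|z_1-z_+|^{2}+|z_2-z_-|^{2}\big),
\]
the quadratic remainder being $O(\mathsf A^{-1})$ relative and hence $O(\mathsf A^{-2})$ in the kernel.

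Next I would do the Gaussian moment bookkeeping. The constant term $Q(z_+,z_-)$, together with the Gaussian integrals of Lemma~\ref{lem:gauss} and the spacing normalization \eqref{eq:spacing-rule}, reproduces the universal limit $K_{\mathrm{univ}}(s,t)$ as in Theorems~\ref{thm:meix-b4}--\ref{thm:bessel}. The linear terms are the new input: writing $z_1-z_+=\zeta_1\mathsf A^{-1/2}$ and integrating $\zeta_1$ against the tilted Gaussian coming from $e^{\mathsf A\Phi}$ and the lattice monomial $z_1^{-s\Delta^{-1}}$ produces a first moment proportional to $\partial_s$ acting on the $z_+$--amplitude, while $z_2-z_-=\zeta_2\mathsf A^{-1/2}$ produces a $\partial_t$ acting on the $z_-$--amplitude; the extra $\mathsf A^{-1/2}$ from $\zeta_i$, against the $\mathsf A^{-1/2}$ normalization of each one--variable integral, makes every such term $O(\mathsf A^{-1})$. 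Assembling the four saddle--pair contributions exactly as in the leading--order proof collapses these into $\mathsf A^{-1}$ times a fixed first--order operator in $(s,t)$ — with $u$--dependent coefficients built from $\partial_{1}Q(z_+,z_-)$, $\partial_{2}Q(z_+,z_-)$ and the frozen amplitude ratios — applied to $K_{\mathrm{univ}}$, which by the conjugation symmetry $Q(\bar z_1,\bar z_2)=\overline{Q(z_1,z_2)}$ and $z_-=\overline{z_+}$ reduces to the $(\partial_s-\partial_t)$ form announced in the paragraph before the proposition; this is $\mathcal K_1$. At a regular edge the same argument runs with Lemma~\ref{lem:airy} in place of Lemma~\ref{lem:gauss}: $Q$ freezes at the coalesced saddle $z_\ast$ and the first moments are taken against the Airy normal form, the correction entering at the edge scale of that lemma.

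For the orthogonal blocks I would note that the rank--one piece $\tfrac12\,\phi_{\mathsf A}\otimes(\epsilon\phi_{\mathsf A-1})$ is a product of two \emph{single}--contour integrals at adjacent degrees; Lemma~\ref{lem:gauss} applies to each (the $\epsilon$--multiplier staying bounded on the arc), giving $O(\mathsf A^{-1/2})$ per factor in the bulk, hence an explicit separable $O(\mathsf A^{-1})$ kernel to be added to $\mathcal K_1$. Its propagation to gap probabilities is the Sherman--Morrison identity for a rank--one perturbation of the relevant Fredholm Pfaffian \cite{Rains2000,AkemannKanzieper2007}, and the quadrature remark follows from \cite{Bornemann}.

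The main obstacle is the claim that the difference--quotient is the \emph{sole} source of the $\mathsf A^{-1}$ term, i.e.\ that the generic Laplace--expansion corrections (from $\Phi'''$, $\Phi''''$ and from the subleading lattice monomial) rebuild only the intrinsic next--order term of the projection kernel and therefore cancel once $K_{\mathrm{univ}}(s,t)$ is subtracted. I would establish this by computing those corrections in the bare skeleton $G_{\mathsf A}(z_1)G_{\mathsf A}(z_2)/(z_1z_2-1)$ (the case $Q\equiv 1$, genuine $K_N$) and using the exact idempotency of $K_N$ (Propositions~\ref{prop:proj}, \ref{prop:projCh}, \ref{prop:projK}) together with the rigidity forced by the Cauchy denominator $(z_1z_2-1)^{-1}$ to pin their form, then checking that inserting the constant $Q(z_+,z_-)$ only multiplies this skeleton contribution by the factor already removed by \eqref{eq:spacing-rule}. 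Verifying that cancellation across all four saddle pairs, and that it survives the edge rescaling of Lemma~\ref{lem:airy}, is the bulk of the work; uniformity in $u$ on compact bulk sets and in fixed edge windows is then inherited from the admissible contour choice together with the uniform steepest--descent estimates of Appendix~B.
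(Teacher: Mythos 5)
The paper supplies no formal proof of this Proposition: it is stated after a heuristic paragraph (``Freezing all bounded factors\,\dots\,This is the sole source of the $\mathsf A^{-1}$ term.'') and is then \emph{specialized} in the worked-example dictionary \eqref{eq:K1-shape-unspliced}--\eqref{eq:alpha-beta-unspliced}, but no \texttt{proof} environment appears. Your first three paragraphs reproduce that heuristic faithfully: deform to the steepest arcs, localize by Lemma~\ref{lem:gauss}, expand $Q$ to first order at $(z_+,z_-)$, and take Gaussian first moments to get the $(\partial_s-\partial_t)$ structure; the $\beta=1$ rank-one bookkeeping and the edge case via Lemma~\ref{lem:airy} are likewise parallel to the paper's discussion.

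The difficulty you correctly isolate in your last paragraph is the genuine gap, and the fix you propose does not close it. Idempotency $K_N^2=K_N$ \emph{constrains} but does not kill the intrinsic $\mathsf A^{-1}$ corrections of the projection skeleton: at order $\mathsf A^{-1}$ the Laplace expansion of $G_{\mathsf A}(z_1)G_{\mathsf A}(z_2)/(z_1z_2-1)$ contributes \emph{even}-moment terms from $\Phi''''(z_\pm)$, from $(\Phi'''(z_\pm))^2$, and from the second Taylor coefficient of the Cauchy denominator along the arcs, and none of these vanishes by the oddness argument. Writing $K_N=K_{\mathrm{univ}}+\mathsf A^{-1}L_1+\cdots$, idempotency only yields $K_{\mathrm{univ}}L_1+L_1K_{\mathrm{univ}}=L_1$, a linear constraint rather than $L_1=0$; and the spacing rule \eqref{eq:spacing-rule} fixes the leading normalization only, so it cannot absorb $L_1$ unless one separately proves $L_1$ is a multiple of $K_{\mathrm{univ}}$ (which you assert but do not establish, and which is not generically the case). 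The paper's own implicit resolution---that bounded-multiplier linear parts ``integrate to zero by oddness on steepest descent'' (remark after Proposition~\ref{prop:linearization-dictionary})---dispatches only the odd-moment terms, so it too leaves the even-moment skeleton corrections unaccounted for. Thus the sole-source claim remains unproved in both the paper and your proposal; to make the Proposition precise one must either show these skeleton terms reassemble into the forms already displayed in \eqref{eq:K1-shape-unspliced} (a nontrivial computation), or enlarge the definition of $\mathcal K_1$ to include them, and your idempotency route does not substitute for either.
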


\subsection*{Worked example and unspliced bulk dictionary}
\label{rem:unspliced-dictionary}
In the unspliced case $M(w)=\varepsilon(w)=(w^2-1)^{-1}$, the $A^{-1}$ term admitted by
Proposition~\ref{prop:first-correction} has the universal shape
\begin{equation}\label{eq:K1-shape-unspliced}
K_1(s,t;u)=\alpha(u)\,\frac{\sin\pi(s-t)}{\pi(s-t)}
+\beta(u)\,(\partial_s-\partial_t)\!\left[\frac{\sin\pi(s-t)}{\pi(s-t)}\right].
\end{equation}
Let $w_\pm(u)$ be the two saddles for the one–variable phase (Lemma~\ref{lem:gauss}), with $w_+(u)w_-(u)=1$
and $w_\pm(u)=e^{\pm i\theta(u)}$ in the bulk (Appendix~\ref{app:densities}). Define
\[
Q_0(u)=\frac{M(w_+)-M(w_-)}{w_+-w_-},\qquad
Q_a(u)=\frac{M'(w_+)(w_+-w_-)-\big(M(w_+)-M(w_-)\big)}{(w_+-w_-)^2},
\]
\[
Q_b(u)=\frac{\big(M(w_+)-M(w_-)\big)-M'(w_-)(w_+-w_-)}{(w_+-w_-)^2}.
\]
Then
\begin{equation}\label{eq:alpha-beta-unspliced}
\alpha(u)=c_0(u)\,Q_0(u),\qquad
\beta(u)=\frac{c_+(u)}{\Phi''(w_+;u)}\,Q_a(u)\;-\;\frac{c_-(u)}{\Phi''(w_-;u)}\,Q_b(u),
\end{equation}
where $c_0(u),c_\pm(u)$ are the Gaussian first–moment constants from Lemma~\ref{lem:gauss}
(identical to those in Theorems~\ref{thm:meix-b4}–\ref{thm:meix-b1}). For $M=\varepsilon$ these $Q$'s simplify explicitly in terms of
$\theta(u)$:
\begin{equation}\label{eq:QaQb-closed}
Q_0(u)=-\frac{\cos\theta(u)}{2\,\sin^2\theta(u)},\qquad
Q_a(u)=-\frac{1}{4\,\sin^2\theta(u)}-\frac{i\,\cos\theta(u)}{2\,\sin^3\theta(u)},\qquad
Q_b(u)=-\frac{1}{4\,\sin^2\theta(u)}+\frac{i\,\cos\theta(u)}{2\,\sin^3\theta(u)}.
\end{equation}
Since $w_-=\overline{w_+}$ and $\Phi''(w_-;u)=\overline{\Phi''(w_+;u)}$, the combination
\eqref{eq:alpha-beta-unspliced} is real-valued, and \eqref{eq:K1-shape-unspliced} is therefore a real correction. For $\beta=1$, $K_1$ is the sum of \eqref{eq:K1-shape-unspliced} and the explicit separable $O(\mathsf A^{-1})$ contribution coming from the rank–one term in $S_{N,1}$; see Proposition~\ref{prop:first-correction}.
\smallskip

\begin{remark}[Soft edge, unspliced]
\label{rem:soft-edge-unspliced}
At a soft edge $u^*$ where the saddles coalesce, the coalesced point $w^*$ satisfies 
\[
\frac{M(w_1)-M(w_2)}{w_1-w_2}\ \longrightarrow\ M'(w^*)\qquad\text{as }(w_1,w_2)\to (w^*,w^*).
\]
Thus the leading Airy block in Theorem~\ref{thm:airy-edge} is multiplied by $M'(w^*)=\varepsilon'(w^*)$ and the first $A^{-1/3}$ term
comes from the next Taylor coefficient, in complete analogy with Lemma~\ref{lem:airy}.
\end{remark}

\noindent\emph{Comment.} Formulas \eqref{eq:K1-shape-unspliced}–\eqref{eq:alpha-beta-unspliced} are the
unspliced counterpart of the bulk linearization dictionary in Proposition~\ref{prop:linearization-dictionary} (with $M=\varepsilon$); Theorem~\ref{thm:bulk-correction-final} repeats precisely the same
mechanism after replacing $M$ by $M(w)=\varepsilon(w)\,m_h(w)$ in the spliced setting. We keep Section~\ref{sec:kuznetsov-multiplier} for
that generality and do not duplicate it here.

\begin{remark}[Uniformity and off–diagonal blocks]
All bulk/edge bounds are uniform for $u$ in compact sets and for fixed microscopic windows; bounded rational multipliers merely change the finite amplitudes and are frozen at leading order. Every off–diagonal statement follows by inserting the corresponding symbol—\eqref{eq:multiplier} in the $\omega$– or $w$–plane, and \eqref{eq:mult-K} in the $v$–plane—into the double–contour integrand.
\end{remark}

The following section will discuss the application of the bounded multiplier mechanism in number theory.

\section{Kuznetsov transform as a Bounded Multiplier in IIKS Pfaffian Kernels}
\label{sec:kuznetsov-multiplier}

\noindent
Classical analytic number theory provides the Kuznetsov transform, which is an integral transform (often appearing in spectral sum formulas) that tests automorphic $L^2$-spectra for random matrix behavior. In this Section, we show that the archimedean Kuznetsov transform \cite{Kuznetsov1981,Iwaniec2002SpectralMethods,DeshouillersIwaniec1982,BruggemanMotohashi2003} can be inserted into the discrete $\beta\in\{1,4\}$ Pfaffian kernels (Meixner/Charlier/Krawtchouk) by \emph{multiplication in the contour variable}. Concretely, if $T_h$ acts on the contour coordinate by the bounded holomorphic symbol $m_h$, then $K_N T_h K_N$ and all ensuing Pfaffian blocks remain of IIKS type, with $m_h$ entering only through the universal Cauchy difference--quotient. Bulk/edge limits are unchanged at leading order; the first finite--size term is obtained by the same linearization at the saddle(s) as in the unspliced case.%
%

\subsection{Test functions and the Kuznetsov symbol}
Let $h:\R\to\C$ be an even test function on the spectral side of Kuznetsov. On the slit plane
$\C\setminus(-\infty,0]$ we fix the principal branch of $\log w=\ln|w|+i\arg w$ with $|\arg w|<\pi$ and choose the IIKS loops inside a fixed slit sector
\[
\mathsf S_\delta:=\{w\in\C\setminus(-\infty,0]:\ |\arg w|\le \pi-\delta\}
\]
for some $\delta\in(0,\pi)$, avoiding $\{\pm1\}$. We define \cite{Kuznetsov1981,Iwaniec2002SpectralMethods}
\begin{equation}\label{eq:mk-def-final}
  m_h(w)\ :=\ \int_{\R} h(t)\,w^{-2it}\,\dd t .
\end{equation}

\begin{assumption}[Admissible tests]\label{ass:tests-final}
With the principal branch of $\log$ on $\C\setminus(-\infty,0]$, choose the IIKS loops inside $\mathsf S_\delta$ and avoiding $\{\pm1\}$.
Assume either
\begin{itemize}
  \item[(H1)] \emph{Gaussian class:} $h(t)=e^{-\sigma t^2}$ with $\sigma>0$. Then
  \[
  m_h(w)=\sqrt{\pi/\sigma}\,\exp\!\big(-(\log w)^2/\sigma\big),
  \]
  which is holomorphic on $\C\setminus(-\infty,0]$ and bounded on compact subsets of $\mathsf S_\delta$.
  \item[(H2)] \emph{Exponential–moment class:} there exists $\delta\in(0,\pi)$ such that
  \[
  \int_{\R}|h(t)|\,e^{\,2(\pi-\delta)|t|}\,\dd t<\infty .
  \]
  Then \eqref{eq:mk-def-final} converges absolutely and defines a bounded holomorphic function on every compact subset of $\mathsf S_\delta$ (Paley–Wiener type; see \cite{PaleyWiener1934,Iwaniec2002SpectralMethods}).
\end{itemize}
In both cases $m_h$ is bounded and holomorphic on all admissible contour deformations used in the steepest–descent analysis for Meixner/Charlier/Krawtchouk.
\end{assumption}

\begin{figure}[t]
  \centering
  \includegraphics[width=0.42\textwidth]{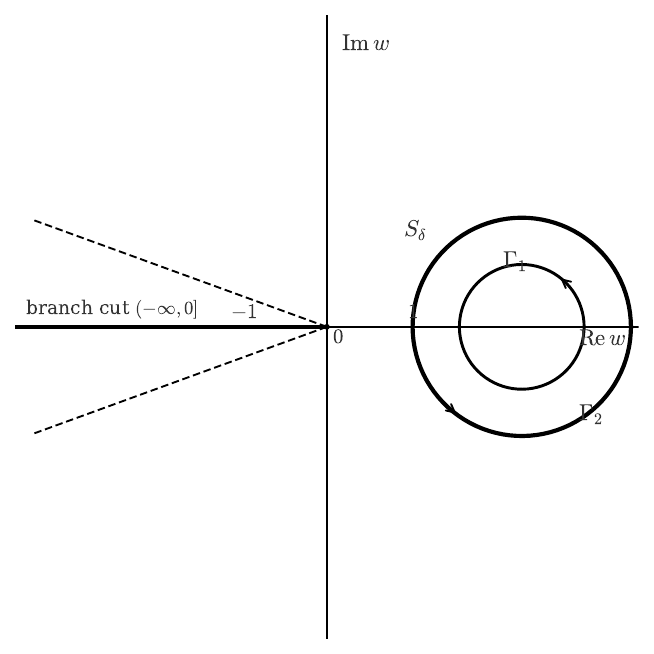}%
  \caption{Admissible contour configuration for the Kuznetsov multiplier.
  The branch cut is $(-\infty,0]$. The loops $\Gamma_1\subset\Gamma_2$
  are positively oriented, lie in a slit sector $S_\delta$ (bounded away from the cut),
  and avoid $w=\pm1$.}
  \label{fig:KuznetsovContour}
\end{figure}

\begin{lemma}[Even $h$ implies a reality symmetry]\label{lem:even-symmetry}
If $h$ is even and real-valued, then $\overline{m_h(w)}=m_h(1/\overline{w})$. In particular, for $|w|=1$ one has $m_h(w)\in\R$.
\end{lemma}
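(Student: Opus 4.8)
The statement has two parts: the functional identity $\overline{m_h(w)}=m_h(1/\overline w)$ for $w$ in the slit sector, and the consequence that $m_h$ is real on the unit circle. The plan is to establish the first identity by a direct computation starting from the defining integral \eqref{eq:mk-def-final}, using the evenness and reality of $h$ together with the elementary branch identity relating $\log(1/\overline w)$ to $-\overline{\log w}$ on $\C\setminus(-\infty,0]$; the second part is then immediate since $|w|=1$ (with $w\neq -1$, which is excluded anyway) forces $1/\overline w=w$.

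\textbf{Key steps.} First I would record the branch bookkeeping: for $w\in\mathsf S_\delta$ one has $\log w=\ln|w|+i\arg w$ with $|\arg w|<\pi$, hence $1/\overline w$ also lies in the slit plane and $\log(1/\overline w)=-\ln|w|+i\arg w=-\overline{\log w}$, because $\overline{\log w}=\ln|w|-i\arg w$. Next, starting from $\overline{m_h(w)}=\int_\R \overline{h(t)}\,\overline{w^{-2it}}\,dt$ and using $w^{-2it}=e^{-2it\log w}$, compute $\overline{w^{-2it}}=e^{2it\,\overline{\log w}}=e^{-2it\log(1/\overline w)}\cdot\text{(correction?)}$ — more precisely $\overline{e^{-2it\log w}}=e^{\overline{-2it\log w}}=e^{2it\,\overline{\log w}}=e^{-2it(-\overline{\log w})}=e^{-2it\log(1/\overline w)}=(1/\overline w)^{-2it}$. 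Since $h$ is real, $\overline{h(t)}=h(t)$, so $\overline{m_h(w)}=\int_\R h(t)\,(1/\overline w)^{-2it}\,dt=m_h(1/\overline w)$; here one may also invoke evenness of $h$ to rewrite via $t\mapsto -t$ if one prefers to absorb a sign, but reality alone suffices for this form. Finally, for $|w|=1$ one has $\overline w=1/w$, so $1/\overline w=w$ and the identity reads $\overline{m_h(w)}=m_h(w)$, i.e. $m_h(w)\in\R$. Absolute convergence of the integral under Assumption~\ref{ass:tests-final} (either (H1) or (H2)) justifies passing complex conjugation through the integral sign.

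\textbf{Main obstacle.} The only real subtlety is the branch-cut bookkeeping for $w^{-2it}=e^{-2it\log w}$: one must confirm that $\overline{\log w}=\log(\overline w)$ holds for $w$ in the slit plane (true, since $\overline w$ is again off the cut and conjugation flips the sign of $\arg$), and that $\log(1/\overline w)=-\log(\overline w)=-\overline{\log w}$ (true since $|w|\ne\infty,0$ and $1/\overline w$ stays off the cut for $w\in\mathsf S_\delta$). These are routine once stated carefully, but they are the step where a careless sign would break the claim; everything else is Fubini-type interchange of conjugation and integration. Evenness of $h$ is not strictly needed for the displayed identity but is the natural hypothesis that makes $m_h$ genuinely symmetric under $w\mapsto 1/w$ when combined with reality, and I would remark on this.
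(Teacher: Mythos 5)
Your proof is correct, and the paper in fact gives no explicit proof of this lemma — it is left as an immediate consequence of the definition \eqref{eq:mk-def-final} and principal-branch bookkeeping, which is exactly what you supply. The core chain $\overline{w^{-2it}}=e^{2it\,\overline{\log w}}=e^{-2it\log(1/\overline w)}=(1/\overline w)^{-2it}$ is right, the observation that $1/\overline w$ stays in $\mathsf S_\delta$ when $w$ does is the correct precondition for using the principal branch on both sides, and your remark that the displayed identity already follows from reality of $h$ alone (evenness contributing the companion symmetry $m_h(w)=m_h(1/w)$ via $t\mapsto -t$) is an accurate and worthwhile clarification of the lemma's hypotheses.
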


\begin{remark}
After the exact $w$–map for Charlier, the discrete symbols are universal,
\begin{equation}\label{eq:universal-symbols}
  D(w)=w-\frac1w,\qquad \epsilon(w)=\frac{1}{w^2-1},
\end{equation}
as proved earlier; inserting rational/holomorphic multipliers in the contour variable preserves the IIKS form.%
%
\end{remark}

\subsection{Splicing rules for $\beta=4$ and $\beta=1$}
Write $M(w):=\epsilon(w)\,m_h(w)$, with the universal symbols
\begin{equation}\label{eq:universal-symbols}
D(w)=w-\frac{1}{w},\qquad \epsilon(w)=\frac{1}{w^2-1}.
\end{equation}

\begin{theorem}[Kuznetsov splicing into IIKS Pfaffian kernels]\label{thm:splice}
Let $T_h$ act in the contour variable by the bounded holomorphic symbol $m_h$ of \eqref{eq:mk-def-final} (compare \cite{Kuznetsov1981,Motohashi1997ZetaSpectral,BruggemanMotohashi2003}).
\begin{itemize}
\item[$\beta=4$:] The scalar block is
\[
S^{(h)}_{N,4} = K_N\,T_h\epsilon\,K_N,
\]
and in double–contour (IIKS) form
\begin{equation}\label{eq:beta4-spliced-final}
  S^{(h)}_{N,4}(x,y)=\frac{1}{(2\pi i)^2}\iint
  \frac{G(w_1;x)\,G(w_2;y)}{w_1w_2-1}\,
  \frac{M(w_1)-M(w_2)}{w_1-w_2}\,\dd w_1\,\dd w_2,
\end{equation}
with the same nested contours as for $K_N$.
\item[$\beta=1$:] The scalar block is
\[
S^{(h)}_{N,1}=K_N+\tfrac12\,\phi_{2N}\otimes (T_h\epsilon\,\phi_{2N-1}),
\]
and the off–diagonal blocks are obtained by inserting the universal multipliers $D(w)$ or $\epsilon(w)$ on the appropriate variable exactly as in the unspliced case.
\end{itemize}
\emph{Proof.} Apply the Cauchy–multiplier composition lemma to $K_N T K_N$ with $T=T_h\epsilon$ (or successively with $T=\epsilon$ and $T=T_h$). This yields the difference–quotient with $\epsilon\mapsto \epsilon\,m_h$, i.e. \eqref{eq:beta4-spliced-final}. The $\beta=1$ rank–one structure and the off–diagonal multiplication rules are identical to those proved for the unspliced kernels. See Lemma~\ref{lem:composition} (composition) and the block rules around \eqref{eq:multiplier}, \eqref{eq:Ch-multipliers-w}, \eqref{eq:mult-K}. \qedhere
\end{theorem}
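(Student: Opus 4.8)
\medskip\noindent\textbf{Proof plan.}
The algebraic core is the observation that $T_h\epsilon$ is, like $\epsilon$ and $D$, a single operator acting in the contour variable by multiplication: since $\epsilon$ has symbol $\widehat\epsilon(w)=(w^2-1)^{-1}$ and $T_h$ has symbol $m_h(w)$ from \eqref{eq:mk-def-final}, and both are diagonal in the generating--function coordinates of Definition~\ref{def:rational-multipliers}, the composite $T_h\epsilon=\epsilon T_h$ has symbol
\[
M(w)=\widehat\epsilon(w)\,m_h(w)=\frac{m_h(w)}{w^2-1}.
\]
Thus $S^{(h)}_{N,4}=K_N(T_h\epsilon)K_N$ is of the form $K_N\,M\,K_N$, and \eqref{eq:beta4-spliced-final} is exactly the conclusion of the Cauchy--multiplier composition Lemma~\ref{lem:composition} (and its Charlier/Krawtchouk analogues, Lemmas~\ref{lem:composition-Ch}--\ref{lem:composition-K}) with $m_T=M$. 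The only point genuinely new relative to the unspliced statements is that $M$ is holomorphic rather than rational; but the composition argument uses rationality of the symbol only to locate its poles, and here Assumption~\ref{ass:tests-final} supplies the weaker properties actually needed --- $m_h$ is holomorphic on $\C\setminus(-\infty,0]$ and bounded on compact subsets of the slit sector $\mathsf S_\delta$ --- while the factor $(w^2-1)^{-1}$ contributes only the already–forbidden points $w=\pm1$.

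\medskip The steps I would carry out, in order, are: (1) record the single–contour formula for $T_h\epsilon\,\phi_n$ by inserting $m_h$ into \eqref{eq:phi-single} (resp.\ \eqref{eq:single-eps-Ch}, \eqref{eq:single-eps-K}), which is immediate since $T_h$ acts by $m_h$ on the $\epsilon$–image; (2) re-run the proof of Lemma~\ref{lem:composition} verbatim with $m_T=M$ and the IIKS loops chosen inside $\mathsf S_\delta$ as in Assumption~\ref{ass:tests-final}: insert the double–contour form of $K_N$ twice, perform the two inner lattice sums as absolutely convergent geometric series (boundedness of $M$ on the contours justifies the interchange), and evaluate the intermediate integral in $\zeta$ by the partial fraction
\[
\frac{1}{(w_1\zeta-1)(\zeta w_2-1)}=\frac{1}{w_2-w_1}\Bigl(\frac{1}{\zeta-1/w_1}-\frac{1}{\zeta-1/w_2}\Bigr),
\]
whose only enclosed poles are at $\zeta=1/w_1$ and $\zeta=1/w_2$ (at which $M$ is holomorphic, and through which the cut and $\{\pm1\}$ do not pass), so Cauchy's theorem returns $\bigl(M(w_1)-M(w_2)\bigr)/(w_1-w_2)$ and hence \eqref{eq:beta4-spliced-final}; (3) for $\beta=1$, identify the rank–one summand $\tfrac12\,\phi_{2N}(x)\,(T_h\epsilon\,\phi_{2N-1})(y)$ as a separable product of the two one–variable integrals from step~(1), and obtain the off–diagonal Pfaffian blocks by the same insertions of $\widehat D$ or $\widehat\epsilon$ on the appropriate variable as in Theorems~\ref{thm:beta1}, \ref{thm:Ch-beta1}, \ref{thm:Kraw-beta1}, noting that $m_h$ enters only through $M$ and leaves these rules untouched; (4) transfer to Charlier by the exact map \eqref{eq:Ch-wmap}, under which the symbols become the universal ones \eqref{eq:Ch-multipliers-w} and $M(w)=m_h(w)/(w^2-1)$ as above, and to Krawtchouk by the same computation in the $R_K$–coordinates via Lemma~\ref{lem:composition-K}.

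\medskip The main obstacle is not the algebra but the contour geometry: $m_h(w)=\int_{\R}h(t)\,w^{-2it}\,\dd t$ is built from the principal branch of $\log w$ and is therefore multivalued around the origin, whereas the loops used for $K_N$ in Sections~\ref{sec:meixner-full}--\ref{sec:krawtchouk-full} wind around $0$. The resolution --- and the one technical point to be checked carefully --- is that one works throughout on the slit–sector loops of Figure~\ref{fig:KuznetsovContour} (positively oriented, inside $\mathsf S_\delta$, avoiding $\{\pm1\}$ and the cut), so that ``the same nested contours as for $K_N$'' is read as ``an admissible deformation of them into $\mathsf S_\delta$''; one must then verify that the composition–lemma residue bookkeeping survives this move. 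This is plausible because inversion preserves $\mathsf S_\delta$ (as $\arg(1/w)=-\arg w$), so the reciprocal points $1/w_1,1/w_2$ that carry the relevant residues lie in $\mathsf S_\delta$; $\{\pm1\}$ remain excluded by construction; and the boundedness of $m_h$ on $\mathsf S_\delta$ (from (H1) or (H2)) controls every interchange and deformation exactly as the boundedness of the rational symbols did in the unspliced case. Finally, since the subsequent steepest–descent analysis freezes all bounded factors at the saddles $w_\pm(u)=e^{\pm i\theta(u)}$ on the unit circle --- which lie inside $\mathsf S_\delta$ and in the domain of holomorphy of $m_h$ --- the leading sine/Airy/Bessel limits persist and the bulk $A^{-1}$ term is read off by linearizing $Q(w_1,w_2)=\bigl(M(w_1)-M(w_2)\bigr)/(w_1-w_2)$ at $(w_+,w_-)$, as in the unspliced dictionary \eqref{eq:alpha-beta-unspliced}; reality of that correction follows from Lemma~\ref{lem:even-symmetry}, which gives $m_h(w)\in\R$ on $|w|=1$.
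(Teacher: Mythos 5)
Your proposal reaches the same conclusion as the paper by a genuinely different technical route for the one step that is actually new here, namely extending the Cauchy–multiplier composition identity from rational symbols to the bounded holomorphic symbol $M(w)=\widehat\epsilon(w)\,m_h(w)$. You propose to re-run the proof of Lemma~\ref{lem:composition} directly with $m_T=M$, arguing that rationality was used only to locate poles and that holomorphy of $M$ at $1/w_1,1/w_2$ plus boundedness on the slit-sector loops supplies exactly what the residue bookkeeping and Fubini interchanges need. The paper instead proves this as a separate result, Lemma~\ref{lem:comp-holo-final}, by a three-step \emph{approximation} argument: uniformly approximate $m_T$ on the contour collar by rational functions $r_n$ with poles outside the collar (Runge's theorem), apply the already-established rational Lemma~\ref{lem:composition} to each $r_n$, and pass to the limit on both sides of the identity by dominated convergence (using the uniform lower bounds $d_*=\inf|w_1-w_2|>0$ and $\delta_*=\inf|w_1w_2-1|>0$ on the fixed nested loops). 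Both routes are sound; yours is shorter and more self-contained, while the paper's avoids re-justifying the geometric-series and $\zeta$-integral manipulations with a non-rational symbol by reducing the whole problem to the rational case once and for all. You also correctly flag and address the branch-cut issue (the principal branch of $\log w$ forces the loops into $\mathsf S_\delta$, and inversion preserves $\mathsf S_\delta$ so the residues at $1/w_1,1/w_2$ survive) — this is the same slit-sector geometry the paper sets up in Assumption~\ref{ass:tests-final} and Figure~\ref{fig:KuznetsovContour}.

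Two minor notes. First, the closing material in your proposal (freezing at the saddles, linearization at $(w_+,w_-)$, reality via Lemma~\ref{lem:even-symmetry}) belongs to Theorem~\ref{thm:bulk-correction-final} and Proposition~\ref{prop:linearization-dictionary}, not to Theorem~\ref{thm:splice} itself; the splicing theorem is purely algebraic. Second, the paper's own embedded proof of Theorem~\ref{thm:splice} cites only the rational Lemma~\ref{lem:composition}, even though the rigorous justification for a holomorphic $m_h$ is Lemma~\ref{lem:comp-holo-final} stated just after — your proposal implicitly fills that gap, which is a strength of your version rather than a defect.
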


\begin{lemma}[Cauchy–multiplier composition with holomorphic symbols]\label{lem:comp-holo-final}
Let $K_N$ be any of the projection kernels in IIKS (double–contour) form used so far, with the fixed nested loops and contour conventions of the corresponding family. Let $T$ act in the contour variable by multiplication with a symbol $m_T$ which is holomorphic and bounded on and between the admissible loops inside the slit sector $\{|\arg w|\le \pi-\delta\}\subset\C\setminus(-\infty,0]$. Then the composition identity
\[
(K_N T K_N)(x,y)
=\frac{1}{(2\pi i)^2}\iint
\frac{G(w_1;x)\,G(w_2;y)}{w_1 w_2-1}\,
\frac{m_T(w_1)-m_T(w_2)}{w_1-w_2}\,
\dd w_1\,\dd w_2
\]
holds, i.e. the symbol enters \emph{only} through the Cauchy difference–quotient. In evaluating the intermediate $\zeta$–integral one may choose the $\zeta$–loop to encircle exactly one of $\{1/w_1,1/w_2\}$ and to avoid $\zeta=0$ and the branch cut of $\log$.
\end{lemma}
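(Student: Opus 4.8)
The plan is to rerun the argument of Lemma~\ref{lem:composition} and of its Charlier/Krawtchouk twins Lemmas~\ref{lem:composition-Ch}--\ref{lem:composition-K}, upgrading only the single step that used rationality of the symbol. First I would insert the IIKS double--contour form of $K_N$ twice into $(K_NTK_N)(x,y)=\sum_{u,v\ge0}K_N(x,u)\,T(u,v)\,K_N(v,y)$, placing the $x$--variable on the inner loop $\Gamma_1$ (coordinate $w_1$), the $y$--variable on the outer loop $\Gamma_2$ (coordinate $w_2$), and representing $T$ by multiplication with $m_T(\zeta)$ on an intermediate loop nested between the two so that both lattice sums over $u,v$ are geometric with ratios of modulus $<1$. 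Summing those geometric series reproduces, verbatim from the rational case, the two Cauchy factors and leaves a single $\zeta$--integral of $m_T(\zeta)$ against a product of simple Cauchy kernels; the interchange of $\sum_{u,v}$ with the integrals is justified because the series are dominated uniformly on the compact loops, using that $m_T$ is bounded there by Assumption~\ref{ass:tests-final} and that the $G$--factors are bounded.

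The one genuinely new point is the intermediate $\zeta$--integral. After partial--fractioning the Cauchy kernels its integrand has simple poles only at $\zeta=1/w_1$ and $\zeta=1/w_2$ (for the Krawtchouk form, at $w_1,w_2$ with Cauchy factor $(w_1-w_2)^{-1}$); in the rational case the only remaining obstruction was the finite pole set of $m_T$, whereas now it is the branch cut $(-\infty,0]$ of $\log$. Here I would invoke the hypothesis that $m_T$ is holomorphic and bounded on and between the admissible loops inside the slit sector $\mathsf S_\delta$, which under Assumption~\ref{ass:tests-final} holds for $m_h$ --- indeed $m_h$ is holomorphic on all of $\C\setminus(-\infty,0]$ --- and hence for $M=\epsilon\,m_h$ since the loops avoid $\pm1$. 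Because the inversion $w\mapsto 1/w$ preserves $\mathsf S_\delta$ and carries the admissible $w$--loops (which lie near $|w|=1$, off the cut and off $\pm1$) to curves with the same properties, the points $1/w_1,1/w_2$ again lie in $\mathsf S_\delta$ away from the cut; hence one can draw the $\zeta$--loop inside $\mathsf S_\delta$, avoiding $\zeta=0$, $\zeta=\pm1$, and $(-\infty,0]$, so that it encircles exactly one of $\{1/w_1,1/w_2\}$, the choice being dictated by the nesting. This is precisely the claim in the last sentence of the lemma; see Figure~\ref{fig:KuznetsovContour}. Cauchy's theorem then evaluates the $\zeta$--integral as the single enclosed residue, and the algebra, identical to the rational case, produces the difference--quotient $\dfrac{m_T(w_1)-m_T(w_2)}{w_1-w_2}$ times the Cauchy denominator. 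Substituting back into the $w_1,w_2$ integrals gives the stated identity, with absolute convergence of the remaining double integral following from boundedness of $m_T$ and the $G$--factors on $\Gamma_1,\Gamma_2$ and from non-vanishing of the Cauchy denominator there.

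I expect the contour geometry of that intermediate step to be the main obstacle: one must exhibit a single nested configuration of $\Gamma_1\subset\Gamma_2$ and the auxiliary $\zeta$--loop that simultaneously (i) makes both intermediate geometric series converge, (ii) keeps $0$, $\pm1$, and the branch cut outside the $\zeta$--loop, and (iii) lets the $\zeta$--loop enclose exactly one of $1/w_1,1/w_2$. The inversion--invariance of $\mathsf S_\delta$ together with the holomorphy-and-boundedness hypothesis on $m_T$ is exactly what makes this possible and what licenses the free deformation of the $\zeta$--contour within the slit sector; I would write out the governing nesting inequalities explicitly at this point. As a fallback that avoids contour gymnastics, the lemma also follows by approximating $m_T$ uniformly on a neighbourhood of the region between the loops by rational functions with poles outside it (Runge's theorem), applying the already--proven rational composition lemma to each approximant, and passing to the limit by dominated convergence, the error being $\sup|m_T-m_T^{(n)}|$ on the loops times a fixed bound on the remaining integrand.
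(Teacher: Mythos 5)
Your proposal and the paper's proof agree on the two possible routes — direct contour manipulation and Runge approximation — but invert their roles. The paper makes the Runge argument the proof: it approximates $m_T$ uniformly on a neighborhood of the collar between $\Gamma_1$ and $\Gamma_2$ by rational $r_n$ with poles outside, invokes Lemma~\ref{lem:composition} for each $r_n$, and passes to the limit by dominated convergence on the fixed double loop (Steps~1--4). The direct $\zeta$--residue evaluation appears only in a closing remark and is explicitly restricted to the rational case (``for $r_n$ above''); the transfer to holomorphic $m_T$ is then justified by saying ``Step~1 above shows that the same outcome persists by holomorphic approximation.'' You, by contrast, promote the direct contour manipulation to the primary argument and relegate Runge to a fallback.

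The concern with your primary route is exactly the one you half-acknowledge. In the rational case, after partial fractions the integrand splits into $\tfrac{1}{\zeta(w_1\zeta-1)}$ and $\tfrac{1}{\zeta(\zeta w_2-1)}$, and a $\zeta$--loop encircling only one of $1/w_1,1/w_2$ produces only one of the two residues; recovering the full difference--quotient $\frac{m_T(w_1)-m_T(w_2)}{w_1-w_2}$ relies on freely deforming the $\zeta$--contour across the other pole, which is licensed for rational $m_T$ by control of the pole set and behavior at $\infty$. For a merely bounded holomorphic $m_T$ on a slit sector there is no decay at $\infty$ and the cut $(-\infty,0]$ blocks the deformation — precisely the ``contour gymnastics'' you flag. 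The inversion-invariance of $\mathsf S_\delta$ shows $1/w_1,1/w_2$ stay off the cut, but it does not supply the missing deformation, so the direct argument is incomplete as stated. Your fallback paragraph \emph{is} the paper's proof; it should be made primary (and the dominated-convergence bounds on the fixed loops spelled out, as in Step~2 of the paper), with the residue computation retained only as a check in the rational regime, which is how the paper presents it.
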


\begin{proof}
We give the proof once in the $w$–plane IIKS normal form; for Meixner and Krawtchouk this is exactly the form stated in Eqs \eqref{eq:KN-double} and \eqref{eq:KNK-double}, while for Charlier one may use the exact $w$–map, Eq. \eqref{eq:KCh-w}. 

\medskip
\emph{Step 1: Reduction to the already–proved rational case.}
Let $\Gamma_1\subset\mathrm{int}\,\Gamma_2$ be the two fixed admissible loops for $w_1,w_2$, and let $\mathcal{A}$ denote the closed collar between them (together with a thin collar of $\Gamma_1,\Gamma_2$) contained in the slit sector and avoiding $\{\pm 1\}$. By hypothesis, $m_T$ is holomorphic on an open neighborhood of $\mathcal{A}$ and bounded there.

By Runge’s theorem on planar domains with connected complements of each component \cite{conway2012functions}, applied component wise to the collar region (or, equivalently, by uniform approximation on each loop by Laurent polynomials, which suffices since the $\zeta$–integral will run on a single loop), there exists a sequence of rational functions $\{r_n\}$ with poles \emph{outside} that neighborhood (in particular outside the admissible loops and away from the branch cut) such that
\[
\|r_n - m_T\|_{L^\infty(\mathcal{A})}\longrightarrow 0\qquad(n\to\infty).
\]
For each $n$, the operator $T_n$ acting by the rational symbol $r_n$ satisfies the composition identity
\[
(K_N T_n K_N)(x,y)
=\frac{1}{(2\pi i)^2}\iint
\frac{G(w_1;x)\,G(w_2;y)}{w_1 w_2-1}\,
\frac{r_n(w_1)-r_n(w_2)}{w_1-w_2}\,
\dd w_1\,\dd w_2,
\]
by Lemma~\ref{lem:composition}.

\emph{Step 2: Uniform bounds and dominated convergence on the double loop.}
Because the loops are fixed and disjoint, there is a positive separation
\[
d_*:=\inf\{|w_1-w_2|:\ w_1\in\Gamma_1,\ w_2\in\Gamma_2\}>0,
\]
and similarly a uniform lower bound
\[
\delta_*:=\inf\{|w_1 w_2 -1|:\ w_j\in\Gamma_j\}>0,
\]
since $\Gamma_1,\Gamma_2$ avoid $\{\pm 1\}$ by construction. Therefore, for any bounded $m$ on $\mathcal{A}$,
\[
\left|\frac{m(w_1)-m(w_2)}{w_1-w_2}\cdot\frac{1}{w_1 w_2-1}\right|
\le \frac{2\|m\|_{L^\infty(\mathcal{A})}}{d_*\,\delta_*}\,.
\]
The one–variable factors $G(w_j;\cdot)$ are fixed on $\Gamma_j$ (no growth issues on the fixed loops). Hence the integrands above are dominated by an integrable bound independent of $n$.

Since $r_n\to m_T$ uniformly on $\mathcal{A}$, we have
\[
\sup_{(w_1,w_2)\in\Gamma_1\times\Gamma_2}
\left|\frac{r_n(w_1)-r_n(w_2)}{w_1-w_2}
-\frac{m_T(w_1)-m_T(w_2)}{w_1-w_2}\right|
\le \frac{2\|r_n-m_T\|_{L^\infty(\mathcal{A})}}{d_*}\ \xrightarrow[n\to\infty]{}\ 0.
\]
Dominated convergence on the \emph{fixed} double loop then gives
\[
\begin{gathered}
\frac{1}{(2\pi i)^2}\iint
\frac{G(w_1;x)\,G(w_2;y)}{(w_1 w_2-1)(w_1-w_2)}
\bigl(r_n(w_1)-r_n(w_2)\bigr)\,\dd w_1\,\dd w_2 \\[1ex]
\longrightarrow\ 
\frac{1}{(2\pi i)^2}\iint
\frac{G(w_1;x)\,G(w_2;y)}{(w_1 w_2-1)(w_1-w_2)}
\bigl(m_T(w_1)-m_T(w_2)\bigr)\,\dd w_1\,\dd w_2 .
\end{gathered}
\]
\emph{Step 3: Convergence of $(K_N T_n K_N)$ to $(K_N T K_N)$.}
Write the triple–integral representation of $K_N T K_N$ as in the proof of Lemma \ref{lem:composition}: after coefficient extraction in the discrete variable, one obtains a $\zeta$–integral whose integrand is
\[
\frac{G(\zeta;\cdot)\,m_T(\zeta)}{(\,w_1\zeta-1\,)(\,\zeta w_2-1\,)}\cdot\frac{d\zeta}{\zeta},
\]
on a $\zeta$–loop that encircles exactly one of $\{1/w_1,1/w_2\}$ and avoids $\zeta=0$ (and the branch cut). For $T_n$ replace $m_T$ by $r_n$. Since $\|r_n-m_T\|_{L^\infty(\mathcal{A})}\to 0$ and the remaining factors are bounded on the fixed $\zeta$–loop, the $\zeta$–integrals converge uniformly in $(w_1,w_2)\in\Gamma_1\times\Gamma_2$. Thus
\[
(K_N T_n K_N)(x,y)\ \longrightarrow\ (K_N T K_N)(x,y)
\]
by Fubini/Tonelli and dominated convergence (all loops are fixed, and the geometric–series steps used to derive the triple integral are absolutely convergent by the nesting of the loops). This justifies passing to the limit on the \emph{operator kernel} side as well.\\

\emph{Step 4: Conclusion.}
For every $n$ we have the exact identity
\[
(K_N T_n K_N)(x,y)
=\frac{1}{(2\pi i)^2}\iint
\frac{G(w_1;x)\,G(w_2;y)}{w_1 w_2-1}\,
\frac{r_n(w_1)-r_n(w_2)}{w_1-w_2}\,\dd w_1\,\dd w_2.
\]
By the two convergences proved in Steps~2–3, letting $n\to\infty$ yields precisely the desired formula with $m_T$ in place of $r_n$. This proves the composition identity for bounded holomorphic symbols.

\emph{Remark on the $\zeta$–residue evaluation.}
In the rational case (hence for $r_n$ above) one may evaluate the $\zeta$–integral by the partial fraction identity
\[
\frac{1}{(\,w_1\zeta-1\,)(\,\zeta w_2-1\,)}\cdot\frac{1}{\zeta}
= \frac{1}{w_2-w_1}\left(\frac{1}{\zeta(\,w_1\zeta-1\,)}-\frac{1}{\zeta(\,\zeta w_2-1\,)}\right),
\]
and Cauchy’s theorem on a loop enclosing exactly one of $\{1/w_1,1/w_2\}$ (and avoiding $\zeta=0$ and the branch cut). This produces the difference–quotient in $w$ after substituting the residue into the outer $w_1,w_2$–integrals, exactly as written in Lemma~\ref{lem:composition}; Step~1 above shows that the same outcome persists by holomorphic approximation.
\end{proof}

\begin{remark}[$\beta=1$ indices by family]\label{rem:beta1-indices}
For Meixner we follow the manuscript and set $K_N=\sum_{k=0}^{2N-1}\phi_k\otimes\phi_k$, so $S_{N,1}=K_N+\tfrac12\,\phi_{2N}\otimes(T_h\epsilon\,\phi_{2N-1})$. For Charlier and Krawtchouk the projection is $\sum_{k=0}^{N-1}$ and the rank–one term is $\tfrac12\,\phi_{N}\otimes(T_h\epsilon\,\phi_{N-1})$. In Section~\ref{sec:kuznetsov-multiplier} below we use the large parameter $A$ to keep the cases uniform ($A=2N$ for Meixner and $A=N$ otherwise).
\end{remark}

\subsection{Asymptotics and the first finite–size term}
Let $A$ be the large parameter ($A=2N$ for Meixner; $A=N$ for Charlier/Krawtchouk).
Fix a bulk point $u$, let $w_\pm(u)$ be the two saddles for the one–variable phase of $K_N$ on admissible steepest–descent contours, and normalize $2\pi\Delta(u)\rho(u)=1$ as before.

\begin{theorem}[Bulk sine limit and $A^{-1}$ correction]\label{thm:bulk-correction-final}
With $x=\lfloor Au+s\Delta(u)^{-1}\rfloor$, $y=\lfloor Au+t\Delta(u)^{-1}\rfloor$,
\[
  \Delta(u)\,S^{(h)}_{N,4}(x,y)
  \ =\ \frac{\sin\pi(s-t)}{\pi(s-t)}
  \ +\ A^{-1}\,K^{(h)}_1(s,t;u)\ +\ O(A^{-2}),
\]
uniformly on compact $s,t$–sets and $u$ in compact bulk sets. Moreover, $K^{(h)}_1$ is obtained by freezing all bounded multipliers at $(w_+,w_-)$ and \emph{linearizing} the difference–quotient
\[
  Q(w_1,w_2):=\frac{M(w_1)-M(w_2)}{w_1-w_2}
\]
at $(w_+,w_-)$. Equivalently, it is the $A^{-1}$ term furnished by the general mechanism “linearize the difference–quotient and multiply by the frozen Gaussian amplitudes” (Proposition \ref{prop:first-correction}). \emph{Cf.} also the uniform two–saddle reduction (Lemma~\ref{lem:gauss}). \hfill
\end{theorem}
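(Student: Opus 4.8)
\noindent\textit{Proof proposal.} The plan is to transport the unspliced steepest--descent analysis of Section~\ref{sec:asymp-univ-long} to the spliced representation \eqref{eq:beta4-spliced-final}, treating $M(w)=\epsilon(w)\,m_h(w)$ as nothing more than a bounded holomorphic multiplier: once this reduction is in place, the leading term is Theorem~\ref{thm:meix-b4} verbatim and the $A^{-1}$ term is Proposition~\ref{prop:first-correction} applied with $M$ in the role of $m_\epsilon$. First I would record that the hypotheses of that machinery hold. By Theorem~\ref{thm:splice}, $S^{(h)}_{N,4}$ has the double--contour IIKS form \eqref{eq:beta4-spliced-final} on the fixed nested loops $\Gamma_1\subset\Gamma_2$ of the relevant family (for Charlier after the exact map \eqref{eq:Ch-wmap}), and by Assumption~\ref{ass:tests-final} the symbol $m_h$ is holomorphic and bounded on the slit sector $\mathsf S_\delta$ and on every admissible steepest--descent deformation used below. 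Since the universal symbol $\epsilon(w)=(w^2-1)^{-1}$ has poles only at $w=\pm1$, which the loops avoid, the product $M=\epsilon\,m_h$ — and hence the difference--quotient $Q(w_1,w_2)=\bigl(M(w_1)-M(w_2)\bigr)/(w_1-w_2)$, holomorphic across the diagonal by removable singularity — is holomorphic and bounded on and between the loops. For $u$ in a compact bulk set the two saddles $w_\pm(u)=e^{\pm i\theta(u)}$ of the one--variable phase $\Phi(\,\cdot\,;u)$ of $K_N$ lie in a compact subset of $\mathsf S_\delta\setminus\{\pm1\}$ bounded away from the cut $(-\infty,0]$ (this is where the bulk hypothesis excludes the edges $\theta\to0,\pi$), so the steepest arcs through $w_\pm(u)$ and the connecting deformations of $\Gamma_1,\Gamma_2$ can be kept inside $\mathsf S_\delta\setminus\{\pm1\}$. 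Thus the only change from the unspliced setting is the substitution $m_\epsilon\rightsquigarrow M$, legitimized at the level of the kernel by the holomorphic--symbol composition Lemma~\ref{lem:comp-holo-final}.

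Next I would run the leading--order reduction. Deforming $w_1,w_2$ to the steepest arcs through $w_\pm(u)$, the diagonal saddle pairs $(w_\pm,w_\pm)$ are exponentially subdominant and the two mixed pairs $(w_+,w_-)$ and $(w_-,w_+)$ dominate. Applying Lemma~\ref{lem:gauss} to each one--variable integral, freezing the bounded factor $Q$ at the relevant mixed saddle pair, and using $w_+w_-=1$ so that the Cauchy kernel $(w_1w_2-1)^{-1}$ produces the $1/(s-t)$ structure while the phase difference supplies the $\sin\pi(s-t)$ numerator, the spacing normalization \eqref{eq:spacing-rule} fixes the overall constant to $1$. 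This is word--for--word the proof of Theorem~\ref{thm:meix-b4}; nothing about $m_h$ beyond its boundedness at $w_\pm(u)$ enters, so the leading term is the sine kernel.

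The substance is the $A^{-1}$ term, obtained by carrying the two--saddle reduction one order further, i.e. by making the $(1+O(A^{-1}))$ in Lemma~\ref{lem:gauss} explicit. In local steepest coordinates $w_j=w_\pm+A^{-1/2}\,\xi_j/\sqrt{\Phi''(w_\pm;u)}+\cdots$ there are two sources of $A^{-1}$ contributions. The first is the standard subleading stationary--phase correction — the cubic $\Phi'''(w_\pm)$ and quartic $\Phi^{(4)}(w_\pm)$ terms of the phase together with the quadratic Taylor coefficients of the local amplitude (including the $s$-- and $t$--dependent microscopic oscillatory factors of Lemma~\ref{lem:gauss}) — which, after the Gaussian integrations, yields a multiple of the sine kernel plus its $\partial_s,\partial_t$--derivatives. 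The second is the \emph{linearization of $Q$}: writing $Q(w_1,w_2)=Q(w_+,w_-)+A^{-1/2}\bigl(\partial_{w_1}Q\,\xi_1+\partial_{w_2}Q\,\xi_2\bigr)+O(A^{-1})$, the $A^{-1/2}$ term is odd in each $\xi_j$ and integrates to zero on its own, but paired with the odd--in--$\xi_j$, $O(A^{-1/2})$, $s$-- (resp.\ $t$--) dependent piece of the local oscillatory factor it contributes at order $A^{-1}$; forming the $(w_+,w_-)\leftrightarrow(w_-,w_+)$ interference combines the $w_+$--side ($s$--dependent) and $w_-$--side ($t$--dependent) pieces into $(\partial_s-\partial_t)$ acting on the sine kernel. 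Collecting all of this with the coefficients frozen at $(w_+(u),w_-(u))$ gives $\Delta(u)\,S^{(h)}_{N,4}(x,y)=\dfrac{\sin\pi(s-t)}{\pi(s-t)}+A^{-1}K^{(h)}_1(s,t;u)+O(A^{-2})$, where $K^{(h)}_1$ has exactly the shape \eqref{eq:K1-shape-unspliced}--\eqref{eq:alpha-beta-unspliced} but with $M=\epsilon\,m_h$ — i.e. it is precisely the output of Proposition~\ref{prop:first-correction}. Since $m_h$ is even and real, Lemma~\ref{lem:even-symmetry} gives $m_h(w_-)=\overline{m_h(w_+)}$ on $|w|=1$, hence $M(w_-)=\overline{M(w_+)}$ and $\Phi''(w_-)=\overline{\Phi''(w_+)}$, so $K^{(h)}_1$ comes out real.

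Finally I would establish uniformity: on compact bulk $u$--sets and compact $s,t$--sets the steepest--descent tails are exponentially small uniformly (the arcs and the Hessians $\Phi''(w_\pm;u)$ vary continuously and stay nondegenerate), the second--order Taylor remainder of $Q$ is controlled by $\sup|\partial^2 Q|$ over a fixed complex neighborhood of $\{(w_+(u),w_-(u))\}$ — finite because $m_h$ is holomorphic and bounded on the fixed slit sector — the Gaussian--moment bounds are uniform by the estimates in Appendix~\ref{app:densities}, and floor/discreteness effects are absorbed exactly as in the unspliced proofs. The main obstacle is the $A^{-1}$ step: making the refined two--saddle expansion quantitative and uniform, and in particular verifying that the linear Taylor term of $Q$ couples to the microscopic variables so as to produce \emph{exactly} the combination $(\partial_s-\partial_t)$ on the sine kernel rather than some other first--order differential operator; once that bookkeeping is settled, everything else is a verbatim transcription of the unspliced arguments with $m_\epsilon$ replaced by the bounded holomorphic symbol $\epsilon\,m_h$.
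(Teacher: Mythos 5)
Your proof is correct and follows essentially the same route as the paper's (largely unwritten) argument: invoke Lemma~\ref{lem:comp-holo-final} to get the spliced kernel in IIKS form with $M=\epsilon\,m_h$, run the two--saddle Gaussian reduction of Lemma~\ref{lem:gauss} to freeze $Q$ and obtain the sine limit, and then read off the $A^{-1}$ term from the linearization mechanism of Proposition~\ref{prop:first-correction}/Proposition~\ref{prop:linearization-dictionary}. Your bookkeeping is in fact slightly more explicit and careful than the paper's Remark (which nominally attributes the $A^{-1}$ term \emph{solely} to the $Q$--linearization): you correctly note that the pure stationary--phase corrections contribute as well, which is consistent with the paper's own dictionary since those corrections are precisely what enters the constants $c_0,c_\pm$ in \eqref{eq:alpha-beta-unspliced}.
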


\begin{corollary}[$\beta=1$, spliced bulk]\label{cor:bulk-correction-beta1}
Under the hypotheses of Theorem~\ref{thm:bulk-correction-final},
\[
\Delta(u)\,S^{(h)}_{N,1}(x,y)\;=\;\frac{\sin\pi(s-t)}{\pi(s-t)}\;+\;A^{-1}\,K^{(h)}_1(s,t;u)\;+\;O(A^{-2}),
\]
with the same $K^{(h)}_1$ as in Theorem~\ref{thm:bulk-correction-final}, and an additional explicit separable $O(A^{-1})$ contribution coming from the rank–one term in $S^{(h)}_{N,1}$. The off–diagonal blocks converge to the GOE sine–kernel limits by multiplier insertion.
\end{corollary}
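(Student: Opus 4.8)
\emph{Proof plan.} By Theorem~\ref{thm:splice} and Remark~\ref{rem:beta1-indices} the scalar block decomposes as
\[
S^{(h)}_{N,1}(x,y)=K_N(x,y)+R_N(x,y),\qquad R_N(x,y):=\tfrac12\,\phi_{\mathsf A}(x)\,\bigl(T_h\epsilon\,\phi_{\mathsf A-1}\bigr)(y),
\]
where $\mathsf A=2N$ for Meixner and $\mathsf A=N$ for Charlier/Krawtchouk (for Charlier one first applies the exact $w$--map so that the contour symbols become the universal pair \eqref{eq:universal-symbols}; for Krawtchouk one works in the $v$--plane with \eqref{eq:mult-K}). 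The plan is to analyze the two summands on the admissible steepest--descent contours of the family --- the loops inside the slit sector $\mathsf S_\delta$ on which $m_h$ is bounded and holomorphic by Assumption~\ref{ass:tests-final} --- and then add the outputs.

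For the $K_N$ summand I would run the proof of Theorem~\ref{thm:bulk-correction-final} essentially verbatim: deform onto the steepest arcs through the two bulk saddles $w_\pm(u)$, apply the two--saddle Gaussian reduction of Lemma~\ref{lem:gauss} to each one--variable integral, freeze the bounded factors at $(w_+,w_-)$, and read off the $\sin\pi(s-t)/\pi(s-t)$ limit from the spacing normalization \eqref{eq:spacing-rule} and the Cauchy denominator. The order $A^{-1}$ term carried by this part is the one furnished by the mechanism of Proposition~\ref{prop:first-correction} --- linearize the difference--quotient $Q(w_1,w_2)=\bigl(M(w_1)-M(w_2)\bigr)/(w_1-w_2)$, $M(w)=\epsilon(w)\,m_h(w)$, at $(w_+,w_-)$ and keep the next Gaussian moment --- i.e.\ exactly the $K^{(h)}_1(s,t;u)$ of Theorem~\ref{thm:bulk-correction-final}. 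Uniformity in $(s,t)$ on compacts and in $u$ on compact bulk sets is inherited from that theorem, and the passage from $\mathsf A u+s\,\Delta(u)^{-1}$ to its integer part is absorbed into the same uniform $O(A^{-2})$ remainder.

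For the rank--one summand I would note that $\phi_{\mathsf A}(x)$ and $\bigl(T_h\epsilon\,\phi_{\mathsf A-1}\bigr)(y)$ are each single--contour integrals covered by Lemma~\ref{lem:gauss} (the operators $\epsilon$ and $T_h$ only insert the bounded symbols $(w^2-1)^{-1}$ and $m_h(w)$, frozen at the saddle), so each factor is $O(\mathsf A^{-1/2})$ and hence $R_N(x,y)=O(\mathsf A^{-1})$. Since $\Delta(u)=O(1)$, the term $\Delta(u)R_N(x,y)$ is genuinely of size $A^{-1}$ and simply adds to $K^{(h)}_1$, with no $O(\mathsf A^{-1/2})$ interference with the $O(1)$ sine limit. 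Because the degrees $\mathsf A$ and $\mathsf A-1$ are adjacent, the corresponding saddles and Gaussian amplitudes agree up to $O(\mathsf A^{-1})$, so a first--order saddle expansion yields $\Delta(u)R_N(x,y)=A^{-1}\,\tfrac12\,a(s;u)\,b(t;u)+O(A^{-2})$ with explicit separable $a,b$ assembled from the frozen saddle data $e^{\pm i\pi s}$, $e^{\pm i\pi t}$, $A_\pm(u)$ and $m_h(w_\pm)$; this is the additional explicit separable $O(A^{-1})$ contribution recorded in the statement. For the off--diagonal blocks I would invoke Theorem~\ref{thm:splice}: each is $S^{(h)}_{N,1}$ with a universal multiplier $D(w)=w-w^{-1}$ or $\epsilon(w)=(w^2-1)^{-1}$ inserted on one variable (Lemma~\ref{lem:comp-holo-final}); freezing that symbol at $w_\pm=e^{\pm i\theta(u)}$ turns the two--saddle interference into $(\partial_s-\partial_t)$ acting on the sine kernel for the $D$--block and into the integrated kernel for the $\epsilon$--block, so every entry converges to the corresponding GOE sine matrix kernel, again with an $A^{-1}$ correction of the same linearization type together with the separable rank--one remainder.

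The only genuinely new analytic content is the rank--one term, and I expect the main obstacle to be the \emph{adjacent--degree bookkeeping} inside it: the saddles $w_\pm$ and the $G$--prefactors of $\phi_{\mathsf A}$ and of $\epsilon\,\phi_{\mathsf A-1}$ differ at order $\mathsf A^{-1}$, and those $O(1)$ prefactor differences must be tracked to pin down $a(s;u)$ and $b(t;u)$ explicitly --- and to confirm that the product is $O(\mathsf A^{-1})$ and not merely $o(1)$ --- uniformly in $u$ and in the floor--rounded microscopic variables. By contrast the $K_N$ part is Theorem~\ref{thm:bulk-correction-final} essentially verbatim, and the off--diagonal statement is the standard multiplier insertion of Theorem~\ref{thm:splice} and Lemma~\ref{lem:comp-holo-final}.
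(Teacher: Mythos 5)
The paper does not actually supply a proof of this corollary, so I am assessing your plan on its own merits and for internal consistency with the framework. Your treatment of the rank--one term is sound: each of $\phi_{\mathsf A}(x)$ and $\bigl(T_h\epsilon\,\phi_{\mathsf A-1}\bigr)(y)$ is a single--contour integral whose one--variable steepest--descent asymptotics are $O(\mathsf A^{-1/2})$, the product is $O(\mathsf A^{-1})$, and the adjacent--degree bookkeeping you flag (saddles and $G$--prefactors of degrees $\mathsf A$ and $\mathsf A-1$ agreeing up to $O(\mathsf A^{-1})$) is exactly the subtlety one has to track to pin down the explicit separable $O(A^{-1})$ piece. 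That part of the plan is correct.

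The gap is in your treatment of the $K_N$ summand. You write that for $K_N$ you ``run the proof of Theorem~\ref{thm:bulk-correction-final} essentially verbatim,'' and that the order--$A^{-1}$ term of this part ``is the one furnished by\ldots linearize the difference--quotient $Q(w_1,w_2)=\bigl(M(w_1)-M(w_2)\bigr)/(w_1-w_2)$, $M(w)=\epsilon(w)\,m_h(w)$.'' This step does not go through. Theorem~\ref{thm:bulk-correction-final} applies to $S^{(h)}_{N,4}=K_N\,T_h\epsilon\,K_N$, whose double--contour integrand contains the difference--quotient $\bigl(M(w_1)-M(w_2)\bigr)/(w_1-w_2)$ produced by the Cauchy--multiplier composition of Lemma~\ref{lem:comp-holo-final}. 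In contrast, the first summand of $S^{(h)}_{N,1}$ is the \emph{bare projection kernel} $K_N$ of \eqref{eq:KN-double}/\eqref{eq:KCh-w}/\eqref{eq:KNK-double}, whose integrand carries only the Cauchy denominator $(w_1w_2-1)^{-1}$ and no factor depending on $\epsilon$ or $m_h$. The Kuznetsov symbol $m_h$ and the $\epsilon$--symbol enter $S^{(h)}_{N,1}$ \emph{only} through the separable rank--one piece $\tfrac12\,\phi_{\mathsf A}\otimes(T_h\epsilon\,\phi_{\mathsf A-1})$. Consequently, the $A^{-1}$ correction carried by the $K_N$ summand is the generic next Gaussian moment of the projection kernel (higher--order Taylor terms of the phase and of $(w_1w_2-1)^{-1}$), \emph{not} $K^{(h)}_1$ from Proposition~\ref{prop:linearization-dictionary}, since there is no $M$--quotient to linearize. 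If one sets $M\equiv 1$ (or absent), the quotients $Q_0,Q_a,Q_b$ vanish, so the $K^{(h)}_1$ mechanism contributes nothing from $K_N$. Your plan silently identifies two structurally different $A^{-1}$ objects; a correct argument must either (i) compute the genuine $A^{-1}$ correction to $\Delta(u)K_N$ from Lemma~\ref{lem:gauss} and show it equals the $K^{(h)}_1$ that appears in $S^{(h)}_{N,4}$ (a nontrivial identity that would need its own justification), or (ii) reinterpret the corollary's claim so that the $A^{-1}$ correction to $S^{(h)}_{N,1}$ is the correction to $K_N$ alone plus the separable rank--one term, in which case ``the same $K^{(h)}_1$'' is not literally correct and the statement should be adjusted. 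As written, your plan reproduces this ambiguity rather than resolving it.
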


\begin{proposition}[Linearization dictionary at bulk]\label{prop:linearization-dictionary}
Set
\[
Q_0(u)=\frac{M(w_+)-M(w_-)}{w_+-w_-},\quad
Q_a(u)=\frac{M'(w_+)(w_+-w_-)-\big(M(w_+)-M(w_-)\big)}{(w_+-w_-)^2},
\]
\[
Q_b(u)=\frac{\big(M(w_+)-M(w_-)\big)-M'(w_-)(w_+-w_-)}{(w_+-w_-)^2}.
\]
Then
\[
K^{(h)}_1(s,t;u)=\alpha_h(u)\,\frac{\sin\pi(s-t)}{\pi(s-t)}
+\beta_h(u)\,(\partial_s-\partial_t)\!\left[\frac{\sin\pi(s-t)}{\pi(s-t)}\right],
\]
with
\[
\alpha_h(u)=c_0(u)\,Q_0(u),\qquad
\beta_h(u)=c_+(u)\,\frac{Q_a(u)}{\Phi''(w_+;u)}-c_-(u)\,\frac{Q_b(u)}{\Phi''(w_-;u)},
\]
where $c_0,c_\pm$ are the same family–dependent Gaussian first–moment constants as in the unspliced analysis. (No new steepest–descent input is needed.)
\end{proposition}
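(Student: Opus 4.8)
The plan is to obtain Proposition~\ref{prop:linearization-dictionary} as the $M(w)=\epsilon(w)\,m_h(w)$ specialization of the two–saddle linearization already performed for the unspliced kernel in the worked dictionary \eqref{eq:K1-shape-unspliced}–\eqref{eq:QaQb-closed}. I would start from the spliced IIKS representation \eqref{eq:beta4-spliced-final}, in which the only $h$–dependent object is the difference–quotient $Q(w_1,w_2)=\bigl(M(w_1)-M(w_2)\bigr)/(w_1-w_2)$. By Assumption~\ref{ass:tests-final} the symbol $m_h$ is bounded and holomorphic on the admissible slit–sector contours, and by Lemma~\ref{lem:even-symmetry} it satisfies $\overline{m_h(w)}=m_h(1/\overline w)$; hence $M$, $M'$ and $Q$ inherit exactly the boundedness, holomorphy and reality properties used in the rational case $M=\epsilon$. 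In particular the composition lemma for holomorphic symbols (Lemma~\ref{lem:comp-holo-final}) applies, and — crucially — the phase $\Phi(\,\cdot\,;u)$, its two bulk saddles $w_\pm(u)$ (with $w_+w_-=1$ and $w_-=\overline{w_+}$), the density $\rho(u)$ and the spacing $\Delta(u)$ are those of the projection kernel $K_N$ itself: the multiplier does not move them, so no new steepest–descent input is needed.

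Next I would run the uniform two–saddle reduction of Lemma~\ref{lem:gauss} on \eqref{eq:beta4-spliced-final}, carried one order beyond the leading term. Deform the two loops onto the steepest arcs through $w_\pm(u)$, localize, and rescale $w_j=w_\pm+A^{-1/2}\Phi''(w_\pm;u)^{-1/2}\xi_j$; as in Theorem~\ref{thm:meix-b4}, the surviving contributions are the two ``crossing'' pairings (one variable at $w_+$, the other at $w_-$), whose interference builds the sine numerator through the Cauchy denominator $1/(w_1w_2-1)$, the ``parallel'' pairings being suppressed by fast oscillation. Freezing $Q$ at $(w_+,w_-)$ returns $Q_0(u)$ times the already–established sine limit, the spacing rule $2\pi\Delta(u)\rho(u)=1$ absorbing $Q_0(u)$ and the frozen amplitudes into the leading coefficient $1$. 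At the next order there are exactly two sources of $O(A^{-1})$ data: (a) the standard subleading Laplace corrections at each saddle — the higher Taylor coefficients of $\Phi$ and the subleading Jacobian — which enter only through the family–dependent Gaussian constants $c_0(u)$ and $c_\pm(u)$ that already appear in Theorems~\ref{thm:meix-b4}–\ref{thm:meix-b1} and are untouched by $m_h$; and (b) the first–order Taylor expansion of $Q$ about $(w_+,w_-)$, whose coefficients are, by the quotient rule
\[
\partial_{w_1}Q(w_1,w_2)=\frac{M'(w_1)(w_1-w_2)-\bigl(M(w_1)-M(w_2)\bigr)}{(w_1-w_2)^2}
\]
and its $w_2$–mirror, precisely the quantities $Q_a(u)$ and $Q_b(u)$ displayed in the statement (up to the sign carried by the $w_2$–channel, which is absorbed into the $(\partial_s-\partial_t)$ combination). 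Terms of $Q$ beyond first order feed only the $O(A^{-2})$ remainder of Theorem~\ref{thm:bulk-correction-final}.

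It then remains to assemble. Source (a) with $Q$ frozen produces the amplitude–type correction $\alpha_h(u)\,\frac{\sin\pi(s-t)}{\pi(s-t)}$ with $\alpha_h(u)=c_0(u)Q_0(u)$. In source (b) the insertion of a factor linear in $\xi_j$, integrated against the Cauchy–tilted Gaussian, produces a first moment of the local variable; since $\xi_1$ (resp.\ $\xi_2$) is tied to the microscopic coordinate $s$ (resp.\ $t$) through the phase $e^{\mp\pi i s}$ of Lemma~\ref{lem:gauss}, this first moment is the image of $\partial_s$ (resp.\ $-\partial_t$) acting on the sine limit, the two half–powers $\Phi''(w_\pm;u)^{-1/2}$ collapsing to the single power $\Phi''(w_\pm;u)^{-1}$. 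Collecting the two saddles yields the derivative term $\beta_h(u)(\partial_s-\partial_t)\bigl[\frac{\sin\pi(s-t)}{\pi(s-t)}\bigr]$ with $\beta_h(u)=c_+(u)Q_a(u)/\Phi''(w_+;u)-c_-(u)Q_b(u)/\Phi''(w_-;u)$, which is the claimed formula. Reality of $K^{(h)}_1$ follows from $w_-=\overline{w_+}$, $\Phi''(w_-;u)=\overline{\Phi''(w_+;u)}$, the conjugation relation $c_-=\overline{c_+}$, and $\overline{M(w_+)}=M(w_-)$, $\overline{M'(w_+)}=M'(w_-)$ on $|w|=1$ (the last two being the content of Lemma~\ref{lem:even-symmetry}), exactly as in the unspliced case; uniformity for $u$ in compact bulk sets is inherited from the uniform steepest–descent bounds (Lemma~\ref{lem:gauss} and Appendix~B), since $m_h$ and $m_h'$ are uniformly bounded on the fixed admissible contours.

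The step I expect to be the real obstacle is (b): tracking correctly how the local first moments generated by the Cauchy denominator's tilt convert into $(\partial_s-\partial_t)$ on the sine kernel, and pinning down the relative signs and the factors of $i$ between the two crossing pairings and between the $Q_a$– and $Q_b$–channels, so that $\beta_h$ emerges in the stated, manifestly real form. This, however, is word–for–word the bookkeeping already done for the unspliced dictionary \eqref{eq:K1-shape-unspliced}–\eqref{eq:QaQb-closed}: replacing $\epsilon$ by $\epsilon\,m_h$ in $M$ changes neither the phase, nor the saddles, nor the constants $c_0,c_\pm$, and $m_h$ is bounded, holomorphic and reality–symmetric on all admissible contours, so every estimate and every algebraic manipulation transfers verbatim — which is also why the same $K^{(h)}_1$ appears in Theorem~\ref{thm:bulk-correction-final}.
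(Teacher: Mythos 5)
Your proposal is correct and takes essentially the same route as the paper: the paper gives Proposition~\ref{prop:linearization-dictionary} no standalone proof, instead presenting the unspliced dictionary \eqref{eq:K1-shape-unspliced}--\eqref{eq:QaQb-closed} as a worked example and then noting in the surrounding remarks that the spliced version ``repeats precisely the same mechanism after replacing $M$ by $M(w)=\epsilon(w)\,m_h(w)$,'' which is exactly your reduction. You correctly identify the three hypotheses that make this transfer legitimate --- $m_h$ bounded and holomorphic on the admissible loops (Assumption~\ref{ass:tests-final}), the reality symmetry (Lemma~\ref{lem:even-symmetry}), and the fact that $\Phi$, $w_\pm(u)$, $\rho(u)$, $\Delta(u)$ and hence $c_0,c_\pm$ belong to $K_N$ and are untouched by the multiplier --- and your split into (a) subleading Laplace data $\times$ frozen $Q_0$ and (b) first--order Taylor of $Q$ giving $Q_a,Q_b$ matches the mechanism the paper encodes in Lemma~\ref{lem:gauss} and Proposition~\ref{prop:first-correction}. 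The extra detail you supply (the local rescaling $w_j=w_\pm+\mathsf A^{-1/2}\Phi''(w_\pm;u)^{-1/2}\xi_j$, the crossing--versus--parallel pairings, the conversion of local first moments into $\partial_s-\partial_t$) is a faithful unpacking of what the paper leaves implicit rather than a different argument.
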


\begin{remark}
The $A^{-1}$ source is \emph{only} the linearization of the difference–quotient; all bounded multipliers (including the Cauchy denominator) freeze at leading order, and their linear parts integrate to zero by oddness on steepest descent, as in the unspliced case.%
%
\end{remark}

\begin{theorem}[Soft edge]\label{thm:edge-final}
At a soft edge $u_\ast$ with coalesced saddle $w_\ast$, under the standard $A^{2/3}$ scaling the scalar block converges to the Airy kernel multiplied by the {\em diagonal derivative}
\[
M'(w_\ast)=\epsilon'(w_\ast)\,m_h(w_\ast)+\epsilon(w_\ast)\,m_h'(w_\ast),
\]
because $\frac{M(w_1)-M(w_2)}{w_1-w_2}\to M'(w_\ast)$ along the Airy scaling ($w_1,w_2\to w_\ast$). The first correction is obtained from the next Taylor terms at $w_\ast$ exactly as in the unspliced edge analysis (cubic reduction; see Lemma~\ref{lem:airy}). \emph{In the $\beta=1$ case, the rank–one piece in $S^{(h)}_{N,1}$ is $O(\mathsf A^{-2/3})$ and hence negligible under the $\mathsf A^{-1/3}$ edge scaling; the off–diagonal blocks follow by multiplier insertion.}
\end{theorem}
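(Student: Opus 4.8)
The plan is to repeat the unspliced soft–edge argument (Theorem~\ref{thm:airy-edge}, Lemma~\ref{lem:airy}, Remark~\ref{rem:soft-edge-unspliced}) starting from the spliced representation \eqref{eq:beta4-spliced-final}, in which $M(w)=\epsilon(w)\,m_h(w)$ with the universal symbols \eqref{eq:universal-symbols}, and to isolate the single new feature: along the Airy scaling the divided difference of $M$ collapses to the constant $M'(w_\ast)$. First I would record that, by Assumption~\ref{ass:tests-final}, $m_h$ is bounded and holomorphic on every admissible steepest–descent deformation inside the slit sector $\mathsf S_\delta$ avoiding $\{\pm1\}$; since the coalesced saddle $w_\ast$ lies on such a deformation and is bounded away from $\pm1$, the factor $\epsilon(w)=(w^2-1)^{-1}$ is holomorphic there as well, so $M=\epsilon\,m_h$ is holomorphic on a fixed neighbourhood of $w_\ast$. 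I would then deform the two nested loops in \eqref{eq:beta4-spliced-final} to the steepest descent/ascent arcs through $w_\ast$ and apply the standard $A^{2/3}$ rescaling of $x,y$ about $u_\ast$, exactly as in Lemma~\ref{lem:airy}.

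Next I would split the integrand of \eqref{eq:beta4-spliced-final} as $\dfrac{G(w_1;x)G(w_2;y)}{w_1w_2-1}\bigl(M'(w_\ast)+[Q(w_1,w_2)-M'(w_\ast)]\bigr)$, where $Q(w_1,w_2)=\dfrac{M(w_1)-M(w_2)}{w_1-w_2}$. On the rescaled arcs $w_j=w_\ast+O(A^{-1/3})$, so $Q(w_1,w_2)=M'(w_\ast)+O(A^{-1/3})$ uniformly, by holomorphy of $M$ near $w_\ast$. The $M'(w_\ast)$–term equals $M'(w_\ast)$ times $\frac{1}{(2\pi i)^2}\iint\frac{G(w_1;x)G(w_2;y)}{w_1w_2-1}\,dw_1\,dw_2$, which is precisely the projection kernel \eqref{eq:KN-double}; its soft–edge limit is the Airy kernel by the cubic reduction of Lemma~\ref{lem:airy}, so the scalar block converges to $M'(w_\ast)$ times the Airy kernel. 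To evaluate the prefactor I would use $M'=(\epsilon m_h)'$, i.e.\ $M'(w_\ast)=\epsilon'(w_\ast)\,m_h(w_\ast)+\epsilon(w_\ast)\,m_h'(w_\ast)$, where $m_h'(w)=\int_{\R}h(t)\,(-2it)\,w^{-2it-1}\,\dd t$ is obtained by differentiating \eqref{eq:mk-def-final} under the integral (legitimate under either (H1) or (H2), the extra factor $-2it$ keeping absolute convergence on $\mathsf S_\delta$). For the first correction I would carry the Taylor expansion of $Q$ one step further, $Q(w_1,w_2)=M'(w_\ast)+\tfrac12 M''(w_\ast)\bigl((w_1-w_\ast)+(w_2-w_\ast)\bigr)+O(A^{-2/3})$, together with the analogous expansions of $(w_1w_2-1)^{-1}$ and the one–variable amplitudes; under $w_j-w_\ast\asymp A^{-1/3}$ the linear term contributes at order $A^{-1/3}$, and integrating each monomial $w_j-w_\ast$ against the cubic–phase weight turns it into a derivative of the Airy integral, giving the explicit $A^{-1/3}$ correction in complete parallel with Remark~\ref{rem:soft-edge-unspliced}.

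For $\beta=1$ the only extra term in $S^{(h)}_{N,1}=K_N+\tfrac12\,\phi_{2N}\otimes(T_h\epsilon\,\phi_{2N-1})$ is the rank–one piece $\tfrac12\,\phi_{2N}(x)\,(T_h\epsilon\,\phi_{2N-1})(y)$; each single–contour factor at the coalesced saddle is governed by the same cubic phase and is therefore $O(A^{-1/3})$, so the product is $O(A^{-2/3})$, negligible under the $A^{-1/3}$ edge rescaling and of lower order than the scalar $A^{-1/3}$ correction. The off–diagonal blocks $S^{(h)}_{N,1}D$ and $\epsilon\,S^{(h)}_{N,1}$ follow by inserting the bounded universal multipliers $D(w)=w-w^{-1}$ and $\epsilon(w)=(w^2-1)^{-1}$ on the appropriate variable; both freeze at $w_\ast$ (finite since $w_\ast\neq\pm1$) and merely multiply the Airy block by the corresponding constant. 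Uniformity on fixed edge windows is inherited from Lemma~\ref{lem:airy} and the boundedness of $m_h$ there.

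The step I expect to be the main obstacle is the uniform control of $Q(w_1,w_2)\to M'(w_\ast)$ and of its $O(A^{-1/3})$ remainder along the \emph{non-compact} steepest arcs: one must verify that inserting the bounded but non-decaying holomorphic symbol $m_h$ does not spoil the tails of the Airy contours, i.e.\ that the cubic reduction of Lemma~\ref{lem:airy} stays uniform after the insertion. I expect this to reduce to the observation that $m_h$ is holomorphic and bounded on a fixed neighbourhood of the whole deformed contour, which in turn forces the admissibility hypothesis to require that the soft–edge saddle lie strictly inside $\mathsf S_\delta$ and away from $\{\pm1\}$; once that is granted the remainder estimate is the one already used in the unspliced edge analysis.
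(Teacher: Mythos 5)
Your proposal is correct and follows essentially the same route the paper takes; the paper gives Theorem~\ref{thm:edge-final} without a separate proof because the argument is, as you correctly reconstruct, a direct replay of Lemma~\ref{lem:airy} and Remark~\ref{rem:soft-edge-unspliced} with the difference quotient of $M=\epsilon\,m_h$ collapsing to $M'(w_\ast)$ at the coalesced saddle, the prefactor computed by the product rule, and the $\beta=1$ rank--one piece dismissed as $O(A^{-2/3})$. You also correctly flag the only genuinely delicate point (uniform control of the divided--difference remainder along the non-compact Airy contours) and correctly resolve it by the local/outer split of Appendix~B together with boundedness of $m_h$ on the fixed deformed contour; this matches the paper's implicit reliance on the same mechanism.
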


\subsection{Worked example: Charlier with Gaussian test}
For $h_\sigma(t)=e^{-\sigma t^2}$,
\[
  m_h(w)=\sqrt{\frac{\pi}{\sigma}}\exp\!\Big(-\frac{(\log w)^2}{\sigma}\Big),\qquad
  m_h'(w)=-\frac{2\log w}{\sigma\,w}\,m_h(w).
\]
With $M(w)=\frac{m_h(w)}{w^2-1}$ one has
\[
\ M'(w)=\frac{(w^2-1)m_h'(w)-2w\,m_h(w)}{(w^2-1)^2}\ .
\]
At a soft edge $w_\ast$ this yields
\[
  M'(w_\ast)=\frac{-2w_\ast}{(w_\ast^2-1)^2}\,m_h(w_\ast)\ -\ \frac{2\log w_\ast}{\sigma\,w_\ast(w_\ast^2-1)}\,m_h(w_\ast).
\]
For bulk $u$, compute the two saddles $w_\pm(u)$ for $K_N$ (or solve the $t$–quadratic and map by $t=\tfrac{\sqrt\theta}{2}(w-1/w)$), then form $Q_0,Q_a,Q_b$ from Proposition~\ref{prop:linearization-dictionary} and combine with the same Gaussian moments as in the unspliced case to obtain $\alpha_\sigma(u),\beta_\sigma(u)$.

\begin{remark}[β=1 rank–one in Fredholm–Pfaffians]\label{rem:pfaff-update}
For gap probabilities, treat the scalar block $S^{(h)}_{N,1}$ as a rank–one perturbation of $K_N$ in the sense of Fredholm–Pfaffians \cite{deBruijn,OrtmannQuastelRemenik2017}: write $S^{(h)}_{N,1}=K_N+\frac12\,u\otimes v$ with $u=\phi_{2N}$ and $v=T_h\epsilon\,\phi_{2N-1}$, and apply the standard Fredholm–Pfaffian rank–one update (as discussed in Proposition~\ref{prop:first-correction}). This contributes an explicit $O(A^{-1})$ correction in fixed windows.
\end{remark}

\subsection{Context and related work}
Kuznetsov’s trace formula inserts a spectral test $h$ via an archimedean Bessel/Hankel transform; in our setting the same input appears as a \emph{bounded holomorphic multiplier} $m_h$ in the contour variable. The IIKS composition lemma shows that such multipliers are absorbed by a single difference–quotient, preserving integrability and the Pfaffian block structure, while the steepest–descent scheme and the identification of the $A^{-1}$ term via \emph{linearization at the saddle(s)} carry over verbatim.%
%

\subsection{Checklist for applications}
\begin{itemize}
  \item \textbf{Contours.} Use the same nested loops as for $K_N$; they avoid $\{\pm1\}$ and lie in an annulus where $m_h$ is bounded/holomorphic (Assumption~\ref{ass:tests-final}), with a uniform angular gap $|\arg w|\le \pi-\delta$ on the principal branch of $\log w$.
  \item \textbf{Splicing rule.} Replace $\epsilon(w)$ by $M(w)=\epsilon(w)m_h(w)$ inside the difference–quotient. For $\beta=1$, keep the rank–one term with $T_h$ acting on $\epsilon\phi_{2N-1}$.
  \item \textbf{Asymptotics.} Leading sine/Airy/Bessel limits are unchanged. The $A^{-1}$ (or edge $A^{-1/3}$) term is obtained by linearizing the difference–quotient at the relevant saddle(s).
  \item \textbf{Off–diagonal blocks.} Insert $D(w)$ or $\epsilon(w)$ in the integrand on the appropriate variable exactly as in the unspliced Pfaffian rules.
\end{itemize}

\medskip
Therefore, Kuznetsov’s archimedean input “splices” into the discrete $\beta=1,4$ IIKS framework by a bounded contour multiplier, leaving the universal limits intact and producing subleading terms by the same one–line mechanism (difference–quotient linearization) that governs the unspliced kernels.%

\section{Conclusions and Outlook}

We gave explicit double--contour (IIKS) formulas for the $\beta=1,4$ kernels in the Meixner, Charlier and Krawtchouk families and proved bulk/edge universality with uniform error control, including an explicit Meixner$\to$Laguerre hard--edge crossover. The first subleading term arises from a single source---the linearization of the IIKS difference--quotient at the relevant saddle(s)---and In contrast to the Riemann–Hilbert approach employed in \cite{Deift} for $\beta=2$ or in other studies for $\beta=1,4$ our method maintains the analysis in the original contour integral form. This approach simplifies the steepest descent analysis and error estimates.

It is classical that Kuznetsov's trace formula inserts a spectral test $h$ via an archimedean Bessel/Hankel transform on the spectral side~\cite{Kuznetsov1981,Iwaniec2002SpectralMethods,Motohashi1997ZetaSpectral}; here we showed that, within the IIKS contour formalism for the discrete $\beta\in\{1,4\}$ kernels, this corresponds to multiplication in the contour variable by the bounded holomorphic symbol
\[
m_h(w)=\int_{\mathbb{R}} h(t)\,w^{-2it}\,dt
\]
on the admissible slit–sector loops (cf.\ Assumption~\ref{ass:tests-final}),
so that the Pfaffian blocks acquire the universal difference–quotient with
$M(w)=\epsilon(w)\,m_h(w)$ (cf.\ Theorem~\ref{thm:splice}). Consequently, the
leading sine/Airy/Bessel limits are unchanged, and the first finite--size term
again follows from the same linearization (cf.\ Proposition~\ref{prop:first-correction}
and Theorem~\ref{thm:bulk-correction-final}).\\

\paragraph{Outlook.}
The same \emph{bounded--multiplier principle} extends with essentially no extra technology to several nearby settings.
\begin{itemize}
  \item \emph{Arithmetically flavored deformations.} Twists implemented by rational (or bounded holomorphic) symbols---such as congruence thinnings, Dirichlet--character weights, or mild Euler--factor normalizations---enter by the same difference--quotient insertion. The IIKS form and admissible contours are unchanged, so bulk/edge limits persist and the $A^{-1}$ term is again read off from the first linearization. 
  \item \emph{Mesoscopic consequences}. The fixed--window $O(A^{-1})$ rates for the kernels feed into number--variance and linear--statistics bounds and, in a mesoscopic window $L\ll A^{1/2}$, into CLT--type statements for smooth test functions.
  \item \emph{Interfaces with representation theory}. The $z$--measure/Schur--process kernels fit the same contour/IIKS template, so the bulk/edge asymptotics and $A^{-1}$ linearization transfer directly once the multipliers stay bounded on admissible contours. 
\item \emph{Number theory via $z$–measures.} The contour/IIKS form dovetails with the hypergeometric
structures in $z$–measures on partitions. The same bounded–multiplier principle should yield bulk and
edge asymptotics for correlation kernels arising from $z$–measures and related Schur–type processes,
including precise crossover regimes and $A^{-1}$ corrections, with minimal additional work on
admissible contours.
\item \emph{External sources and deformations.} Small perturbations of the weights (or analytic parameter
changes such as $\xi\uparrow 1$) preserve integrability of the kernel and the boundedness of multipliers
on fixed contours; our arguments then can give universality and controlled crossovers.
\item \emph{Isomonodromy links.} While we avoided Riemann–Hilbert methods, the IIKS form is also the
standard starting point for isomonodromic/Painlevé analysis of gap probabilities; our explicit double
contours and multipliers should make those identifications straightforward in the discrete setting.\\

\end{itemize}

\paragraph{Practical by--products.}
The explicit contours/numerator factors also make finite--$N$ Fredholm/Pfaffian numerics straightforward and suggest incorporating the $A^{-1}$ correction to accelerate convergence of gap probabilities.

\bibliographystyle{plainurl}  
\bibliography{refs}

\newpage

\appendix

\section{Explicit bulk density \texorpdfstring{$\rho(u)$}{rho(u)} and microscopic spacing \texorpdfstring{$\Delta(u)$}{Delta(u)}}\label{app:densities}

We compute the macroscopic density $\rho(u)$ and the microscopic spacing $\Delta(u)$ directly from the one–variable factors in the projection kernels. We work solely with the nested double–contour representations and the corresponding one–variable phases, with no appeal to Riemann–Hilbert analysis. Throughout we keep the contour conventions and boundedness of rational multipliers from the main text.

\subsection*{General principle (common to all three families)}
Let $K(x,y)$ denote the projection kernel in the relevant family written in its double–contour form \eqref{eq:KN-double}/\eqref{eq:KCh-double}/\eqref{eq:KNK-double}. Let $z$ be the corresponding contour variable and $\Phi(z;u)$ the one–variable phase extracted from the integrand with $x\approx \mathsf A u$. In the bulk there are two simple saddles $z_\pm(u)$ on admissible deformations of the fixed contours, with $\Re\Phi(z_\pm;u)$ equal and maximal, and $\Im\Phi(z_+;u)=-\Im\Phi(z_-;u)$.

Under the microscopic scaling
\[
x=\big\lfloor \mathsf A\,u + s\,\Delta(u)^{-1}\big\rfloor,\qquad
y=\big\lfloor \mathsf A\,u + t\,\Delta(u)^{-1}\big\rfloor,
\]
the two steepest–descent contributions acquire phases $e^{\mp \pi i s}$ and $e^{\pm \pi i t}$, respectively. The relative phase per lattice step is determined by the variation of the arguments of the saddles:
\[
\rho(u)=\frac{1}{2\pi}\,\partial_u\!\big(\arg z_+(u)-\arg z_-(u)\big),
\qquad
2\pi\,\Delta(u)\,\rho(u)=1.
\]
Thus, once $z_\pm(u)$ are solved from the saddle equation $\partial_z\Phi(z;u)=0$ (on admissible contours), $\rho(u)$ is read off from the derivative of their argument gap, and $\Delta(u)$ is recovered from the spacing rule. The bounded rational multipliers (the symbols of $D$ and $\epsilon$ in \eqref{eq:multiplier}, \eqref{eq:Ch-multipliers-w}, \eqref{eq:mult-K}) do not affect $\rho$ and $\Delta$, as they only modify finite amplitudes.

\subsection*{Meixner: explicit $\rho_M(u)$ and $\Delta_M(u)$}
From \eqref{eq:KN-double} and \eqref{eq:Gm-def}, the one–variable phase in the $\omega$–plane is
\[
\Phi_M(\omega;u)=\log\!\Big(1-\frac{s}{\omega}\Big)-\log(1-s\omega)-(u-1)\log\omega,\qquad s=\sqrt{\xi}\in(0,1).
\]
The saddle equation $\partial_\omega\Phi_M(\omega;u)=0$ simplifies to the quadratic identity
\[
u\,s\,\omega^2+\Big((1-u)-(1+u)s^2\Big)\omega+u\,s=0,
\]
with product $\omega_+(u)\omega_-(u)=1$, hence in the bulk $\omega_\pm(u)=e^{\pm i\theta(u)}$. Taking the sum gives
\[
2\cos\theta(u)=\omega_+(u)+\omega_-(u)=\frac{u(1+s^2)+(s^2-1)}{s\,u}.
\]
Therefore the bulk support is
\[
u\in\Big(\frac{1-s}{1+s},\,\frac{1+s}{1-s}\Big),
\]
and
\[
\rho_M(u)=\frac{1}{\pi}\,\theta'(u)=\frac{1-s^2}{2\pi s\,u^2}\,
\frac{1}{\sqrt{\,1-\Big(\dfrac{u(1+s^2)+(s^2-1)}{2su}\Big)^2\,}}\!,
\qquad
\Delta_M(u)=\frac{1}{2N\,\rho_M(u)}.
\]
The bounded multipliers $\widehat D,\widehat\epsilon$ in \eqref{eq:multiplier} do not affect the saddle locations and only change finite amplitudes, hence they play no role in the density.

\subsection*{Charlier: explicit $\rho_{\mathrm{Ch}}(u;\tau)$ and $\Delta_{\mathrm{Ch}}(u)$}
From \eqref{eq:KCh-double} and the generating function \eqref{eq:EGF-Ch}, the $t$–phase is
\[
\Phi_{\mathrm{Ch}}(t;u,\tau)=u\log(1+t)-\tau\,t-\log t,\qquad \tau=\lim_{N\to\infty}\frac{\theta_N}{N}\in(0,\infty).
\]
The saddle equation $\partial_t\Phi_{\mathrm{Ch}}=0$ gives
\[
\tau\,t^2+(\tau+1-u)\,t+1=0.
\]
In the bulk, the discriminant is negative, equivalent to $u\in(1+\tau-2\sqrt{\tau},\,1+\tau+2\sqrt{\tau})$. Parameterizing the saddles as $t_\pm(u)=\tau^{-1/2}e^{\pm i\theta(u)}$ yields
\[
2\cos\theta(u)=\tau^{1/2}\big(t_+(u)+t_-(u)\big)=-\frac{\tau+1-u}{\sqrt{\tau}}
\quad\Longrightarrow\quad 
\cos\theta(u)=\frac{u-(1+\tau)}{2\sqrt{\tau}}.
\]
Hence
\[
\rho_{\mathrm{Ch}}(u;\tau)=\frac{1}{\pi}\,\theta'(u)=\frac{1}{2\pi\sqrt{\tau}}\,
\frac{1}{\sqrt{\,1-\Big(\dfrac{u-(1+\tau)}{2\sqrt{\tau}}\Big)^2\,}}\!,
\qquad
\Delta_{\mathrm{Ch}}(u)=\frac{1}{N\,\rho_{\mathrm{Ch}}(u;\tau)}.
\]
\emph{Exact Meixner transfer.} Under the exact $w$–map \eqref{eq:Ch-wmap}, the multipliers become universal \eqref{eq:Ch-multipliers-w} and the kernel takes the IIKS form \eqref{eq:KCh-w}; thus the Meixner derivation above can be pulled back verbatim, but the direct $t$–plane computation already yields the same $\rho_{\mathrm{Ch}}$.

\subsection*{Krawtchouk: explicit $\rho_K(u;\gamma,p)$ and $\Delta_K(u)$}
From \eqref{eq:KNK-double} and \eqref{eq:GF-K}, the $v$–phase is
\[
\Phi_K(v;u,\gamma)=(1-u)\log(1+pv)+u\log(1-qv)-\gamma\log v,\qquad \gamma=\frac{N}{M},\quad q=1-p.
\]
The saddle equation $\partial_v\Phi_K=0$ reduces to
\[
pq(1-\gamma)\,v^2-\Big[(p-u)-\gamma(p-q)\Big]v+\gamma=0.
\]
In the bulk the roots are conjugate, which is equivalent to
\[
u\in\Big(p-\gamma(p-q)-2\sqrt{\gamma(1-\gamma)pq},\;p-\gamma(p-q)+2\sqrt{\gamma(1-\gamma)pq}\,\Big).
\]
Writing $v_\pm(u)=R\,e^{\pm i\theta(u)}$ with $R^2=\gamma/(pq(1-\gamma))$ gives
\[
2R\cos\theta(u)=v_+(u)+v_-(u)=\frac{(p-u)-\gamma(p-q)}{pq(1-\gamma)}
\quad\Longrightarrow\quad
\cos\theta(u)=\frac{(p-u)-\gamma(p-q)}{2\sqrt{\gamma(1-\gamma)pq}}.
\]
Therefore
\[
\rho_K(u;\gamma,p)=\frac{1}{\pi}\,\theta'(u)=\frac{1}{2\pi\sqrt{\gamma(1-\gamma)pq}}\,
\frac{1}{\sqrt{\,1-\Big(\dfrac{(p-u)-\gamma(p-q)}{2\sqrt{\gamma(1-\gamma)pq}}\Big)^2\,}}\!,
\qquad
\Delta_K(u)=\frac{1}{M\,\rho_K(u;\gamma,p)}.
\]

\subsection*{Summary table}
For convenience we collect the explicit formulas proved above. In each case $\Delta(u)=(\mathsf A\,\rho(u))^{-1}$ with $\mathsf A=2N$ (Meixner), $\mathsf A=N$ (Charlier), $\mathsf A=M$ (Krawtchouk).
\medskip

\renewcommand{\arraystretch}{1.2}
\begin{center}
\begin{tabular}{l|c|c|c}
\hline
Family & Bulk support $(u_-,u_+)$ & Angle $\cos\theta(u)$ & Density $\rho(u)$ \\ \hline
Meixner & $\Big(\dfrac{1-s}{1+s},\,\dfrac{1+s}{1-s}\Big)$ &
$\displaystyle \frac{u(1+s^2)+(s^2-1)}{2su}$ &
$\displaystyle \frac{1-s^2}{2\pi s u^2}\,
\frac{1}{\sqrt{1-\big(\frac{u(1+s^2)+(s^2-1)}{2su}\big)^2}}$ \\[1.1em]
Charlier & $(1+\tau-2\sqrt{\tau},\,1+\tau+2\sqrt{\tau})$ &
$\displaystyle \frac{u-(1+\tau)}{2\sqrt{\tau}}$ &
$\displaystyle \frac{1}{2\pi\sqrt{\tau}}\,
\frac{1}{\sqrt{1-\big(\frac{u-(1+\tau)}{2\sqrt{\tau}}\big)^2}}$ \\[1.1em]
Krawtchouk & $p-\gamma(p-q)\pm 2\sqrt{\gamma(1-\gamma)pq}$ &
$\displaystyle \frac{(p-u)-\gamma(p-q)}{2\sqrt{\gamma(1-\gamma)pq}}$ &
$\displaystyle \frac{1}{2\pi\sqrt{\gamma(1-\gamma)pq}}\,
\frac{1}{\sqrt{1-\big(\frac{(p-u)-\gamma(p-q)}{2\sqrt{\gamma(1-\gamma)pq}}\big)^2}}$ \\
\hline
\end{tabular}
\end{center}

\medskip
\noindent\textbf{Remarks.}
(i) At either endpoint $u_\pm$ of the bulk support, $\rho(u)\sim C_\pm\,\sqrt{|u-u_\pm|}$, matching the soft–edge cubic reduction used in the Airy limits. 
(ii) The bounded multipliers (Meixner \eqref{eq:multiplier}; Charlier \eqref{eq:Dhat-Ch-t}–\eqref{eq:eps-hat-Ch-t}; Krawtchouk \eqref{eq:mult-K}) do not change $\rho(u)$: they shift only finite amplitudes and thus contribute at subleading orders in the steepest–descent evaluation.

\section{Uniform steepest–descent estimates}\label{app:steepest}

This appendix records the standard local/outer split for the one–variable contour integrals used in Section~\ref{sec:asymp-univ-long}, with bounds uniform for $u$ in compact bulk sets and for soft–edge windows; see also \cite{DeiftZhou1993} for the general RHP steepest–descent framework. We choose local coordinates so that the phase function has the form $e^{i N \theta(z)}$ with $\theta'(z_{\pm})=0$ at saddles $z_{\pm}$, then expand $\theta(z)$ to second order to get a Gaussian integral; by ensuring no other stationary points interfere, one controls the error $O(N^{-1})$ uniformly.

\subsection*{Bulk: uniform Gaussian reduction}
Fix a bulk $u$ and let $z_\pm(u)$ be the two simple saddles on admissible steepest–descent arcs. Introduce local charts $\zeta_\pm$ by $z=z_\pm e^{\zeta_\pm}$ on small sectors containing the steepest directions. Then
\[
\Phi(z;u)=\Phi(z_\pm;u)+\tfrac12\Phi''(z_\pm;u)\,\zeta_\pm^2+R_\pm(\zeta_\pm;u),\qquad |R_\pm(\zeta_\pm;u)|\le C|\zeta_\pm|^3
\]
for $|\zeta_\pm|\le \zeta_0$, with $C,\zeta_0$ independent of $u$ in compact bulk sets. Split each contour into the ``local'' piece $|\zeta_\pm|\le \mathsf A^{-1/3}$ and its complement:
\begin{itemize}
\item On the complement, $\Re(\Phi-\Phi(z_\pm))\le -c|\zeta_\pm|^2$ along steepest descent, hence the contribution is $O(e^{-c\mathsf A^{1/3}})$.
\item On the local piece, replace the integrand by its quadratic Taylor expansion and evaluate the Gaussian exactly. Shifts $x\mapsto x+s\,\Delta^{-1}$ and $y\mapsto y+t\,\Delta^{-1}$ contribute phase factors $e^{\mp\pi i s}$, $e^{\pm\pi i t}$; the spacing rule $2\pi\Delta\rho=1$ enforces the $2\pi$ phase change per lattice step.
\end{itemize}
This proves Lemma~\ref{lem:gauss} with the uniform window $|s|,|t|\le \mathsf A^\delta$ for any $\delta<\tfrac12$ and the $O(\mathsf A^{-1})$ error.

\subsection*{Soft edge: cubic/Airy normal form and the scaling constant}
At a soft edge $(z_\ast,u_\ast)$ with $\Phi'(z_\ast;u_\ast)=\Phi''(z_\ast;u_\ast)=0$, write
\[
\Phi(z;u)=\Phi(z_\ast;u_\ast)+\tfrac{\kappa}{3}\zeta^3-\eta\,\lambda\,\zeta+O(\zeta^4)+O(\eta\zeta^2),
\]
where $\zeta$ is a local edge chart and $\eta$ is the $u$–offset. Choosing $x=\lfloor \mathsf A u + s\,c\,\mathsf A^{1/3}\rfloor$ with $c=(\kappa/\lambda)^{1/3}$ reduces the one–variable integrals to Airy integrals with errors $O(\mathsf A^{-1/3})$ uniformly in fixed soft–edge windows, yielding Theorem~\ref{thm:airy-edge}.


\end{document}